 \def\icalpsub{0}
\begin{document}

\makeatletter
\newcommand{\HEADER}[1]{\ALC@it\underline{\textsc{#1}}\begin{ALC@g}}
\newcommand{\ENDHEADER}{\end{ALC@g}}
\makeatother

\date{}

\newtheorem{theorem}{Theorem}[section]
\newtheorem{lemma}{Lemma}[section]
\newtheorem{claim}[lemma]{Claim}
\newtheorem{definition}{Definition}[section]
\newtheorem{observation}[lemma]{Observation}
\numberwithin{equation}{section}

\def\FullBox{\hbox{\vrule width 8pt height 8pt depth 0pt}}
\newcommand{\qed}{\;\;\;\FullBox}
\newenvironment{proof}{\noindent{\bf Proof:~~}}{\(\qed\)}
\newenvironment{proofof}[1]{\noindent{\bf Proof of {#1}:~~}}{\(\qed\)}

\newcommand{\eqdef}{\stackrel{\rm def}{=}}
\newcommand{\bitset}{\{0,1\}}
\newcommand{\bitsetn}{\{0,1\}^n}
\renewcommand{\th}{{\rm th}}
\renewcommand{\Pr}{{\rm Pr}}
\newcommand{\poly}{{\rm poly}}
\newcommand{\ov}{\overline}
\newcommand{\eps}{\epsilon}
\newcommand{\wt}{{\rm wt}}
\newcommand{\tildeO}{{\tilde{O}}}
\newcommand{\E}{\mathbb{E}} 
\newcommand{\nm}{N}
\newcommand{\nmr}{N} 
\newcommand{\rd}{M}
\newcommand{\rdr}{R}
\newcommand{\nms}{\mathcal{N}} 
\newcommand{\hatnms}{\widehat{\mathcal{N}}}
\newcommand{\calP}{\mathcal{P}}
\newcommand{\calW}{\mathcal{W}}
\newcommand{\calT}{\mathcal{T}}

\newcommand{\Dist}{\Delta}
\newcommand{\wDist}{\widehat{\Dist}}

\newcommand{\calC}{\mathcal{C}}
\newcommand{\calH}{\mathcal{H}}
\newcommand{\calHsin}{\mathcal{H}_{sin}}
\newcommand{\calHmed}{\mathcal{H}_{med}}
\newcommand{\calHsml}{\mathcal{H}_{sml}}

\newcommand{\kd}{k_d}
\newcommand{\kc}{k_c}

\newcommand{\tilC}{\widetilde{C}}
\newcommand{\tilcalC}{\widetilde{\calC}}
\newcommand{\tilT}{\widetilde{T}}
\newcommand{\tilw}{\widetilde{w}}
\newcommand{\tilp}{\widetilde{p}}
\newcommand{\tilk}{\widetilde{k}}
\newcommand{\tiln}{\widetilde{n}}
\newcommand{\tilDist}{\widetilde{\Dist}}

\definecolor{forestgreen}{rgb}{0.13, 0.55, 0.13}

\newcommand{\dchange}[1]{{\color{forestgreen}  #1}}
\newcommand{\dcom}[1]{{\color{forestgreen}[{\bf D: #1}]}}
\newcommand{\ochange}[1]{{\color{blue}  #1}}
\newcommand{\ocom}[1]{{\color{blue}[{\bf O: #1}]}}
\newcommand\Set[2]{\{\,#1\mid#2\,\}}
\newcommand\SET[2]{\Set{#1}{\text{#2}}}

\title{Sample-based distance-approximation for subsequence-freeness}
\author{Omer Cohen Sidon\thanks{Tel Aviv University, omercs123@gmail.com} \and Dana Ron\thanks{Tel Aviv University, danaron@tau.ac.il. Supported by the Israel Science Foundation (grant number~1146/18) and the Kadar-family award.}}

\begin{titlepage}
\maketitle

\begin{abstract}
In this work, we study the problem of approximating the distance to subsequence-freeness in the sample-based distribution-free model.
For a given subsequence (word) $w = w_1 \dots w_k$, a sequence (text) $T = t_1 \dots t_n$
 is said to contain $w$ if there exist indices $1 \leq i_1 < \dots < i_k \leq n$ such that
 $t_{i_{j}} = w_j$ for every $1 \leq j \leq k$.
 Otherwise, $T$ is $w$-free. Ron and Rosin (ACM TOCT 2022) showed that the number of samples both necessary and sufficient for one-sided error testing of subsequence-freeness in the sample-based distribution-free model is $\Theta(k/\eps)$.

 Denoting by $\Dist(T,w,p)$ the distance of $T$ to $w$-freeness under a distribution $p :[n]\to [0,1]$, we are interested in obtaining an estimate $\wDist$, such that  $|\wDist - \Dist(T,w,p)| \leq \delta$
 with probability at least $2/3$,
 for a given distance parameter $\delta$.
 Our main result is an algorithm whose sample complexity is $\tilde{O}(k^2/\delta^2)$.
We first present an algorithm that works when the underlying distribution $p$ is uniform, and then show how it can be modified to work for any (unknown) distribution $p$.
We also show that a quadratic dependence on $1/\delta$ is necessary.
\end{abstract}
\thispagestyle{empty}
\end{titlepage}

\section{Introduction}
Distance approximation algorithms, as defined in~\cite{PRR}, are sublinear algorithms that approximate (with constant success probability) the distance of objects from satisfying a prespecified property $\calP$.
Distance approximation (and the closely related notion of tolerant testing) is an extension of property testing~\cite{RS,GGR}, where the goal is to distinguish between objects that satisfy a  property $\calP$ and those that are far from satisfying the property.\footnote{Tolerant testing algorithms are required to distinguish between objects that are close to satisfying a property and those that are far from satisfying it.} 
In this work we consider the property of subsequence-freeness.
For a given subsequence (word) $w_1 \dots w_k$
over some alphabet $\Sigma$, a sequence (text) $T = t_1 \dots t_n$ over $\Sigma$ 
is said to be $w$-free
if there do not exist indices $1 \leq j_1 < \dots < j_k \leq n$ such that $t_{j_i} = w_i$ for every $i \in [k]$.\footnote{For an integer $x$, we use $[x]$ to denote the set of integers $\{1,\dots,x\}$}

In most previous works on property testing and distance approximation, the algorithm is allowed query access to the object, and distance to satisfying the property in question, $\calP$, is defined as the minimum Hamming distance to an object that satisfies $\calP$, normalized by the size of the object.
In this work we consider the more challenging, and sometimes more suitable, sample-based model in which the algorithm  is only given a random sample from the object. In particular, when the object is a sequence $T = t_1 \dots t_n$, each element in the sample is a pair $(j,t_j)$.

We study both the case in which
the underlying distribution according to which each index $j$ is selected (independently) is the uniform distribution over $[n]$,
and the more general case in which the underlying distribution is some arbitrary unknown $p :[n] \to [0,1]$.
We refer to the former as the \textit{uniform sample-based model}, and to the latter  as the \textit{distribution-free sample-based model}. 
The distance (to satisfying the property) is determined by the underlying distribution. Namely, it is the minimum total weight according to $p$ of indices $j$ such that $t_j$ must be modified so as to make the sequence  $w$-free.
Hence, in the uniform sample-based model, the distance measure is simply the Hamming distance normalized by $n$. 

The related problem of testing the property of subsequence-freeness in the distribution-free sample-based model was studied by Ron and Rosin~\cite{RR-toct}. They showed that the
sample-complexity of one-sided error testing of subsequence-freeness in this model is $\Theta(k/\eps)$ (where $\epsilon$ is the given distance parameter).
A natural question is whether we can design a sublinear algorithm, with small sample complexity, that actually approximates the distance of a text $T$ to $w$-freeness.
It is worth noting that, in general, tolerant testing (and hence distance-approximation) for a property may be much harder than  testing the property~\cite{FF,BFLR}.


\vspace{-1ex}
\subsection{Our results}
In what follows, when we say that a sample is selected uniformly from $T$, we mean that for each sample point $(j,t_j)$, $j$ is selected uniformly and independently from $[n]$. 
This generalizes to the case in which the underlying distribution is an arbitrary distribution $p$.

We start by designing a distance-approximation algorithm in the uniform sample-based model.
Let $\Dist(T,w)$ denote the distance under the uniform distribution of $T$ from being $w$-free (which equals the fraction of symbols in $T$ that must be modified so as to obtain a $w$-free text),
and let $\delta \in (0,1)$ denote the error parameter given to the algorithm.

\newcommand\UniAlgThmStatement{
\sloppy
There exists a sample-based distance-approximation algorithm for subsequence-freeness under the uniform distribution, that
takes a sample of size
$\Theta\left(\frac{k^2}{\delta^2}\cdot \log\left(\frac{k}{\delta}\right)\right)$ and outputs an estimate $\wDist$
such that $|\wDist - \Dist(T,w)| \leq \delta$ with probability at least $2/3$.\footnote{As usual, we can increase the success probability to $1-\eta$, for any $\eta>0$ at a multiplicative cost of $O(\log(1/\eta))$ in the sample complexity.}
}
\begin{theorem} \label{thm:uni_alg} 
\UniAlgThmStatement
\end{theorem}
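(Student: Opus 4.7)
The plan rests on the identity $\Dist(T,w) = M(T,w)/n$, where $M(T,w)$ denotes the maximum number of pairwise-disjoint occurrences of $w$ in $T$---equivalently, the number produced by the greedy left-to-right matching procedure (find the leftmost $w_1$, then the leftmost $w_2$ after it, etc., resetting after each complete match). This identity (already implicit in the Ron--Rosin tester) follows from a standard hitting-set argument: any single symbol modification destroys at most one of the $M$ disjoint greedy occurrences, giving $\Dist(T,w)\cdot n \geq M$; conversely, changing the $k$-th matched character of each greedy occurrence produces a $w$-free text, giving $\Dist(T,w)\cdot n \leq M$. I would thus reduce the theorem to estimating the quantity $M(T,w)/n$ from a uniform sample.

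The natural estimator is the following: draw a uniform sample $S$ of size $m = \Theta((k^2/\delta^2)\log(k/\delta))$, sort the sampled positions by index to form a subsampled sequence $T_S$, run the greedy matching procedure on $T_S$ to obtain $\hat{M}=M(T_S,w)$, and output $\wDist = \hat{M}/m$. The intuition is that a uniform subsample inherits the per-position density of greedy occurrences of $T$, so $\hat{M}/m$ should be close to $M(T,w)/n$. The error $|\wDist - \Dist(T,w)|$ then splits into a deviation term and a bias term. For the deviation, McDiarmid's inequality applies because swapping any single sample changes $\hat{M}$ by only $O(1)$ (the greedy procedure is forward-scanning and Lipschitz-like in its input), yielding $|\hat{M}-\E[\hat{M}]|/m = O(\sqrt{\log m/m})$ with constant probability---well below $\delta$ for our choice of $m$.

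The main obstacle is controlling the bias $|\E[\hat{M}]/m - \Dist(T,w)|$; I would aim to bound it by $O(k/\sqrt{m})$. The argument is a coupling between greedy on $T$ and greedy on $T_S$: conceptually partition $T$ into its $M$ greedy-occurrence ``blocks'' plus residual gap positions, and show that within each block the expected number of matches produced by greedy on $T_S$ tracks its subsample-scaled contribution to $M(T,w)$, with boundary discrepancies of size $O(k)$ at block interfaces. The delicate point is the two-sided interaction between ``phantom'' matches (greedy on $T_S$ finding matches that stitch characters from different $T$-occurrences) and ``missed'' matches ($T$-occurrences whose characters are too sparsely sampled to form a match in $T_S$); a careful charging argument should show these contributions approximately cancel in expectation. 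Combining the bias $O(k/\sqrt{m})$ with the deviation $O(\sqrt{\log m /m})$ gives total error $\leq \delta$ at $m = \Theta((k^2/\delta^2)\log(k/\delta))$, proving the theorem.
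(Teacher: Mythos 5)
Your starting identity is already off: $\Dist(T,w)\cdot n$ equals the maximum number of \emph{role-disjoint} copies of $w$ in $T$ (Ron--Rosin, Thm.~3.4), not the maximum number of pairwise-disjoint occurrences. These differ when $w$ has repeated symbols --- e.g.\ for $w=aa$, $T=aaa$ the distance is $2/3$ so there must be $2$ role-disjoint copies (namely $(1,2)$ and $(2,3)$, which overlap), whereas at most one disjoint copy exists. Relatedly, the greedy you describe (``leftmost $w_1$, leftmost $w_2$ after it, \dots, reset after each complete match'') does not even find the maximum number of disjoint occurrences: for $w=ab$, $T=aabb$ it finds one copy $(1,3)$, while two disjoint copies $(1,3),(2,4)$ exist. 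The correct greedy, used in the paper's proof of Claim~\ref{clm:n[i][j]}, maintains for each role $i$ the set of positions already assigned to that role and allows positions to be reused across different roles; its optimality is a nontrivial lemma (relying on an ordered-copies argument from Ron--Rosin), not the one-line hitting-set argument you give.

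More importantly, the heart of your proposal --- the claim that $\bigl|\E[R(T_S,w)]/m - R(T,w)/n\bigr| = O(k/\sqrt{m})$ --- is exactly where all the difficulty lives, and you give no proof of it, only a sketch ending with ``a careful charging argument should show these contributions approximately cancel.'' That is not a proof, and the cancellation of ``phantom'' and ``missed'' matches is not obviously true; one must control worst-case bias over \emph{all} texts, not just typical ones, and the paper explicitly warns that texts with the same number of role-disjoint copies can behave very differently under sampling. The paper sidesteps this entirely: it never runs a combinatorial algorithm on the sample, but instead establishes a deterministic recursion $\rd_i^r(\nms)$ over a fixed matrix $\nms$ of prefix symbol-counts (Claim~\ref{clm:n[i][j]}), proves a Lipschitz-style stability bound $|\rd(\widehat{\nms})-\rd(\widetilde{\nms})|\le(2k-1)\gamma n$ under $L_\infty$ perturbation of $\nms$ (Claim~\ref{clm:estimator}), and shows a coarse evenly-spaced grid suffices (Claim~\ref{clm:nJ-n}); the sample is used only to estimate the $k\ell$ numbers $\nmr_i^{j_r}$ by simple empirical frequencies, where standard Chernoff bounds apply directly. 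Your subsample-and-greedy route is a genuinely different strategy, but as written it has a hole precisely at the step that makes the theorem true, and the auxiliary claims it rests on (the greedy/characterization) are stated incorrectly.
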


We then turn to extending this result to the distribution-free sample-based model. For a distribution $p:[n] \to [0,1]$, 
we use $\Dist(T,w,p)$ to denote the distance of $T$ from $w$-freeness under the distribution $p$ (i.e., the minimum weight, according to $p$, of the symbols in $T$ that must be modified so as to obtain a $w$-free text).
\newcommand\DistFreeThmStatement{
\sloppy
There exists a sample-based distribution-free distance-approximation algorithm for subsequence-freeness, that
takes a sample of size
$\Theta\left(\frac{k^2}{\delta^2}\cdot \log\left(\frac{k}{\delta}\right)\right)$ from $T$, distributed according to an unknown distribution $p$, and outputs an estimate $\wDist$
such that $|\wDist - \Dist(T,w,p)| \leq \delta$ with probability at least $\frac{2}{3}$.
}

\begin{theorem}\label{thm:main}
\DistFreeThmStatement
\end{theorem}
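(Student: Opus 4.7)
My plan is to lift Theorem~\ref{thm:uni_alg} to an arbitrary unknown distribution $p$ via a ``blow-up'' reduction that reinterprets the $p$-weighted distance of $T$ to $w$-freeness as the uniform distance of a virtual, replicated text $T^*$. Concretely, I would fix a rounding granularity $\eta$ (to be tuned; polynomially small in $1/n, 1/\delta, 1/k$ --- its exact value does not enter the sample complexity because $T^*$ is only used for analysis) and set $c_j = \lceil p(j)/\eta \rceil$. Define $T^* \in \Sigma^{N^*}$ with $N^* = \sum_{j} c_j$ by replacing each $t_j$ with $c_j$ consecutive copies of itself. Since $c_j \eta \in [p(j), p(j)+\eta)$, we have $N^*\eta \in [1, 1+n\eta)$, so the uniform marginal $c_j/N^*$ on $T^*$ (after collapsing copies back to $[n]$) deviates from $p(j)$ by at most $O(\eta)$ per index and by $O(n\eta)$ in total variation.

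The heart of the proof is two reduction claims. The first is $|\Dist(T^*,w) - \Dist(T,w,p)| \leq \delta/4$: every set $S \subseteq [n]$ whose modification makes $T$ $w$-free lifts to $S^* := \bigcup_{j\in S}\{\text{copies of } j\}$, which makes $T^*$ $w$-free and has uniform mass $\sum_{j\in S} c_j/N^*$; conversely, any hitting set in $T^*$ projects back to a hitting set in $T$, since two positions in $T^*$ corresponding to the same $j$ must be modified together (they share the symbol $t_j$). The per-index discrepancy $|c_j/N^* - p(j)|$ sums to at most $O(n\eta)$ over any subset $S$ simultaneously, regardless of $|S|$, which is what keeps this error controlled. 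The second claim is a sampling-reduction: an iid sample of size $m$ drawn from $p$ has total variation at most $O(m n\eta)$ from an iid uniform sample from $T^*$ (projected back to $[n]$). Choosing $\eta = \Theta(\delta/(mn))$ with $m = \tilde{O}(k^2/\delta^2)$ makes this TV arbitrarily small while keeping Claim~1 within $\delta/4$.

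With these claims, the algorithm is almost mechanical: draw $m = \Theta((k/\delta)^2 \log(k/\delta))$ samples from $p$; sort by original index (breaking ties in a fixed arbitrary order) and relabel the samples with distinct ``virtual'' positions so the resulting sample looks syntactically like a uniform sample from a text of length $N^*$; and feed this into the algorithm of Theorem~\ref{thm:uni_alg} with error parameter $\delta/2$. By the sampling reduction the algorithm's input is statistically indistinguishable from a genuine uniform sample from $T^*$, so Theorem~\ref{thm:uni_alg} returns $\wDist$ with $|\wDist - \Dist(T^*,w)| \leq \delta/2$ with probability at least $2/3 - o(1)$, and the first reduction claim then gives $|\wDist - \Dist(T,w,p)| \leq \delta$.

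The main obstacle I expect is Claim~1 on $|\Dist(T^*,w)-\Dist(T,w,p)|$: one must translate hitting sets between $T$ and $T^*$ without paying a multiplicative factor in $|S|$ or $n$, and one must verify that the lifted/projected set is actually still a hitting set (using that an occurrence of $w$ in $T^*$ projects to an occurrence in $T$, and conversely). A secondary but important subtlety is that the uniform algorithm of Theorem~\ref{thm:uni_alg} must be usable in a ``label-oblivious'' way --- depending only on the order and multiset of sampled symbols, not on the actual length of the text --- so that the relabeling to virtual positions in $[N^*]$ is a valid simulation; I would verify this directly from the form of the uniform algorithm before invoking it.
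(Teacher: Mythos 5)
Your overall blueprint --- quantize $p$ to obtain a ``blown-up'' text $T^*$ (a splitting of $T$), relate $\Dist(T,w,p)$ to $\Dist(T^*,w)$, and reduce to the uniform algorithm --- is indeed the starting point of the paper's argument, and your Claim~1 plays the same role as their quantization (the distribution $\dot p$) and splitting step. But there are two genuine gaps.

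First, the claim $|\Dist(T^*,w)-\Dist(T,w,p)|\le\delta/4$ is \emph{false} when $w$ has two consecutive equal symbols. Take $w=aa$, $T=a$, $p_1=1$: then $\Dist(T,w,p)=0$, yet $T^*=a^{c_1}$ has $\Dist(T^*,w)=1-1/c_1\approx 1$. Your justification that ``two positions in $T^*$ corresponding to the same $j$ must be modified together'' is simply incorrect (a minimum modification set may delete just one copy), and the correct statement --- that splitting preserves the distance --- holds only for $w\in\calW_c$, i.e.\ when no two consecutive symbols of $w$ are equal. The paper handles general $w$ by first passing to $w'=w_1 0 w_2 0\dots w_k 0$ and $T'=t_1 0 t_2 0\dots t_n 0$ with a fresh symbol $0$, invoking \cite[Clm.\ 4.4 and 4.6]{RR-toct}; you omit this step, and without it the reduction breaks.

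Second, the step ``relabel the samples with distinct virtual positions so the resulting sample looks syntactically like a uniform sample from a text of length $N^*$'' is the crux, and it does not work as stated. The algorithm behind Theorem~\ref{thm:uni_alg} is not label-oblivious: it fixes an evenly spaced grid $J=\{\gamma N^*,2\gamma N^*,\dots,N^*\}$ and, for each $i\in[k]$ and each grid point, counts the sampled points that land in the corresponding prefix and carry symbol $w_i$. That requires knowing $N^*$ and the block boundaries $\sum_{j'\le j}c_{j'}$, both determined by the unknown $p$. Sorting by index and renumbering $1,\dots,m$ discards precisely the spacing information the grid needs, and a TV bound between the two marginals over $[n]$ does not let you synthesize valid positions in $[N^*]$. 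This is exactly the obstacle the paper identifies explicitly: no direct uniform sampling access to $\tilT$, and no way to lay down an evenly spaced grid. Their fix is a two-stage sampling scheme --- one sample $S_1$ builds data-dependent intervals in $[n]$ of roughly equal $p$-weight, allowing single-index ``heavy'' intervals where some $p_j$ is large (Definition~\ref{def:interval_construction}); a second sample $S_2$ estimates symbol densities within those intervals; and a variant of the uniform combinatorial claim (Claim~\ref{clm:nJ-n-J-prime}) shows that the heavy intervals, being exactly the places where $\tilT$ is locally constant, contribute no additional error despite being far from evenly spaced. None of this machinery is present in your proposal, so the invocation of Theorem~\ref{thm:uni_alg} is not actually justified.
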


Finally, we address the question of how tight is our upper bound.
We show (using a fairly simple argument) that the quadratic dependence on $1/\delta$ is indeed necessary, even for the uniform distribution. To be precise, denoting by $\kd$  the number of distinct symbols in $w$, we give a lower bound of $\Omega(1/(\kd \delta^2))$ under the uniform distribution (that holds for every $w$ with $\kd$ distinct symbols, sufficiently large $n$ and sufficiently small $\delta$ -- for a precise statement, see Theorem~\ref{thm:lb}).
\vspace{-1ex}
\subsection{A high-level discussion of our algorithms}
\label{subsec:ideas}
Our starting point is a structural characterization of the distance to $w$-freeness under the uniform distribution, which is proved in~\cite[Sec. 3.1]{RR-toct}.\footnote{Indeed, Ron and Rosin note that: ``The characterization may be useful for proving further results regarding property testing of
subsequence-freeness, as well as (sublinear) distance approximation.''}
In order to state their characterization, we introduce the notion of copies of $w$ in $T$, and more specifically, role-disjoint copies.

A \textit{copy} of $w = w_1\dots w_k$ in $T = t_1\dots t_n$ is a sequence of indices $(j_1,\dots,j_k)$ such that $1\leq j_1 < \dots < j_k \leq n$ and
$t_{j_1}\dots t_{j_k} = w$. It will be convenient to represent a copy as an array $C$ of size $k$ where $C[i] = j_i$.
A set of copies $\{C_\ell\}$ is said to be \textit{role-disjoint} if for every $i \in [k]$, the indices in $\{C_\ell[i]\}$ are distinct (though it is possible that $C_\ell[i] = C_{\ell'}[i']$ for $i\neq i'$ (and $\ell \neq \ell'$)). In the special case where the symbols of $w$ are all different from each other, a  set of copies is role disjoint simply if it consists of disjoint copies. Ron and Rosin prove~\cite[Theorem 3.4 $+$ Claim 3.1]{RR-toct} that $\Dist(T,w)$ equals the maximum number of role-disjoint copies of $w$ in $T$, divided by $n$.

Note that the analysis of the sample complexity of
one-sided error sample-based testing of subsequence-freeness
translates to bounding the size of the sample that is sufficient and necessary for ensuring that the sample contains evidence that $T$ is not $w$-free when $\Delta(T,w)>\epsilon$. Here evidence is in the form of a copy of $w$ in the sample, so that the testing algorithm simply checks whether such a copy exists.
On the other hand, the question of distance-approximation has a more algorithmic flavor, as it is not determined by the problem what must be done by the algorithm given a sample.

Focusing first on the uniform case, Ron and Rosin used their characterization (more precisely, the direction by which if $\Dist(T,w) > \eps$, then $T$ contains more than $\eps n$ role-disjoint copies of $w$), to prove that a sample of size $\Theta(k/\eps)$ contains at least one copy of $w$ with probability at least $2/3$. 
In this work we go further by designing an algorithm that actually approximates the number of role-disjoint copies of $w$ in $T$ (and hence approximates $\Dist(T,w)$), given a uniformly selected sample from $T$. 
It is worth noting that the probability of obtaining a copy in the sample might be quite different for texts that have \emph{exactly the same} number of role-disjoint copies of $w$ (and hence the same distance to being $w$-free).\footnote{For example, consider $w=1\dots k$, $T_1 = (1\dots k)^{n/k}$ and  $T_2 = 1^{n/k}\dots k^{n/k}$.} 

In the next subsection we discuss the aforementioned algorithm (for the uniform case), and in the following one address the distribution-free case. 

\vspace{-1.5ex}
\subsubsection{The uniform case}\label{subsub:intro-unif}
Let $\rdr(T,w)$ denote the number of role-disjoint copies of $w$ in $T$.
In a nutshell, the algorithm works by computing estimates of the numbers of occurrences of  symbols of $w$ in a relatively small number of prefixes of $T$, and using them to derive an estimate of $\rdr(T,w)$.
The more precise description of the algorithm and its analysis are based on several combinatorial claims that we present and which we discuss shortly next.

Let $\rdr_i^j(T,w)$ denote the number of role-disjoint copies of the length-$i$ prefix of $w$, $w_1\dots w_i$, in the length-$j$ prefix of $T$, $t_1\dots t_j$, and let $\nmr_i^j(T,w)$ denote the number of occurrences of the symbol $w_i$ in $t_1\dots t_j$.
In our first combinatorial claim, we show that for every $i\in [k]$ and $j\in [n]$,  the value of $\rdr_i^j(T,w)$ can be expressed  in terms of
the values of $\nmr_i^{j'}(T,w)$ for $j' \in [j]$ (in particular, $\nmr_i^j(T,w)$) and the values of $\rdr_{i-1}^{j'-1}(T,w)$ for $j' \in [j]$.
In other words, we establish a recursive expression which implies that if we know what are $\rdr_{i-1}^{j'-1}(T,w)$ and $\nmr_i^{j'}(T,w)$ for every $j' \in [j]$, then we can compute $\rdr_i^j(T,w)$ (and as an end result, compute $\rdr(T,w) = \rdr_k^n(T,w)$).

In our second combinatorial claim we show that if we only want an approximation of $\rdr(T,w)$, then it suffices to define (also in a recursive manner) a measure that depends on the values of
$\nmr_i^{j}(T,w)$ for every $i\in [k]$ but only for a relatively small number of choices of $j$, which are evenly spaced. To be precise, each such $j$ belongs to the set
$J =  \left\{r\cdot \gamma n\right\}_{r=1}^{1/\gamma} $ for $\gamma = \Theta(\delta/k)$.
We prove that since each interval $[(r-1)\gamma n+1,r\gamma n]$ is of size $\gamma n$ for this choice of $\gamma$, we can ensure that the aforementioned measure (which uses only $j\in J$) approximates $\rdr(T,w)$ to within $O(\delta n)$.

We then prove that if we replace each
$\nmr_i^{j}(T,w)$ for these choices of $j$ (and for every $i\in [k]$) by a sufficiently good estimate, then we incur a bounded error in the approximation of $\rdr(T,w)$. Finally, such estimates are obtained using (uniform) sampling, with a sample of size $\tilde{O}(k^2/\delta^2)$.

\vspace{-1.5ex}
\subsubsection{The distribution-free case}
In~\cite[Sec. 4]{RR-toct} it is shown that, given a word $w$, a text $T$ and a distribution $p$, it is possible to define a word $\tilw$ and a text $\tilT$ for which the following holds. 
First, $\Dist(T,w,p)$ is closely related to $\Dist(\tilT,\tilw)$. Second, the probability of observing a copy of $w$ in a sample selected from $T$ according to $p$ is closely related 
to the probability of observing a copy of $\tilw$ in a sample selected uniformly from $\tilT$.
%

We use the first relation stated above (i.e., between $\Dist(T,w,p)$ and $\Dist(\tilT,\tilw)$). However,  since we are interested in distance-approximation rather than one-sided error testing, the second relation stated above (between the
probability of observing a copy of  $w$ in $T$ and that of observing a copy of $\tilw$ in $\tilT$) is not sufficient for our needs, and we need to take a different (once again, more algorithmic) path, as we explain shortly next.

Ideally, we would have liked to sample uniformly from $\tilT$, and then run the algorithm discussed in the previous subsection using this sample (and $\tilw$). However, we only have sampling access to $T$ according to the underlying  distribution $p$, and we do not have direct sampling access to uniform samples from $\tilT$. Furthermore, since $\tilT$ is defined based on (the unknown) $p$, it is not clear how to determine the aforementioned subset of (evenly spaced) indices $J$.


For the sake of clarity, we continue the current exposition while making two assumptions. The first is that the distribution $p$ is such that there exists a value $\beta$, such that $p_j/\beta$ is an integer for every $j\in [n]$ (the value of $\beta$ need not be known).  The second is that in $w$ there are no two consecutive symbols that are the same. Under these assumptions, $\tilT = t_1^{p_1/\beta} \dots t_n^{p_n/\beta}$, $\tilw = w$, and $\Dist(\tilT,\tilw) = \Dist(T,w,p)$ (where $t_j^x$ for an integer $x$ is the subsequence that consists of $x$ repetitions of $t_j$).

Our algorithm for the distribution-free case (working under the aforementioned assumptions), starts by taking a sample distributed according to $p$ and using it to select a (relatively small) subset of indices in $[n]$.
Denoting these indices by $b_0,b_1,\dots,b_\ell$, where $b_0=0 < b_1 < \dots < b_{\ell-1} < b_\ell=n$, we would have liked to ensure that the weight according to $p$ of each interval $[b_{u-1}+1,b_u]$ is approximately the same (as is the case when considering the intervals defined by the subset $J$ in the uniform case). To be precise, we would have liked each interval to have relatively small weight, while the total number of intervals is not too large. However, since it is possible that for some single indices $j\in [n]$, the probability
$p_j$ is large, we also allow  intervals with large weight, where these intervals consist of a single index (and there are few of them). 

The algorithm next takes an additional sample, to approximate, for each $i\in [k]$ and $u\in [\ell]$, the weight, according to $p$, of the occurrences of the symbol $w_i$ in the length-$b_u$ prefix of $T$. Observe that prefixes of $T$ correspond to prefixes of $\tilT$. Furthermore, the weight according to $p$ of occurrences of symbols in such prefixes, translates to numbers of occurrences of symbols in the corresponding prefixes in $\tilT$, normalized by the length of $\tilT$.
The algorithm then uses these approximations
to obtain an estimate of $\Dist(\tilT,\tilw)$.

We note that some pairs of consecutive prefixes in $\tilT$ might be far apart, as opposed to what we had in the algorithm for the uniform case described in Section~\ref{subsub:intro-unif}. However, this is always due to single-index intervals in $T$ (for $j$ such that $p_j$ is large). Each such interval corresponds to a consecutive subsequence in $\tilT$ with repetitions of the same symbol, and we show that no additional error is incurred because of such intervals.

\vspace{-1ex}
\subsection{Related results}\label{subsec:intro-related}
As we have previously mentioned, the work most closely related to ours is that of Ron and Rosin on distribution-free sample-based testing of subsequence-freeness~\cite{RR-toct}. For other related results on property testing (e.g., testing other properties of sequences, sample-based testing of other types of properties and distribution-free testing (possibly with queries)), see the introduction of~\cite{RR-toct}, and in particular Section~1.4. 
For another line of work, on  sublinear approximation of
the longest increasing subsequence, see~\cite{NV-LIS} and references within.
Here we shortly discuss related results on distance approximation / tolerant testing.

As already noted, distance approximation and tolerant testing were first formally defined in~\cite{PRR}, and were shown to be significantly harder for some properties in~\cite{FF,BFLR}. Almost all previous results are query-based, and where the distance measure is with respect to the uniform distribution. These include~\cite{GuRu,FN,ACCL,MR,FR,CGR13,hoppen2017estimating,blais2019tolerant,levi2018lower,fiat2021efficient,pallavoor2022approximating}.
Kopparty and Saraf~\cite{kopparty2009tolerant} present results for query-based tolerant testing of linearity under several families of distributions.
Berman, Raskhodnikova and Yaroslavtsev~\cite{berman2014lp} give tolerant (query based) $L_p$-testing algorithms for monotonicity.
Berman, Murzbulatov and Raskhodnikova~\cite{berman2016tolerant} give a sample-based distance-approximation algorithms for image properties that works under the uniform distribution.


Canonne et al.~\cite{canonne2019testing} study the property of $k$-monotonicity of Boolean functions over various posets. A Boolean function over a finite poset domain $D$ is $k$-monotone if it alternates between the values $0$ and $1$ at most $k$ times on any ascending chain in $D$. For the special case of $D = [n]$, the property of $k$-monotonicity is equivalent to being free of $w$ of length $k+2$ where $w_1 \in \{0,1\}$ and $w_i = 1 - w_{i-1}$ for every $i \in [2,k+2]$.
One of their results implies an upper bound of 
$\widetilde{O}\left(\frac{k}{\delta^3}\right)$
on the sample complexity 
of
distance-approximation for $k$-monotonicity of functions $f:[n] \rightarrow \{0,1\}$ under the uniform distribution (and hence for $w$-freeness when $w$ is a binary subsequence of a specific form). 
This result generalizes to $k$-monotonicity in higher dimensions (at an exponential cost in the dimension $d$).

Blum and Hu~\cite{BH} study distance-approximation for $k$-interval (Boolean) functions over the line in the distribution-free active  setting. In this setting, an algorithm gets an unlabeled sample
\ifnum\icalpsub=0
 from the domain of the function,
 \fi
 and asks queries on a subset of sample points. Focusing on the sample complexity, they show that for any underlying distribution $p$ on the line,
a sample of size $\widetilde{O}\left(\frac{k}{\delta^2}\right)$ is sufficient for
approximating  the distance to being a $k$-interval function up to an additive error of $\delta$.
This implies a sample-based distribution-free distance-approximation algorithm with the same sample complexity for the special case of being free of the same pair of $w$'s described in the previous paragraph, replacing $k+2$ by $k+1$.

Blais, Ferreira Pinto Jr. and Harms~\cite{BFH} introduce a variant of the VC-dimension and use it to prove lower and upper
bounds on the sample complexity of distribution-free testing 
for a variety of properties. In particular,
one of their results implies that 
the linear dependence on  $k$ in the result of~\cite{BH} is essentially optimal.

Finally we mention that our procedure in the distribution-free case for constructing ``almost-equal-weight'' intervals by sampling is somewhat reminiscent of techniques used in other contexts of testing  when dealing with non-uniform distributions~\cite{black2020domain,harms2022downsampling,braverman2022improved}.





\vspace{-1ex}
\subsection{Further research}
The main open problem left by this work is  closing the gap between the upper and lower bounds that we give, and in particular understanding the precise dependence on $k$, or possibly other parameters determined by $w$ (such as $\kd$).
One step in this direction can be found in the Master Thesis of the first author~\cite{CS23}.

\vspace{-1ex}
\subsection{Organization}
In Section~\ref{sec:uni} we present our algorithm for distance-approximation under the uniform distribution.
\ifnum\icalpsub=0
The algorithm for the distribution-free case appears in Section~\ref{sec:dist-free-pref}.
In Section~\ref{sec:lower-bound} we prove our lower bound. In the appendix we provide Chernoff bounds and a few proofs of technical claims.
\else
Some of the main details of the  distribution-free case appears in Section~\ref{sec:dist-free-pref}, and 
in Section~\ref{sec:lower-bound} we prove our lower bound.
All missing details and proofs can be found in the full version of this paper~\cite{CSR23}.
\fi


\section{Distance approximation under the uniform distribution} \label{sec:uni}

In this section, we  address the problem of distance approximation when the underlying distribution is the uniform distribution.
As mentioned in the introduction, Ron and Rosin showed~\cite[Thm. 3.4]{RR-toct}
that $\Dist(T,w)$
(the distance of $T$ from $w$-freeness under the uniform distribution),
equals the number of role-disjoint copies of $w$ in $T$, divided by $n = |T|$ (where role-disjoint copies are as defined in the introduction -- see Section~\ref{subsec:ideas}). We may use $T[j]$ to denote the $j^{\th}$ symbol of $T$ (so that $T[j]=t_j$).

\smallskip
We start by introducing the following notations.

\begin{definition}\label{def:nm-and-rd}
For every $i\in [k]$ and $j\in [n]$,
let $\nmr_i^{j}(T,w)$ denote the number of occurrences of the symbol $w_i$ in the length $j$ prefix of $T$, $T[1,j]=T[1]\dots T[j]$.\footnote{Indeed, if $w_i = w_{i'}$ for $i\neq i'$, then $\nmr_i^{j}(T,w) = \nmr_{i'}^{j}(T,w)$ for every $j$.} Let $\rdr_{i}^{j}(T,w)$ 
denote the number of role-disjoint copies of the subsequence $w_1\dots w_i$ in $T[1,j]$.
When $i=k$ and $j=n$, we use the shorthand $\rdr(T,w)$ for $\rdr_k^n(T,w)$ (the total number of role-disjoint copies of $w$ in $T$).
\end{definition}
Observe that $\rdr_{1}^{j}(T,w)$ equals $\nmr_{1}^{j}(T,w)$ for every $j\in [n]$.

\medskip
Since, as noted above, $\Dist(T,w) = \rdr(T,w)/n$,
we would like to estimate $\rdr(T,w)$. More precisely, given $\delta>0$ we would like to obtain an estimate $\widehat{\rdr}$, such that: $\left|\widehat{\rdr} - \rdr(T,w) \right| \leq \delta n$.
To this end, we first establish two combinatorial claims. The first claim shows that the value of each $\rdr_i^j(T,w)$ can be expressed in terms of the values of $\nmr_i^{j'}(T,w)$ for $j' \in [j]$ (in particular, $\nmr_i^j(T,w)$) and the values of $\rdr_{i-1}^{j'-1}(T,w)$ for $j' \in [j]$. In other words, if we know what are $\rdr_{i-1}^{j'-1}(T,w)$ and $\nmr_i^{j'}(T,w)$ for every $j' \in [j]$, then we can compute $\rdr_i^j(T,w)$.
\begin{claim} \label{clm:n[i][j]}
For every $i \in \{2,\dots,k\}$ and $j\in [n]$,
\[
\rdr_{i}^{j}(T,w) = \nmr_i^{j}(T,w) - \max_{ j' \in [j]}\left\{ \nmr_i^{j'}(T,w) - \rdr_{i-1}^{j'-1}(T,w) \right\}\;.
\]
\end{claim}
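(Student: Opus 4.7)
My plan is to establish matching upper and lower bounds on $\rdr_i^j(T,w)$.

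\textbf{Upper bound.} For each $j' \in [j]$, I would take a maximum role-disjoint set $\mathcal{C}$ of $i$-copies in $T[1,j]$ and partition its members according to whether $C[i] \leq j'$ or $C[i] > j'$. Copies of the first type, restricted to their first $i-1$ coordinates, yield role-disjoint $(i-1)$-copies in $T[1,j'-1]$ (since $C[i-1] < C[i] \leq j'$), so their number is at most $\rdr_{i-1}^{j'-1}(T,w)$. Copies of the second type use distinct $w_i$-positions in $(j', j]$, so their number is at most $\nmr_i^j(T,w) - \nmr_i^{j'}(T,w)$. Summing and minimizing over $j'$ yields $\rdr_i^j(T,w) \leq \nmr_i^j(T,w) - \max_{j' \in [j]}\{\nmr_i^{j'}(T,w) - \rdr_{i-1}^{j'-1}(T,w)\}$.

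\textbf{Lower bound.} I would prove the matching inequality by strong induction on $j$. When $j < i$, both sides equal $0$. When $T[j] \neq w_i$, both $\rdr_i^j$ and the formula's right-hand side are unchanged from $j-1$, by a direct comparison of the terms inside the $\max$. When $T[j] = w_i$ and $\rdr_i^{j-1} = \rdr_{i-1}^{j-1}$, the general upper bound $\rdr_i^j \leq \rdr_{i-1}^{j-1}$ combined with monotonicity forces $\rdr_i^j = \rdr_i^{j-1}$, matching the formula. The only delicate sub-case is $T[j] = w_i$ with $\rdr_i^{j-1} < \rdr_{i-1}^{j-1}$, in which the formula predicts $\rdr_i^j = \rdr_i^{j-1} + 1$; producing such an enlargement amounts to augmenting a maximum role-disjoint $i$-copy set $\mathcal{C}^{j-1}$ in $T[1,j-1]$ by a new $i$-copy ending at $j$, which requires an $(i-1)$-copy $D$ in $T[1,j-1]$ whose positions at every level $i' \leq i-1$ avoid those already used by $\mathcal{C}^{j-1}$.

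\textbf{The main obstacle.} Role-disjoint copy sets do not form a matroid in general (small examples exhibit maximal sets that are not maximum), so a one-step exchange on the $(i-1)$-prefixes of $\mathcal{C}^{j-1}$ may fail to produce $D$. I would resolve this by recasting $\rdr_i^j(T,w)$ as the value of a maximum flow in a layered unit-node-capacity network with a node $(i',p)$ for each level $i' \in [i]$ and position $p$ with $T[p] = w_{i'}$, source edges into level $1$, sink edges out of level $i$, and forward edges $(i',p) \to (i'+1,q)$ whenever $p < q$; integrality of the flow LP identifies this max flow with $\rdr_i^j(T,w)$. Now, if adding the new node $(i,j)$ together with edges $(i-1,q) \to (i,j)$ (for eligible $q < j$) and $(i,j) \to t$ did not increase the max flow, then the expanded network would admit a minimum node-cut $X$ of value $\rdr_i^{j-1}$. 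A quick argument rules out $(i,j) \in X$ (otherwise $X \setminus \{(i,j)\}$ would cut every old $s$-$t$ path at cost $\rdr_i^{j-1} - 1 < \rdr_i^{j-1}$, contradicting max-flow min-cut for the old network); hence $(i,j) \notin X$, and $X$ must cover every new path through $(i,j)$ at a node of level $\leq i-1$, forcing $X$ restricted to levels $\leq i-1$ to be a cover of every $(i-1)$-copy in $T[1,j-1]$. But then $|X| \geq \rdr_{i-1}^{j-1} > \rdr_i^{j-1}$, contradicting $|X| = \rdr_i^{j-1}$. Therefore the max flow strictly increases, and by integrality this yields a role-disjoint $i$-copy set of size $\rdr_i^{j-1}+1$ in $T[1,j]$, completing the induction.
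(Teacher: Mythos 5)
Your proof is correct, but it follows a genuinely different route from the paper's. The paper describes a left-to-right greedy procedure that constructs a maximum role-disjoint set (its optimality is deferred to~\cite[Claim 3.5]{RR-toct} and an inductive argument in the appendix), and then identifies the deficit $\nmr_i^j - \rdr_i^j$ with the number of $w_i$-occurrences that the greedy leaves unmatched, pinning that quantity down by examining it once at the maximizing prefix $j^*$ (lower bound on the deficit) and once at the largest unmatched $w_i$-position $j^{**}$ (upper bound on the deficit). You instead get the upper bound on $\rdr_i^j$ by a clean, self-contained partitioning of an arbitrary maximum role-disjoint set according to whether $C[i]\le j'$, and get the matching lower bound by strong induction on $j$, invoking a max-flow / min-cut formulation (role-disjoint copies as node-disjoint $s$--$t$ paths in a layered DAG, with Menger's theorem) to settle the single delicate augmentation case. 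Your observation that role-disjoint families are not a matroid is accurate (e.g.\ $w=ab$, $T=abab$: the copy $(1,4)$ is maximal but not maximum), and that is exactly what blocks a naive exchange argument; the paper sidesteps it by restricting attention to \emph{ordered} role-disjoint copies where the greedy is optimal, while you sidestep it via LP/flow duality. Both are valid; the paper's argument is shorter but leans on a cited structural lemma, while yours is self-contained at the cost of heavier machinery, and your upper-bound half is arguably more transparent than the paper's counting of $G_i^-$.
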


\sloppy
Clearly, $\rdr_{i}^{j}(T,w) \leq \nmr_i^{j}(T,w)$ (for every $i \in \{2,\dots,k\}$ and $j\in [n]$), since each role-disjoint copy of $w_1\dots w_i$ in $T[1,j]$ must end with a distinct occurrence of $w_i$ in $T[1,j]$.
Claim~\ref{clm:n[i][j]} states by exactly how much is $\rdr_{i}^{j}(T,w)$ smaller than $\nmr_i^{j}(T,w)$.
Roughly speaking, the expression $\max_{ j' \in [j]}\left\{ \nmr_i^{j'}(T,w) - \rdr_{i-1}^{j'-1}(T,w) \right\}$ accounts for the number of occurrences of $w_i$ in $T[1,j]$ that cannot be used in role-disjoint copies of $w_1\dots w_i$ in $T[1,j]$.

\smallskip
\begin{proof}
For simplicity (in terms of notation), we prove the claim for the case that $i=k$ and $j=n$. The proof for general $i\in \{2,\dots,k\}$ and $j\in [n]$ is essentially the same up to renaming of indices.
Since $T$ and $w$ are fixed throughout the proof, we shall use the shorthand $\nmr_i^j$ for $\nmr_i^j(T,w)$ and
$\rdr_i^j$ for $\rdr_i^j(T,w)$.

For the sake of the analysis,
we start by describing a simple greedy procedure, that
constructs  $\rdr= \rdr_{k}^{n}$ role-disjoint copies of $w$ in $T$. 
The correctness of this procedure follows from~\cite[Claim 3.5]{RR-toct} and a simple inductive argument 
\ifnum\icalpsub=1
(details are provided in the full version of the paper~\cite{CSR23}).
\else
(for details see Appendix~\ref{app:missing}).
\fi
Every copy $C_m$, for  $m\in [\rdr]$ is an array of size $k$ whose values are monotonically increasing, where for every $i\in [k]$ we have that
 $C_m[i] \in [n]$, and $T[C_m[i]] = w_{i}$. Furthermore, for every $i\in [k]$  the indices $C_1[i],\dots,C_{\rdr}[i]$ are distinct. 
 For every $m = 1,\dots,\rdr$ and $i = 1,\dots,k$, the procedure scans $T$, starting from $T[C_m[i-1]+1]$ (where we define $C_m[0]$ to be $0$) and ending at $T[n]$ until it finds the first index $j$ such that $T[j]=w_{i}$ and $j \notin \{C_1[i],\dots,C_{m-1}[i]\}$. It then sets $C_m[i]=j$. For $i>1$ we say in such a case that the procedure \emph{matches} $j$ to the partial copy $C_m[1],\dots,C_m[i-1]$.

%
%

For $i \in [k]$, define: $G_i = \{j\in [n] \,:\, T[j]=w_i\}$. Also define: $G_i^+ = \{j \in G_i \,:\,\exists m, C_m[i] = j\}$ and $G_i^- = \{j \in G_i \,:\,\nexists m, C_m[i]=j\}$ (recall that $C_m[i]$
is the $i$-th index in the $m$-th greedy copy).

It is easy to verify that $\left|G_i\right| = \nmr_{i}^{n}$, $\left|G_i^+\right| = \rdr_{i}^{n}$ and $\left|G_i\right| = \left|G_i^+\right| + \left|G_i^-\right|$. To complete the proof, we will show that $\left|G_i^-\right| = \max_{j\in [n]}\left\{ \nmr_i^{j} - \rdr_{i-1}^{j-1} \right\}$.

Let $j^*$ be an index $j$ that maximizes  $\left\{ \nmr_i^{j} - \rdr_{i-1}^{j-1} \right\}$. In the interval $[j^*]$ we have  $\nmr_i^{j^*}$ occurrences of $w_i$, and in the interval $[j^{*}-1]$ we only have $\rdr_{i-1}^{j^*-1}$ role-disjoint copies of $w_1\dots w_{i-1}$. This implies that in the interval $[j^{*}]$ there are at least $\nmr_i^{j^*} - \rdr_{i-1}^{j^*-1}$ occurrences of $w_i$ that cannot be the $i$-th index of any greedy copy, and so we have
\begin{equation}
 \left|G_i^-\right| \ge \nmr_i^{j^*} - \rdr_{i-1}^{j^*-1} = \max_{j\in [n]}\left\{ \nmr_i^{j} - \rdr_{i-1}^{j-1} \right\}\;.
\end{equation}

On the other hand, denote by $j^{**}$ the largest index in $G_i^-$. 
Since each index $j \in [j^{**}]$ such that $T[j]=w_i$ is either the $i$-th element of some copy or is not the $i$-th element of any copy, $\nmr_i^{j^{**}} = \rdr_{i}^{j^{**}-1} + \left|G_i^-\right|$. We claim that $\rdr_{i}^{j^{**}-1} = \rdr_{i-1}^{j^{**}-1}$. Otherwise, $\rdr_{i}^{j^{**}-1} < \rdr_{i-1}^{j^{**}-1}$, in which case the index $j^{**}$ would have to be the the $i$-th element of a greedy copy. Hence, 
\begin{equation}
 \left|G_i^-\right| = \nmr_i^{j^{**}} - \rdr_{i-1}^{j^{**}-1} \leq \max_{j\in [n]}\left\{ \nmr_i^{j} -  \rdr_{i-1}^{j-1} \right\}\;.
 \end{equation}
 In conclusion, 
 \begin{equation}
 \left|G_i^-\right| = \max_{j\in [n]}\left\{ \nmr_i^{j} - \rdr_{i-1}^{j-1} \right\}\;,
 \end{equation}
 and the claim follows.
\end{proof}

In order to state our next combinatorial claim, we first introduce one more definition, which will play a central role  in obtaining an estimate for $\rdr(T,w)$. 
\begin{definition} \label{def:nJ}
For $\ell \leq n$, let $\nms$ be a $k\times \ell$ matrix of non-negative numbers, where we shall use $\nms_i^r$ to denote $\nms[i][r]$.
For every $r\in [\ell]$ let $\rd_1^r(\nms) = \nms_1^r$, and for every $i \in \{2,\dots,k\}$, let
\[
\rd_i^r(\nms) \eqdef \nms_i^r - \max_{r'\leq r} \left\{\nms_i^{r'} - \rd_{i-1}^{r'}(\nms)\right\}\;.
\]
When $i=k$ and $r=\ell$ we use the shorthand $\rd(\nms)$ for $\rd_k^\ell(\nms)$.
\end{definition}

In our second combinatorial claim we show that
for an appropriate choice of a matrix $\nms$, whose entries are a subset of all values
in $\left\{\nmr_i^j(T,w)\right\}_{i\in [k]}^{j\in [n]}$,
we can bound the difference between
$\rd(\nms)$  and $\rdr(T,w)$.
We later use sampling to obtain an estimated version of $\nms$.
\begin{claim} \label{clm:nJ-n}
Let $J = \left\{j_0,j_1, \dots, j_{\ell}\right\}$ be a set of indices satisfying $j_0=0< j_1 < j_2 <\dots < j_\ell=n$. 
Let
$\nms = \nms(J,T,w)$ be the matrix whose entries are $\nms_i^r = \nmr_i^{j_r}(T,w)$,
for every $i \in [k]$ and $r \in [\ell]$. Then we have 
\[
    \left|\rd(\nms) - \rdr(T,w)\right| \leq (k-1)\cdot \max_{\tau \in [\ell]}\left\{j_{\tau} - j_{\tau-1}\right\}\;.
    \]
\end{claim}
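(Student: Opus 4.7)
The plan is to prove by induction on $i \in [k]$ the stronger statement that for every $r \in [\ell]$,
$$|\rd_i^r(\nms) - \rdr_i^{j_r}(T,w)| \leq (i-1)\gamma, \quad \text{where } \gamma \eqdef \max_{\tau \in [\ell]}\{j_\tau - j_{\tau-1}\},$$
and then apply it at $i=k$, $r=\ell$ (noting $j_\ell = n$). The base case $i=1$ holds with equality because $\rd_1^r(\nms) = \nms_1^r = \nmr_1^{j_r}(T,w) = \rdr_1^{j_r}(T,w)$.

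For the inductive step, I would use Claim~\ref{clm:n[i][j]} to write $\rdr_i^{j_r}(T,w) = \nmr_i^{j_r} - \max_{j' \in [j_r]} f(j')$ where $f(j') \eqdef \nmr_i^{j'} - \rdr_{i-1}^{j'-1}$, and note that by definition $\rd_i^r(\nms) = \nmr_i^{j_r} - \max_{r' \leq r} g(r')$ where $g(r') \eqdef \nmr_i^{j_{r'}} - \rd_{i-1}^{r'}$. Since the $\nmr_i^{j_r}$ terms cancel, the task reduces to bounding $\bigl|\max_{j' \in [j_r]} f(j') - \max_{r' \leq r} g(r')\bigr|$.

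For the upper direction, pick $j^* \in [j_r]$ maximizing $f$ and let $r^*$ be the unique index with $j_{r^*-1} < j^* \leq j_{r^*}$. Since $\max_{r'} g(r') \geq g(r^*)$, it suffices to bound $f(j^*) - g(r^*)$. Dropping the nonnegative term $\nmr_i^{j_{r^*}} - \nmr_i^{j^*}$ and then applying the inductive hypothesis yields
$$f(j^*) - g(r^*) \leq \rd_{i-1}^{r^*} - \rdr_{i-1}^{j^*-1} \leq (i-2)\gamma + \bigl(\rdr_{i-1}^{j_{r^*}} - \rdr_{i-1}^{j^*-1}\bigr) \leq (i-1)\gamma,$$
where the last inequality uses the elementary fact that role-disjoint counts on prefixes grow by at most one per additional position, together with $j_{r^*} - (j^*-1) \leq j_{r^*} - j_{r^*-1} \leq \gamma$. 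For the lower direction, pick $r^*$ maximizing $g$ and compare $g(r^*)$ to $f(j_{r^*})$; a similar (and slightly shorter) calculation using monotonicity of $\rdr_{i-1}^{\cdot}$ and the inductive hypothesis gives $\max_{r'} g(r') - \max_{j'} f(j') \leq (i-2)\gamma$. Combining both directions closes the induction.

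The main obstacle I anticipate is keeping careful track of two distinct sources of discretization loss in the upper direction: the off-by-one gap between $\rdr_{i-1}^{j'-1}$ (which appears in $f$) and $\rdr_{i-1}^{j_{r^*}}$ (which is what the inductive hypothesis controls through $\rd_{i-1}^{r^*}$), and the horizontal slack $j_{r^*} - j^* \leq \gamma - 1$. Together these contribute exactly one additional $\gamma$ of error per level of the recursion, yielding the telescoped bound $(k-1)\gamma$ claimed in the statement.
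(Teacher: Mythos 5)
Your proof is correct and follows essentially the same route as the paper's: the same induction on $i$ with the statement $|\rd_i^r(\nms) - \rdr_i^{j_r}(T,w)| \leq (i-1)\gamma$, reduction via Claim~\ref{clm:n[i][j]} to comparing the two max terms, and bounding each direction by pairing the maximizer with a nearby grid index and invoking the inductive hypothesis plus the at-most-one-copy-per-position growth bound. The only cosmetic difference is that you round the maximizer $j^*$ up to $j_{r^*}$ (and drop the nonnegative $\nmr$-difference) whereas the main-text proof rounds down to the largest $j_{b^*}\le j^*$; the accounting is equivalent.
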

\begin{proof}
Recall that $\rd(\nms) = \rd_{k}^{\ell}(\nms)$ and $\rdr(T,w) = \rdr_{k}^{j_{\ell}}(T,w)$.
We shall prove that for every $i \in [k]$ and for every $r\in [\ell]$,
$\left|\rd_{i}^{r}(\nms) - \rdr_{i}^{j_{r}}(T,w)\right| \leq (i-1)\cdot \max_{\tau \in [r]}\left\{j_{\tau} - j_{\tau-1}\right\}$.
We prove this by induction on $i$.


\noindent
For $i=1$ and every $r\in [\ell]$,
\begin{eqnarray}
\left|\rd_{1}^r(\nms) - \rdr_{1}^{j_r}(T,w)\right| = \left| \nmr_1^{j_r}(T,w) -  \nmr_1^{j_r}(T,w) \right|
 =  0  \leq (1-1) \cdot \max_{\tau \in [1]}\left\{j_{\tau} - j_{\tau-1}\right\} \;,
\end{eqnarray}
where the first equality follows from the setting of $\nms$ and the definitions of $\rd_1^r(\nms)$ and $\rdr_1^{j_r}(T,w)$.

\noindent
For the induction step, we assume the claim holds for $i-1 \geq 1$ (and every $r\in [\ell]$)
and prove it for $i$. We have,
\begin{eqnarray}
\lefteqn{    \rd_{i}^{r}(\nms) - \rdr_{i}^{j_r}(T,w) } \nonumber \\
 &=& \nmr_i^{j_r}(T,w) - \max_{b \in [r]}\left\{ \nmr_i^{j_b}(T,w) - \rd_{i-1}^{b}(\nms) \right\} 
    - \rdr_{i}^{j_r}(T,w) \label{eq:j-to-J-1} \\
  &=& \max_{j \in [j_r]}\left\{ \nmr_i^{j}(T,w) - \rdr_{i-1}^{j-1}(T,w) \right\}
     - \max_{b\in [r]}\left\{ \nmr_i^{j_b}(T,w) - \rd_{i-1}^{b}(\nms) \right\} \label{eq:j-to-J-2}\;,
\end{eqnarray}
where Equation~\eqref{eq:j-to-J-1} follows from the setting of $\nms$ and the definition of $\rd_{i}^{r}(\nms)$,
and Equation~\eqref{eq:j-to-J-2} is implied by Claim~\ref{clm:n[i][j]}. Denote by
$j^*$ an index $j \in [j_r]$ that maximizes the first max term and
let $b^*$ be
the largest index such that
$j_{b^*} \leq j^*$.
We have:
\begin{eqnarray}
    \lefteqn{\max_{ j\in[j_r]}\left\{ \nmr_i^{j}(T,w) - \rdr_{i-1}^{j-1}(T,w) \right\} - \max_{b \in [r]}\left\{ \nmr_i^{j_b}(T,w) - \rd_{i-1}^{b}(\nms) \right\}} \nonumber \\
    &\leq&  \nmr_i^{j^*}(T,w) - \rdr_{i-1}^{j^*-1}(T,w) - \nmr_i^{j_{b^*}}(T,w) + \rd_{i-1}^{b^*}(\nms) \nonumber\\
    & =&  \nmr_i^{j^*}(T,w) + \rdr_{i-1}^{j_{b^*}}(T,w) - \rdr_{i-1}^{j_{b^*}}(T,w) - \rdr_{i-1}^{j^*-1}(T,w) - \nmr_i^{j_{b^*}}(T.w) 
      + \rd_{i-1}^{b^*}(\nms) \nonumber \\
    &\leq& \left(\rd_{i-1}^{b^*}(\nms) -  \rdr_{i-1}^{j_{b^*}}(T,w)\right) +  \left(\nmr_i^{j^*}(T,w) - \nmr_i^{j_{b^*}}(T,w) \right) 
    +\left(\rdr_{i-1}^{j_{b^*}}(T,w) - \rdr_{i-1}^{j^*-1}(T,w)\right) 
    \nonumber\\
    &\leq& (i-2) \max_{\tau \in [r]}\left\{j_{\tau} - j_{\tau-1}\right\} + \left(j^{*} - j^{b^{*}}\right) +\left(j^{b^{*}} - (j^{*}-1)\right)
    \label{eq:induction}\\
    &=& (i-2) \max_{\tau \in [r]}\left\{j_{\tau} - j_{\tau-1}\right\} + 1
    \nonumber\\
    &\leq& (i-2) \max_{\tau \in [r]}\left\{j_{\tau} - j_{\tau-1}\right\} + \max_{\tau \in [r]}\left\{j_{\tau} - j_{\tau-1}\right\} \nonumber \\
    &=& (i-1)\max_{\tau \in [r]}\left\{j_{\tau} - j_{\tau-1}\right\}\label{eq:max-j-max-b}\;,
\end{eqnarray}
where in Equation~\eqref{eq:induction}
we used the induction hypothesis.
By combining Equations~\eqref{eq:j-to-J-2} and~\eqref{eq:max-j-max-b}, we get that
\begin{equation}
\rd_{i}^{r}(\nms) - \rdr_{i}^{j_r}(T,w) \leq (i-1)\max_{\tau \in [r]}\left\{j_{\tau} - j_{\tau-1}\right\}\;. \label{eq:nms-T-w}
\end{equation}

Similarly to Equation~\eqref{eq:j-to-J-2},
\begin{equation}
    \rdr_{i}^{j_r}(T,w) - \rd_{i}^{r}(\nms)
     = \max_{b\in [r]}\left\{ \nmr_i^{j_b}(T,w) - \rd_{i-1}^{b}(\nms) \right\} 
     - \max_{j\in [j_r]}\left\{ \nmr_i^{j}(T,w) - \rdr_{i-1}^{j-1}(T,w) \right\}\;.
     \label{eq:J-2-to-j}
\end{equation}

Let $b^{**}$ be 
the index $b\in [r]$ that maximizes the first max term.
We have
\begin{eqnarray}
    \lefteqn{\max_{b \in [r]}\left\{ \nmr_i^{j_b}(T,w) - \rd_{i-1}^{b}(\nms) \right\} - \max_{ j \in [j_r]}\left\{ \nmr_i^{j}(T,w) - \rdr_{i-1}^{j-1}(T,w) \right\}} \nonumber \\
    & \leq& \nmr_i^{j_{b^{**}}}(T,w) - \rd_{i-1}^{{b^{**}}}(\nms) - \nmr_i^{j_{b^{**}}}(T,w) + \rdr_{i-1}^{j_{b^{**}}-1}(T,w) \nonumber\\
    & \leq&  \rdr_{i-1}^{j_{b^{**}}}(T,w) - \rd_{i-1}^{b^{**}}(\nms) 
    \;\leq\;  \left|\rdr_{i-1}^{j_{b^{**}}}(T,w) - \rd_{i-1}^{b^{**}}(\nms)\right| \nonumber\\
    & \leq& (i-2) \max_{\tau \in [r]}\left\{j_{\tau} - j_{\tau-1}\right\}
    \;\leq\; (i-1) \max_{\tau \in [r]}\left\{j_{\tau} - j_{\tau-1}\right\} \;.
     \label{eq:max-b-max-j}
\end{eqnarray}
Hence (combining Equations~\eqref{eq:J-2-to-j} and~\eqref{eq:max-b-max-j}),\footnote{It actually holds that $\rd_{i}^{r}(\nms) \geq \rdr_{i}^{j_r}(T,w)$, so that $\rdr_{i}^{j_r}(T,w) - \rd_{i}^{r}(\nms)\leq 0$, but for the sake of simplicity of the inductive argument, we prove the same upper bound on $\rdr_{i}^{j_r}(T,w) - \rd_{i}^{r}(\nms)$ as on $\rd_{i}^{r}(\nms) - \rdr_{i}^{j_r}(T,w)$.}
\begin{equation}
     \rdr_{i}^{j_r}(T,w) - \rd_{i}^{r}(\nms) \leq (i-1) \max_{\tau \in [r]}\left\{j_{\tau} - j_{\tau-1}\right\} \label{eq:T-w-nms} \;.
\end{equation}
Together, Equations~\eqref{eq:nms-T-w} and~\eqref{eq:T-w-nms} give us that
\begin{equation}
    \left|\rd_{i}^{r}(\nms) - \rdr_{i}^{j_r}(T,w)\right| \leq (i-1) \max_{\tau \in [r]}\left\{j_{\tau} - j_{\tau-1}\right\}\;,
\end{equation}
and the proof is completed.
\end{proof}

In our next claim we  bound the difference between $\rd(\widehat{\nms)} - \rd(\widetilde{\nms})$ for any two matrices (with dimensions $k\times \ell$), given a bound on the $L_\infty$ distance between them. We later apply this claim with $\widetilde{\nms} = \nms$ for $\nms$ as defined in Claim~\ref{clm:nJ-n}, and $\widehat{\nms}$ being a matrix that contains estimates
$\widehat{\nm}_{i}^{r}$ of $\nmr_{i}^{j_r}(T,w)$  (respectively).
We discuss how to obtain $\widehat{\nms}$ in Claim~\ref{clm:sample}.

\begin{claim} \label{clm:estimator}
Let $\gamma\in (0,1)$, and let $\widehat{\nms}$ and $\widetilde{\nms}$ be two $k\times \ell$ matrices.
If for every $i \in [t]$ and  $r \in [\ell]$,
\ifnum\icalpsub=0
\begin{equation}\nonumber
    \left|\widehat{\nms}_{i}^{r} - \widetilde{\nms}_{i}^{r} \right| \leq \gamma n\;,
\end{equation}
\else
$\left|\widehat{\nms}_{i}^{r} - \widetilde{\nms}_{i}^{r} \right| \leq \gamma n$,
\fi
then
\ifnum\icalpsub=0
\begin{equation}\nonumber
   \left|\rd(\widehat{\nms}) - \rd(\widetilde{\nms})\right| \leq (2k-1)\gamma n\;.
\end{equation}
\else
$\left|\rd(\widehat{\nms}) - \rd(\widetilde{\nms})\right| \leq (2k-1)\gamma n$.
\fi
\end{claim}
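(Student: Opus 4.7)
The plan is to prove the stronger statement by induction on $i$: for every $i \in [k]$ and every $r \in [\ell]$,
\[
\left|\rd_i^r(\widehat{\nms}) - \rd_i^r(\widetilde{\nms})\right| \leq (2i-1)\gamma n\;,
\]
and then specialize to $i=k$, $r=\ell$ to obtain the desired bound. The base case $i=1$ is immediate from Definition~\ref{def:nJ}, since $\rd_1^r(\nms) = \nms_1^r$, so the difference is bounded directly by $\gamma n = (2\cdot 1 - 1)\gamma n$.

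For the inductive step, I would assume that for every $r' \in [\ell]$ we have $|\rd_{i-1}^{r'}(\widehat{\nms}) - \rd_{i-1}^{r'}(\widetilde{\nms})| \leq (2i-3)\gamma n$, and then expand
\[
\rd_i^r(\widehat{\nms}) - \rd_i^r(\widetilde{\nms}) = \left(\widehat{\nms}_i^r - \widetilde{\nms}_i^r\right) - \left(\max_{r'\leq r}\{\widehat{\nms}_i^{r'} - \rd_{i-1}^{r'}(\widehat{\nms})\} - \max_{r'\leq r}\{\widetilde{\nms}_i^{r'} - \rd_{i-1}^{r'}(\widetilde{\nms})\}\right)\;.
\]
The first parenthesized term contributes at most $\gamma n$ in absolute value by hypothesis. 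For the difference of the two max terms, I would apply the standard trick: if $r^*$ attains the first max, then the difference is at most $(\widehat{\nms}_i^{r^*} - \rd_{i-1}^{r^*}(\widehat{\nms})) - (\widetilde{\nms}_i^{r^*} - \rd_{i-1}^{r^*}(\widetilde{\nms}))$, which by the triangle inequality and the inductive hypothesis is at most $\gamma n + (2i-3)\gamma n = (2i-2)\gamma n$. A symmetric argument (taking the argmax of the second max) bounds the reverse direction, so the max-difference is at most $(2i-2)\gamma n$ in absolute value. Combining yields $|\rd_i^r(\widehat{\nms}) - \rd_i^r(\widetilde{\nms})| \leq \gamma n + (2i-2)\gamma n = (2i-1)\gamma n$, completing the induction.

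There is no genuine obstacle here; the proof is essentially a clean induction that leverages the Lipschitz-like behavior of the $\max$ operator. The only point requiring mild care is to avoid naively squaring or compounding the error through the recursive definition. The key observation is that each layer of the recursion introduces only an additive $2\gamma n$ penalty: one $\gamma n$ from the outer $\nms_i^r$ term and one $\gamma n$ from the inner $\nms_i^{r'}$ inside the max, while the $\rd_{i-1}^{r'}$ contribution is controlled by the inductive hypothesis at a single evaluation point (the argmax), not uniformly summed. This linearity in $i$ is what yields the tight $(2k-1)\gamma n$ bound at the top level.
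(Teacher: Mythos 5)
Your proof is correct and follows essentially the same argument as the paper: an induction on the prefix-length index showing $|\rd_i^r(\widehat{\nms}) - \rd_i^r(\widetilde{\nms})| \leq (2i-1)\gamma n$, with the max-difference bounded by evaluating both maxima at the argmax of the larger one and invoking the inductive hypothesis at that single point. The paper phrases the argmax step via a WLOG assumption on which max is larger, which is the same device as your ``symmetric argument'' remark.
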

\begin{proof}
We shall prove that for every $t \in [k]$ and for every $r\in [\ell]$,
$\left|\rd_t^r(\widehat{\nms)} - \rd_t^r(\widetilde{\nms})\right| \leq (2t-1)\gamma n$.
We prove this by induction on $t$.\\
For $t = 1$ and every $r\in [\ell]$, we have
\begin{equation}
\left| \rd_1^r(\widehat{\nms}) - \rd_1^r(\widetilde{\nms}) \right| \;=\;
\left|\widehat{\nms}_{1}^{r} - 
                    \widetilde{\nms}_1^r \right|
     \;\leq\; \gamma n \;.
\end{equation}
Now assume the claim is true for $t-1 \geq 1$ and for every $r \in [\ell]$, and we prove it for $t$.
For any $r \in [\ell]$, by the definition of $\rd_t^r(\cdot)$,
\begin{eqnarray}
    \lefteqn{\left|\rd_{t}^{r}(\widehat{\nms}) - \rd_{t}^{r}(\widetilde{\nms})\right| } \nonumber \\
   &=&\left| \widehat{\nms}_t^{r} - \max_{r'' \in [r]}\left\{ \widehat{\nms}_t^{r''} - \rd_{t-1}^{r''}(\widehat{\nms})\right\} -
    \widetilde{\nms}_t^{r} + \max_{r' \in [r]}\left\{ \widetilde{\nms}_t^{r'} - \rd_{t-1}^{r'}(\widetilde{\nms})\right\}
    \right| \nonumber \\
    &\leq& \gamma n +
    \left| \max_{r' \in [r]}\left\{ \widetilde{\nms}_t^{r'} - \rd_{t-1}^{r'}(\widetilde{\nms})\right\} -
    \max_{r'' \in [r]}\left\{ \widehat{\nms}_t^{r''} - \rd_{t-1}^{r''}(\widehat{\nms})\right\}
    \right| \;,\label{eq:hat-tilde1}
    \end{eqnarray}
where in the last inequality we used the premise of the claim.
Assume that the first max term  in Equation~\eqref{eq:hat-tilde1} is at least as large as the second (the case that the second term is larger than the first is dealt with analogously), and
    let $r^*$ be the index that maximizes the first max term.
    Then,
    \begin{eqnarray}
        \lefteqn{\left| \max_{r' \in [r]}\left\{ \widetilde{\nms}_t^{r'} - \rd_{t-1}^{r'}(\widetilde{\nms})\right\} - \max_{r'' \in [r]}\left\{ \widehat{\nms}_t^{r''} - \rd_{t-1}^{r''}(\widehat{\nms})\right\}
        \right|} \nonumber \\
        &\leq&  \left|\left(\widetilde{\nms}_t^{r^*} - \widehat{\nms}_t^{r^*}\right) + \left(\rd_{t-1}^{r^*}(\widehat{\nms}) - \rd_{t-1}^{r^*}(\widetilde{\nms})\right) \right| \nonumber\\
        &\leq &  \left|\widetilde{\nms}_t^{r^*} - \widehat{\nms}_t^{r^*}\right| + \left|\rd_{t-1}^{r^*}(\widehat{\nms}) - \rd_{t-1}^{r^*}(\widetilde{\nms})\right| \nonumber\\
    &\leq &  \gamma n + (2t-3) \gamma n = (2t-2) \gamma n \;,
    \label{eq:hat-tilde2}
\end{eqnarray}
where we used 
the premise of the claim once again, and the induction hypothesis.
The claim follows by combining Equation~\eqref{eq:hat-tilde1} with
Equation~\eqref{eq:hat-tilde2}.
\end{proof}

The next 
claim states that we can obtain  good estimates for all values in $\left\{\nmr_{i}^{j_r}(T,w)\right\}_{i\in [k]}^{r\in [\ell]}$
(with a sufficiently large sample).
Its (standard) proof is deferred to 
\ifnum\icalpsub=0 
Appendix~\ref{app:missing}.
\else
the full version of this paper~\cite{CSR23}.
\fi
\begin{claim}\label{clm:sample}
For any $\gamma \in (0,1)$ and $J = \{j_1,\dots,j_\ell\}$ (such that $1 \leq j_1 < \dots < j_\ell=n$),
by taking a sample of size $\Theta\left(\frac{\log(k\cdot \ell)}{\gamma^2}\cdot \right)$ from $T$, we can obtain with probability at least $2/3$ estimates
$\left\{\widehat{\nms}_i^r\right\}_{i\in [k]}^{r\in [\ell]}$, such that
\begin{equation} \label{eq:hatn_n}
    \left|\widehat{\nms}_{i}^{r} - \nmr_{i}^{j_r}(T,w) \right| \leq \gamma n\;,
\end{equation}
for every $i\in[k]$ and $r\in [\ell]$.
\end{claim}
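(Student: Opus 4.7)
The plan is to use a single uniform sample from $T$ to simultaneously estimate all $k\cdot\ell$ quantities $\nmr_i^{j_r}(T,w)$ via empirical frequencies, and then to apply a Chernoff/Hoeffding bound together with a union bound. This is a textbook ``sample-and-count'' argument; the only thing to check is that the per-pair failure probability we need can be absorbed by a sample size of order $\log(k\ell)/\gamma^2$.

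Concretely, I would draw $S = \{(s_1,t_{s_1}),\dots,(s_m,t_{s_m})\}$ with $m = \Theta(\log(k\ell)/\gamma^2)$, each $s_q$ uniform and independent in $[n]$. For each pair $(i,r)\in[k]\times[\ell]$, define the indicator
\[
X^{(i,r)}_q \;=\; \mathbbm{1}\!\left[\, s_q \le j_r \;\wedge\; t_{s_q} = w_i \,\right],
\]
whose expectation is $\nmr_i^{j_r}(T,w)/n$ since $\nmr_i^{j_r}(T,w)$ counts precisely the indices $s\in[j_r]$ with $t_s=w_i$. The estimator is then
\[
\widehat{\nms}_i^r \;\eqdef\; \frac{n}{m}\sum_{q=1}^{m} X^{(i,r)}_q.
\]
Since the $X^{(i,r)}_q$ are i.i.d.\ $\{0,1\}$-valued, Hoeffding's inequality (or the additive Chernoff bound used elsewhere in the paper) gives
\[
\Pr\!\left[\, \bigl|\widehat{\nms}_i^r - \nmr_i^{j_r}(T,w)\bigr| > \gamma n \,\right]
\;\le\; 2\exp\!\bigl(-2m\gamma^2\bigr).
\]

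Choosing the hidden constant so that $2\exp(-2m\gamma^2) \le \tfrac{1}{3 k\ell}$, which requires $m = \Theta(\log(k\ell)/\gamma^2)$, and taking a union bound over the $k\ell$ pairs $(i,r)$, the desired event
\[
\bigl|\widehat{\nms}_i^r - \nmr_i^{j_r}(T,w)\bigr| \;\le\; \gamma n
\qquad \text{for every }i\in[k],\;r\in[\ell]
\]
holds with probability at least $2/3$, which is exactly the statement of the claim. The whole estimation uses a single sample $S$ of $T$, so the sample complexity is the claimed $\Theta(\log(k\ell)/\gamma^2)$.

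There is no real obstacle here; the only subtle point is the union bound, which forces the per-pair failure probability to scale like $1/(k\ell)$ and thereby produces the logarithmic factor $\log(k\ell)$ in the sample size. Since the same sample is reused across all pairs $(i,r)$, we do not pay a multiplicative $k\ell$ factor in the number of samples, only an additive $\log(k\ell)$ inside the exponent of the Chernoff bound.
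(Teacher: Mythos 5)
Your proof is correct and matches the paper's own argument essentially line for line: same empirical estimator $\widehat{\nms}_i^r = \frac{n}{m}\sum_q X_q^{(i,r)}$ based on a single reused uniform sample, same additive Chernoff/Hoeffding bound giving a per-pair failure probability of $2\exp(-2m\gamma^2)$, same choice of constant to force this below $\frac{1}{3k\ell}$, and same union bound over all $k\ell$ pairs.
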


We can now restate and prove our main theorem for distance approximation under the uniform distribution.

\ifnum\icalpsub=1
\smallskip\noindent
 \textcolor{lipicsGray}{$\blacktriangleright$}
\nobreakspace
{\sffamily\bfseries{Theorem~\ref{thm:uni_alg}}}.
\else
\medskip\noindent
\textbf{Theorem~\ref{thm:uni_alg}}~
\fi
\emph{\UniAlgThmStatement}

\medskip\noindent
While our focus is on the sample complexity of the algorithm, we note that its running time is linear in the size of the sample.

\smallskip
\begin{proof}
The algorithm sets $\gamma = \delta/(3k)$ and $J = \left\{\gamma n, 2\gamma n, \dots, n\right\}$.
It first applies Claim~\ref{clm:sample} with the above setting of $\gamma$ to obtain the estimates $\left\{\widehat{\nms}_i^r\right\}$ for every $i\in [k]$ and $r\in [\ell]$,
which with probability at least $2/3$ are as stated in Equation~\eqref{eq:hatn_n}.
If we take $\widetilde{\nms} = \nms$ for $\nms$ as defined in Claim~\ref{clm:nJ-n}, then the premise of
Claim~\ref{clm:estimator} holds. We can hence apply Claim~\ref{clm:estimator}, and combining with Claim~\ref{clm:nJ-n} and the definition of $J$, we get that with probability at least $2/3$, for the  matrix $\widehat{\nms}$,
\begin{equation}
\left|\rd(\widehat{\nms}) - \rdr(T,w) \right|
\leq (2k-1)\gamma n + (k-1) \gamma n = (3k-2) \gamma n \leq \delta n\;.
\end{equation}
The algorithm hence computes 
$\rd(\widehat{\nms}) = \rd_k^\ell(\widehat{\nms})$
in an iterative manner, based on Definition~\ref{def:nJ},
and outputs $\wDist = \rd(\widehat{\nms})/n$.
Since $\rdr(T,w)/n = \Dist(T,w)$, the theorem follows.
\end{proof}


 \section{Distribution-free distance approximation}
 \label{sec:dist-free-pref}

As noted in the introduction, our algorithm for approximating the distance from subsequence-freeness under a general distribution $p$ works by reducing the problem to approximating the distance from subsequence-freeness under the uniform distribution.
However, we won't be able to use the algorithm presented in Section~\ref{sec:uni} as is. There are two main obstacles, explained shortly next.
In the reduction, given a word $w$ and access to samples from a text $T$, distributed according to $p$, we define a word $\widetilde{w}$ 
and a text $\widetilde{T}$ such that if we can obtain a good approximation of $\Dist(\widetilde{T},\widetilde{w})$ then we get a good approximation of $\Dist(T,w,p)$. (Recall that $\Dist(T,w,p)$ denotes the distance of $T$ from being $w$-free under the distribution $p$.)
 However, first, we don't actually have direct access to uniformly distributed samples from $\widetilde{T}$, and second, we
 cannot work with a set $J$ of indices that induce equally sized intervals (of a bounded size), as we did in  Section~\ref{sec:uni}.

 We address these challenges (as well as precisely define $\tilT$ and $\tilw$) in several stages. We start, in Sections~\ref{subsec:intervals} and~\ref{subsec:est-symb-dens}, by using sampling according to $p$, in order to construct intervals in $T$ that have certain properties (with sufficiently high probability).
 The role of these intervals will become clear
 \ifnum\icalpsub=0
 in the following other subsections.
 \else
 as we proceed.
 Due to space constraints, several proofs are deferred to the 
 full version of this paper~\cite{CSR23}.
 \fi

\subsection{Interval construction and classification}\label{subsec:intervals}
We begin this subsection by defining intervals in $[n]$ that are determined by $p$ (which is unknown to the algorithm).
We then construct intervals by sampling from $p$, where the latter intervals are in a sense approximations of the former (this will be formalized subsequently). Each constructed interval will be classified as either ``heavy'' or ``light'', depending on its (approximated) weight according to $p$.  Ideally, we would have liked all intervals to be light, but not too light, so that their number won't be too large (as was the case when we worked under the uniform distribution and simply defined intervals of equal size). However, for a general distribution $p$ we might have single indices $j \in [n]$ for which $p_j$ is large, and hence we also need to allow heavy intervals (each consisting of a single index).
We shall make use of the following two definitions.

\begin{definition}\label{def:wt-p}
    For any two integers $j_1 \leq j_2$, let
    $[j_1,j_2]$ denote the interval $\{j_1,\dots,j_2\}$. For every $j_1, j_2 \in [n]$, define
\ifnum\icalpsub=1
$\wt_p([j_1,j_2]) \eqdef \sum_{j=j_1}^{j_2}p_j$
\else
    \begin{equation} \nonumber 
        \wt_p([j_1,j_2]) \eqdef \sum_{j=j_1}^{j_2}p_j
    \end{equation}
\fi
    to be the weight of the interval $[j_1,j_2]$ according to $p$.
    We shall use the shorthand $\wt_p(j)$ for $\wt_p([j,j])$.
\end{definition}

\begin{definition}\label{def:wt-S}
    Let $S$ be a multiset of size $s$, with elements from $[n]$. For every $j \in [n]$, let $\nm_S(j)$ be the number of elements in $S$ that equal $j$. For every $j_1, j_2 \in [n]$, define
    \ifnum\icalpsub=1
      $\wt_{S}([j_1,j_2]) \eqdef \frac{1}{s}\sum_{j=j_1}^{j_2}\nm_S(j)$
    \else
    \begin{align} \nonumber 
        \wt_{S}([j_1,j_2]) \eqdef \frac{1}{s}\sum_{j=j_1}^{j_2}\nm_S(j)
    \end{align}
    \fi
    to be the estimated weight of the interval $[j_1,j_2]$ according to  $S$.
    We shall use the shorthand $\wt_S(j)$ for $\wt_S([j,j])$.
\end{definition}

In the next definition, and the remainder of this section, we shall use
\begin{equation} \label{eq:z_eq}
z \;= \; c_{z}\frac{k}{\delta}\;,
\end{equation}
where let $c_z = 100$.

We next define the aforementioned set of intervals, based on $p$.
Roughly speaking, we try to make the intervals as equally weighted as possible, keeping in mind that some indices might have a large weight, so we  assign each to an interval of its own.
\begin{definition} \label{def:real_interval_construction}
Define a sequence of indices in the following iterative manner.
Let $h_0=0$ and for $\ell = 1,2,\dots$, as long as $h_{\ell-1} < n$, let $h_\ell$ be defined as follows.
If $\wt_{p}(h_{\ell-1}+1) > \frac{1}{8z}$, then $h_\ell = h_{\ell-1}+1$. Otherwise, let $h_\ell$ be the maximum index $h'_\ell \in [h_{\ell-1}+1,n]$ such that $\wt_{p}([h_{\ell-1}+1,h_{\ell}']) \leq \frac{1}{4z}$ and for every $h''_\ell \in [h_{\ell-1}+1,h'_\ell]$, $\wt_{p}(h''_\ell) \leq \frac{1}{8z}$. Let $L$ be such that $h_L = n$.

Based on the indices $\{h_\ell\}_{\ell=0}^L$ defined above, for every $\ell \in [L]$, let $H_\ell = [h_{\ell-1}+1,h_\ell]$ and let $\mathcal{H} = \left\{H_\ell\right\}_{\ell=1}^{L}$. We partition $\mathcal{H}$ into three subsets as follows.
Let $\calHsin$ be the subset of all $H \in \mathcal{H}$ such that $|H| = 1$ and $\wt_p(H) > \frac{1}{8z}$. Let $\calHmed$ be the set of all $H \in \mathcal{H}$ such that $|H| \neq 1$ and $\frac{1}{8z} \leq \wt_p(H) \leq \frac{1}{4z}$. Let $\calHsml$ be the set of all $H \in \mathcal{H}$ such that $\wt_p(H) < \frac{1}{8z}$.
\end{definition}
Observe that since $\wt_p(T) = 1$, then $|\calHsin \cup \calHmed| \leq 8z$. In addition, since between each $H', H'' \in \calHsml$ there has to be at least one  $H \in \calHsin$, then we also have $|\calHsml| \leq 8z + 1$.


\smallskip
By its definition, $\mathcal{H}$ is determined by $p$. We next  construct a set of intervals $\mathcal{B}$ based on sampling according to $p$ (in a similar, but not identical, fashion to Definition~\ref{def:real_interval_construction}).
Consider a sample $S_1$ of size $s_1$
selected according to $p$ (with repetitions), where $s_1$ will be set subsequently.

%
\begin{definition} \label{def:interval_construction}
Given a sample $S_1$ (multiset of elements in $[n]$) of size $s_1$, determine a sequence of indices in the following iterative manner.
Let $b_0=0$ and for $u = 1,2,\dots$, as long as $b_{u-1} < n$, let $b_u$ be defined as follows.
If $\wt_{S_1}(b_{u-1}+1) > 1/z$, then $b_u = b_{u-1}+1$. Otherwise, let $b_u$ be the maximum index $b'_u \in [b_{u-1}+1,n]$ such that $\wt_{S_1}([b_{u-1}+1,b_{u}']) \leq \frac{1}{z}$. Let $U$ be such that $b_U = n$.

Based on the indices $\{b_u\}_{u=0}^U$ defined above, for every $u \in [U]$, let $B_u = [b_{u-1}+1,b_u]$, and let $\mathcal{B} = \left\{B_u\right\}_{u=1}^{U}$. For every $u \in [U]$, if $\wt_{S_1}(B_{u}) > \frac{1}{z}$, then we say that $B_u$ is \textsf{heavy}, otherwise it is \textsf{light}.
\end{definition}
Observe that each heavy interval consists of a single element.

In order to relate between $\calH$ and $\mathcal{B}$, we introduce the
 following event, based on the sample $S_1$.

 \begin{definition} \label{def:E1}
      Denote by $E_1$ the event where
    \begin{align}
        &\forall H \in \calHsin \cup \calHmed,\;\;  \frac{1}{2}\wt_{p}(H) \leq \wt_{S_1}(H) \leq \frac{3}{2}\wt_{p}(H) \label{eq:E1-sin-med}\;,\\
        &\forall H \in \calHsml,\;\;  \wt_{S_1}(H) \leq \frac{1}{2z}\;.
        \label{eq:E1-sml}
    \end{align}
 \end{definition}

\begin{claim} \label{clm:E_1}
If the size of the sample $S_1$ is $s_1 = 120 z \log(240z)$, then
\ifnum\icalpsub=1
$\Pr\left[E_1\right] \ge \frac{8}{10}$,
\else
    \begin{equation}\nonumber
        \Pr\left[E_1\right] \ge \frac{8}{10}\;,
    \end{equation}
\fi
 where the probability is over the choice of $S_1$.
\end{claim}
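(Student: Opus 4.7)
The plan is to decompose $E_1$ into its individual conditions on the three families $\calHsin, \calHmed, \calHsml$, apply a tailored concentration bound to each family, and conclude by a union bound over the $O(z)$ intervals of $\calH$. The starting observation is that $s_1 \cdot \wt_{S_1}(H) = \sum_{j\in H} \nm_{S_1}(j)$ is a sum of $s_1$ independent Bernoulli indicators with mean $\mu_H \eqdef s_1 \wt_p(H)$, so standard Chernoff bounds apply directly to each $H \in \calH$.

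For each $H \in \calHsin \cup \calHmed$, Definition~\ref{def:real_interval_construction} guarantees $\wt_p(H) \geq 1/(8z)$, hence $\mu_H \geq s_1/(8z)$. The condition in Equation~\eqref{eq:E1-sin-med} asks for a multiplicative $(1 \pm 1/2)$-deviation of $\wt_{S_1}(H)$ around $\wt_p(H)$, which the multiplicative Chernoff bound controls by $2\exp(-\mu_H/12) \leq 2\exp(-s_1/(96z))$. Substituting $s_1 = 120 z \log(240z)$ turns this into $2(240z)^{-5/4}$ per interval. Summing over the at most $8z$ intervals in $\calHsin \cup \calHmed$ gives a total of $O(z\cdot(240z)^{-5/4}) = O(z^{-1/4})$, which I would verify lies safely below $1/10$.

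For each $H \in \calHsml$, we have $\wt_p(H) < 1/(8z)$, and the goal is to rule out the one-sided event $\wt_{S_1}(H) > 1/(2z)$, i.e., a count exceeding $s_1/(2z) \geq 4 \mu_H$. Here I would invoke the one-sided Chernoff upper tail in the form $\Pr[X \geq a] \leq (e \mu_H/a)^a$ for $a \geq e \mu_H$, yielding a per-interval bound of $(e/4)^{s_1/(2z)}$. With the chosen $s_1$ this is $(240z)^{-c}$ for a constant $c > 20$, which after a union bound over $|\calHsml| \leq 8z+1$ intervals is negligible compared to the $\calHsin \cup \calHmed$ contribution.

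Combining the two contributions, $\Pr[\lnot E_1] \leq 1/10 + o(1) \leq 2/10$, which gives the claim. I do not expect any structural obstacle; the only care needed is the routine constant-chasing verifying that the specific constants $120$ and $240$ in $s_1$ are large enough to make the union bound close with the desired slack. This mirrors the analysis underlying Claim~\ref{clm:sample}, but uses multiplicative rather than additive Chernoff bounds because the thresholds in $E_1$ are multiplicative rather than additive in $\wt_p(H)$.
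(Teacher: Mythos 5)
Your proposal is correct and matches the paper's proof in structure: decompose $E_1$ over $\calHsin\cup\calHmed$ (where you need two-sided multiplicative concentration) and $\calHsml$ (one-sided), bound each via Chernoff using the observation that $s_1\cdot\wt_{S_1}(H)$ is Binomial$(s_1,\wt_p(H))$, then union bound over the $O(z)$ intervals. For $\calHsin\cup\calHmed$ your per-interval bound $2(240z)^{-5/4}$ is exactly the paper's. The only cosmetic divergence is in the $\calHsml$ case: the paper replaces the true Bernoulli parameter $\wt_p(H)<1/(8z)$ by a dominating Bernoulli with parameter $1/(4z)$ and applies the multiplicative upper-tail bound to get a failure probability of $\exp(-s_1/(12z))<1/(240z)$ per interval, while you apply the tail bound $\Pr[X\ge a]\le(e\mu_H/a)^a$ directly with $a=s_1/(2z)\ge 4\mu_H$. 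Both are valid and close the union bound with room to spare; yours happens to yield a much smaller per-interval probability there, but the bottleneck is the $\calHsin\cup\calHmed$ term in either case.
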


\ifnum\icalpsub=0
\begin{proof}
Recall that $\wt_p(H) \geq \frac{1}{8z}$ for every $H \in \calHsin \cup \calHmed$.
    Using the multiplicative Chernoff bound (see Theorem \ref{thm:Chernoff}) we get that for every $H \in \calHsin \cup \calHmed$
    \begin{equation}
        \Pr\left[\left|\wt_{S_1}(H) - \wt_{p}(H)\right| > \frac{1}{2}\wt_{p}(H)\right] < 2\exp\left(-\frac{1}{12} \wt_{p}(H) s_1\right) < \frac{1}{120z}\;.
    \end{equation}

 Next recall that $\wt_p(H) \leq \frac{1}{4z}$ for every $H \in \calHsml$.
    Define the random variables $\left\{\chi_r\right\}_{r=1}^{s_1}$ such that for every $r \in [s_1]$, $\chi_r = 1$ with probability $\frac{1}{4z}$ and $0$ otherwise.
    Once again using the multiplicative Chernoff bound we get that for every $H \in \calHsml$
    \begin{eqnarray}
        \Pr\left[\wt_{S_1}(H) > 2\frac{1}{4z}\right] &\leq& \Pr\left[\sum_{r=1}^{s_1} \chi_r > 2\frac{1}{4z}\right] < \exp\left(-\frac{1}{3} \frac{1}{4z} s_1\right) \nonumber \\ &=& \exp\left(-\frac{1}{12z} s_1\right) < \frac{1}{240z}\;.
    \end{eqnarray}
    Using a union bound over all $H \in \calHsin  \cup \calHmed \cup \calHsml$ 
    (recall that by the discussion following Definition~\ref{def:real_interval_construction},
    $|\calHsin \cup \calHmed| \leq 8z$ and $|\calHsml| \leq 8z + 1$),
    we get
    \begin{equation}
        \Pr[E_1] \ge 1 - 8z \cdot \frac{1}{120z} - 9z \cdot \frac{1}{240z} \ge \frac{8}{10}\;,
    \end{equation}
 and the claim is established.
\end{proof}
\fi

\begin{claim} \label{clm:init_est}
    Conditioned on the event $E_1$,  for every  $u \in [U]$ such that
$B_u$ is light, 
$\wt_{p}(B_u) < \frac{6}{z}$.
\end{claim}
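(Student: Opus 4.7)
The plan is to fix a light interval $B_u$ and bound $\wt_p(B_u)$ by decomposing it according to the partition of $[n]$ given by $\calH = \calHsin \cup \calHmed \cup \calHsml$ (Definition~\ref{def:real_interval_construction}). At most two intervals of $\calH$ partially overlap $B_u$ (at its two endpoints); all remaining $H$-intervals that intersect $B_u$ are fully contained in it. I will bound the $\wt_p$-contribution of each type of piece separately, using event $E_1$ and a structural property of the construction.

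First, for each fully contained $H \in \calHsin \cup \calHmed$, event $E_1$ gives $\wt_p(H) \leq 2\,\wt_{S_1}(H)$. Since these $H$'s are disjoint subsets of $B_u$ and $B_u$ is light, we have $\sum \wt_{S_1}(H) \leq \wt_{S_1}(B_u) \leq 1/z$, so their total $\wt_p$-contribution is at most $2/z$. Moreover, each such $H$ satisfies $\wt_{S_1}(H) \geq \tfrac{1}{2}\wt_p(H) > \tfrac{1}{16z}$, so there are at most $16$ fully contained intervals from $\calHsin \cup \calHmed$, and in particular at most $16$ fully contained intervals from $\calHsin$.

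Second, I will establish the structural claim that every $H_\ell \in \calHsml$ with $\ell < L$ is immediately followed in $\calH$ by an interval in $\calHsin$. Indeed, since $\wt_p(H_\ell) < 1/(8z)$, the only way the construction can halt at $h_\ell$ rather than extend further is if $\wt_p(h_\ell+1) > 1/(8z)$; the alternative stopping condition $\wt_p([h_{\ell-1}+1, h_\ell+1]) > 1/(4z)$ likewise forces $\wt_p(h_\ell+1) > 1/(4z) - 1/(8z) = 1/(8z)$ in this case. Thus $H_{\ell+1} = \{h_\ell+1\} \in \calHsin$. Consequently, any two distinct $\calHsml$-intervals fully contained in $B_u$ are separated by an $\calHsin$-interval that is also contained in $B_u$, and so the number of fully contained $\calHsml$-intervals is at most $|\calHsin \cap B_u| + 1 \leq 17$, contributing $< 17/(8z)$ to $\wt_p(B_u)$.

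Third, for the (at most two) boundary $H$-intervals that only partially intersect $B_u$: since intervals in $\calHsin$ are singletons, they are either fully contained in $B_u$ or disjoint from it, so any partially-intersecting boundary interval must lie in $\calHmed \cup \calHsml$, contributing at most $\wt_p(H) \leq 1/(4z)$ each, for a total of $\leq 1/(2z)$. Summing the three contributions yields $\wt_p(B_u) \leq 2/z + 17/(8z) + 1/(2z) = 37/(8z) < 6/z$, as desired. The main obstacle is the structural claim in the second step, which requires a careful case analysis of why Definition~\ref{def:real_interval_construction} stops extending at $h_\ell$ when $H_\ell \in \calHsml$.
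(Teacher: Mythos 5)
Your proof is correct and follows essentially the same strategy as the paper: decompose $B_u$ along the $p$-defined partition $\calH$, use event $E_1$ to convert $\wt_p$ to $\wt_{S_1}$ for $\calHsin\cup\calHmed$ intervals (and thereby bound how many $\calHsin$-singletons can fit inside a light $B_u$), invoke the ``between two $\calHsml$-intervals there must be a $\calHsin$-interval'' fact to bound the number of $\calHsml$-intervals, and sum the three types of contributions. The structural claim you verify in your second step (that a non-final $\calHsml$-interval must be followed by a $\calHsin$-interval) is exactly the observation the paper states, without proof, right after Definition~\ref{def:real_interval_construction}; your case analysis of the two stopping conditions is the right way to justify it. One small stylistic difference: for the (at most two) boundary intervals you use the unconditional bound $\wt_p(H)\leq \frac{1}{4z}$ that holds for all of $\calHmed\cup\calHsml$, whereas the paper routes the boundary contribution through the $E_1$-estimate $\wt_{S_1}(H)$; your version is a bit cleaner and keeps the final arithmetic tighter ($37/(8z)$ versus the paper's slightly looser constants). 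Either way the claim's bound of $6/z$ holds.
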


\ifnum\icalpsub=0
\begin{proof}
Consider an interval $B_u$ that is light.
Let $\mathcal{H}_4$ be the minimal set such that $\mathcal{H}_4 = \mathcal{H}_1 \cup \mathcal{H}_2 \cup \mathcal{H}_3$ where $\mathcal{H}_1 \subseteq \calHsin$, $\mathcal{H}_2 \subseteq \calHmed$, $\mathcal{H}_3 \subseteq \calHsml$ and $B_u \subseteq \mathcal{H}_4$. Let $\mathcal{H}_5 $ be the smallest subset of $\mathcal{H}_{4}$ such that $\mathcal{H}_4 \setminus \mathcal{H}_5 \subseteq B_u $. It is easy to verify that $\mathcal{H}_5 \subseteq \mathcal{H}_2 \cup \mathcal{H}_3$ and that $|\mathcal{H}_5| \leq 2$.

Conditioned on $E_1$ (Definition~\ref{def:E1},  Equation~\eqref{eq:E1-sin-med}), we have that  $\wt_p(H) < 2\wt_{S_1}(H)$ for every $H \in \mathcal{H}_1 \cup \mathcal{H}_2$, and hence also
$\wt_{S_1}(H) > \frac{1}{16z}$ for every $H \in \calH_1$. Since $B_u$ is light,  $\wt_{S_1}(B_u) < \frac{1}{z}$, implying that $|\mathcal{H}_1| \leq 17$. As mentioned before, there has to be at least one interval $H \in \calHsin$ between any pair of intervals $H', H'' \in \calHsml$, implying that $|\mathcal{H}_3| \leq |\mathcal{H}_1| + 2 \leq 19$. Therefore,
\begin{align}
    \wt_p(B_u) &\leq \sum_{H \in \mathcal{H}_4} \wt_p(H) = \sum_{H \in \mathcal{H}_1} \wt_p(H) + \sum_{H \in \mathcal{H}_2} \wt_p(H) + \sum_{\mathcal{H} \in H_3} \wt_p(H) \\
    &\leq  2\sum_{H \in \mathcal{H}_1} \wt_{S_1}(H) + 2\sum_{H \in \mathcal{H}_2} \wt_{S_1}(H) + |\mathcal{H}_3| \frac{1}{8z}
    \leq 2\wt_{S_1}(B_u) + 2\sum_{H \in \mathcal{H}_5} \wt_{S_1}(H) +  \frac{19}{8z} \\
    &\leq  2\wt_{S_1}(B_u) + 4 \max_{H \in \mathcal{H}_2 \cup \mathcal{H}_3}\left\{\wt_{S_1}(H)\right\} + \frac{19}{8z} \\
    &\leq \frac{2}{z} + \frac{1}{z} + \frac{19}{8z} < \frac{6}{z}\;,
\end{align}
and the claim follows.
\end{proof}
\fi

\subsection{Estimation of symbol density and weight of intervals}\label{subsec:est-symb-dens}
In this subsection we estimate the weight, according to $p$, of every interval $[b_u]$ for $u \in U$, as well as its symbol density, focusing on symbols that occur in $w$. Note that $[b_u]$ is the union of the intervals $B_1,\dots,B_u$. We first introduce some notations.

    For any word $w^{*}$, text $T^{*}$, $i \in [|w^{*}|]$ and $j\in [|T^{*}|]$, let $I_{i}^{j}(T^{*},w^{*}) = 1$ if $T^{*}[j] = w^{*}_i$ and $0$ otherwise.
  We next set 
    \begin{equation}\label{eq:xi-i-u}
        \xi_{i}^{u} =
            \sum_{j \in [b_u]} I_i^j(T,w) p_j\;.
    \end{equation}

Consider a sample $S_2$ of size $s_2$ selected according to $p$ (with repetitions), where $s_2$ will be set subsequently.
For every $u \in [U]$ and $i \in [k]$, set 
    \begin{equation}\label{eq:xi_est_def}
        \breve{\xi}_{i}^{u} =
          \frac{1}{s_2} \sum_{j \in [b_u]}I_i^j(T,w) \nm_{S_2}(j)\;.
    \end{equation}

\begin{definition} \label{def:E2}
The     event $E_2$ (based on $S_2$) is defined as follows. For every $i \in [k]$ and $u \in [U]$,
    \begin{equation}\label{eq:E2-hat-xi-xi}
             \left|\breve{\xi}_{i}^{u} - \xi_{i}^{u} \right| \leq \frac{1}{z} \;,
    \end{equation}
    and for every $u \in [U]$
    \begin{equation}\label{eq:wt-S2-wt-p}
            \left|\wt_{S_2}([b_u]) - \wt_p([b_u]) \right| \leq \frac{1}{z}\;.
    \end{equation}
\end{definition}

\begin{claim} \label{clm:E_2}
If the size of the sample $S_2$ is $s_2 = z^2 \log\left(40k U \right)$, then
\ifnum\icalpsub=1
$ \Pr\left[E_2\right] \ge \frac{9}{10}$,
\else
    \begin{equation}\nonumber
        \Pr\left[E_2\right] \ge \frac{9}{10}\;,
    \end{equation}
\fi
    where the probability is over the choice of $S_2$.
\end{claim}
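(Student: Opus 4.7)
The plan is a routine application of the additive Chernoff / Hoeffding bound, followed by a union bound over the $kU + U$ events comprising $E_2$. The proof will closely mirror that of Claim~\ref{clm:sample} (for the uniform setting), the only twist being that the sampling distribution is $p$ rather than uniform, which does not affect the concentration argument at all once the right indicator variables are identified.

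First I would rewrite each $\breve{\xi}_i^u$ as the empirical mean of i.i.d.\ bounded variables. Let $\rho_1,\dots,\rho_{s_2}$ denote the indices drawn independently according to $p$ that make up $S_2$, and for fixed $i \in [k]$ and $u \in [U]$ define the indicator $\chi_r^{i,u} = I_i^{\rho_r}(T,w)\cdot \mathbb{1}[\rho_r \in [b_u]]$, which equals $1$ exactly when $\rho_r \in [b_u]$ and $T[\rho_r] = w_i$. By Equation~\eqref{eq:xi_est_def} we have $\breve{\xi}_i^u = \frac{1}{s_2}\sum_{r=1}^{s_2}\chi_r^{i,u}$, and by Equation~\eqref{eq:xi-i-u} together with the definition of $p$, $\E[\chi_r^{i,u}] = \sum_{j \in [b_u]} p_j\, I_i^j(T,w) = \xi_i^u$. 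Hoeffding's inequality therefore yields
\[
\Pr\!\left[\left|\breve{\xi}_i^u - \xi_i^u\right| > \tfrac{1}{z}\right] \;\le\; 2\exp(-2s_2/z^2) \;=\; \frac{2}{(40kU)^2},
\]
using the choice $s_2 = z^2 \log(40kU)$.

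For Equation~\eqref{eq:wt-S2-wt-p} the argument is identical with the simpler indicator $\psi_r^u = \mathbb{1}[\rho_r \in [b_u]]$, which has mean $\wt_p([b_u])$ and empirical mean $\wt_{S_2}([b_u])$, giving
\[
\Pr\!\left[\left|\wt_{S_2}([b_u]) - \wt_p([b_u])\right| > \tfrac{1}{z}\right] \;\le\; \frac{2}{(40kU)^2}.
\]
Taking a union bound over the $kU$ events of type~\eqref{eq:E2-hat-xi-xi} and the $U$ events of type~\eqref{eq:wt-S2-wt-p} gives $\Pr[\overline{E_2}] \le 2(kU + U) / (40kU)^2 \le 1/10$, as needed.

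There is no real obstacle: the proof is pure concentration and union bound. The only small point to flag is that the intervals $\{B_u\}$, and with them the count $U$, are determined by $S_1$, so the statement should be read as holding over the randomness of $S_2$ for any realization of $S_1$; since $U$ enters only logarithmically into $s_2$ and is deterministically bounded by $n$, this conditioning causes no issues and the bound above holds uniformly.
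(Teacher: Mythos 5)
Your proposal is correct and follows essentially the same route as the paper's proof: additive Chernoff/Hoeffding for each of the $kU$ estimates $\breve{\xi}_i^u$ and the $U$ weights $\wt_{S_2}([b_u])$, followed by a union bound, using the choice $s_2 = z^2\log(40kU)$ to drive each failure probability down to at most $1/(20kU)$. The paper phrases the concentration via the random variables $\frac{1}{s_2}\nm_{S_2}(j)I_i^j(T,w)$ rather than your explicit indicators $\chi_r^{i,u}$, and splits the union bound into two rounds of $19/20$ each, but these are only cosmetic differences; your side remark about conditioning on $S_1$ is also correct and consistent with how the paper implicitly treats $U$.
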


\ifnum\icalpsub=0
\begin{proof}
\sloppy
    Using the additive Chernoff bound (see Theorem~\ref{thm:Chernoff}) along with the fact that $\E\left[\frac{\nm_{S_2}(j)}{s_2}I_i^j(T,w)\right] = p_j I_i^j(T,w)$ yields the following.
    \begin{align}
        \Pr\left[\left|\breve{\xi}_{i}^{u} - \xi_{i}^{u} \right| > \frac{1}{z} \right] &= \Pr\left[\left|\frac{1}{s_2} \sum_{j \in [b_u] }I_i^j(T,w) \nm_{S_2}(j) - \sum_{j \in [b_u] }I_i^j(T,w) p_j \right| > \frac{1}{z} \right] \\
        &< 2 \exp(-2 \frac{1}{z^2} s_2) \leq \frac{1}{20 k U}\;.
    \end{align}
    By applying a union bound over all $i \in [k]$ and $u \in [U]$, we get that with probability of at least $\frac{19}{20}$, $\left|\breve{\xi}_{i}^{u} - \xi_{i}^{u} \right| \leq \frac{1}{z}$.
    Another use of the additive Chernoff bound along with the fact that $\E\left[\frac{\nm_{S_2}(j)}{s_2}\right] = p_j$ gives us that
    \begin{align}
        &\Pr\left[\left|\wt_{S_2}([b_u]) - \wt_p([b_u]) \right| > \frac{1}{z} \right] = \Pr\left[\left|\frac{1}{s_2} \sum_{j \in [b_u] } \nm_{S_2}(j) - \sum_{j \in [b_u] } p_j \right| > \frac{1}{z} \right] \\
        &\leq 2 \exp(-2 \frac{1}{z^2} s_2) \leq \frac{1}{20U}\;.
    \end{align}

    \sloppy
    Again using a union bound over all $u \in [U]$, we get that with probability of at least $\frac{19}{20}$ we have $\left|\wt_{S_2}([b_u]) - \wt_p([b_u]) \right| \leq \frac{1}{z}$. One last use of the union bound gives us that $\Pr\left[E_2\right] \ge \frac{9}{10}$
\end{proof}
\fi

\subsection{Reducing from distribution-free to uniform}

In this subsection we give
\ifnum\icalpsub=0
the
details
\fi
 for  aforementioned reduction from the distribution-free case to the uniform case, using the intervals and estimators that were defined in the previous subsections.
We start by providing three definitions, taken from~\cite{RR-toct}, which will be used  in the reduction. The first two definitions are for the notion of \textit{splitting} (variants of this notion were also used in previous works, e.g.,~\cite{DK16}).

\begin{definition} \label{def:split_T}
    For a text $T = t_1 \dots t_n$, a text $\widetilde{T}$ is said to be a splitting of $T$ if $\widetilde{T} = t_1^{\alpha_1} \dots t_n^{\alpha_n}$ for some $\alpha_1 \dots \alpha_n \in \mathbb{N}^{+}$. We denote by $\phi$ the splitting map, which maps each (index of a) symbol of $\widetilde{T}$ to its origin in $T$. Formally, $\phi : [|\widetilde{T}|] \rightarrow [n]$ is defined as follows. For every $\ell \in [|\widetilde{T}|] = [\sum_{i=1}^{n} \alpha_i]$, let $\phi(\ell)$ be the unique $i \in [n]$ that satisfies $\sum_{r=1}^{i-1}\alpha_r < \ell < \sum_{r=1}^{i}\alpha_r$.
\end{definition}
%
Note that by this definition, $\phi$ is a non-decreasing surjective map, satisfying $\widetilde{T}[\ell] = T[\phi(\ell)]$ for every $\ell \in [|\widetilde{T}|]$. For a set $S \subseteq [|\widetilde{T}|]$ we let $\phi(S) = \left\{\phi(\ell) : \ell \in S \right\}$. With a slight abuse of notation, for any $i \in [n]$ we use $\phi^{-1}(i)$ to denote the set $\left\{\ell \in [|\widetilde{T}|] : \phi(\ell) = i\right\}$, and for a set $S \subseteq [n]$ we let $\phi^{-1}(S) = \left\{\ell \in [|\widetilde{T}|] : \phi(\ell) \in S \right\}$
\begin{definition} \label{def:split_p}
    Given  text $T = t_1 \dots t_n$ and a corresponding probability distribution $p = (p_1, \dots, p_n)$, a splitting of $(T,p)$ is a text $\widetilde{T}$ along with a corresponding probability distribution $\hat{p} = (\hat{p}_1, \dots, \hat{p}_{|\widetilde{T}|})$, such that $\widetilde{T}$ is a splitting of $T$ and $\sum_{\ell \in \phi^{-1}(i)} \hat{p}_{\ell} = p_i$ for every $i \in [n]$.
\end{definition}

The third definition is of a set of words, where no two consecutive symbols are the same.
\begin{definition}\label{def:Wc}
    Let $\calW_c = \left\{ w \,:\, w_{j+1} \neq w_j, \forall j \in [k-1] \right\}$\;.
\end{definition}

 \subsubsection{A basis for reducing from distribution-free to uniform}
\label{sec:dist-app-xi}
Let $\tilw$ be a word of length $\tilk$ and $\tilT$ a text of length $\tiln$.
In this subsection we establish a claim, which gives sufficient conditions on a (normalized version) of an estimation matrix $\hatnms$, under which it can be used to obtain an estimate of $\Dist(\tilT,\tilw)$ with a small additive error.

We first state a claim that is similar to Claim~\ref{clm:nJ-n}, with a small, but important difference, that takes into account intervals in $\widetilde{T}$ (determined by a set of indices $J$) that consist of repetitions of a single symbol.
\ifnum\icalpsub=0
Since its proof is very similar to the proof of Claim~\ref{clm:nJ-n},  it is deferred to Appendix~\ref{app:missing}.
\fi
Recall that $\rd(\cdot)$ was defined in
Definition~\ref{def:nJ}, and that $R(\tilT,\tilw)$ denotes the number of role-disjoint copies of $\tilw$ in $\tilT$.

\begin{claim} \label{clm:nJ-n-J-prime}
Let $J = \left\{j_0,j_1,  \dots, j_{\ell}\right\}$ be a set of indices satisfying $j_0=0 < j_1 < j_2 <\dots < j_\ell=\widetilde{n}$. 
Let $\nms$ be the matrix whose entries are $\nms_i^r = \nmr_i^{j_r}(\widetilde{T},\widetilde{w})$ for every $i \in [\widetilde{k}]$ and $r \in [\ell]$.
Let $J' = \{r\in [\ell]\,:\,\widetilde{T}[j_{r-1}+1]=\dots=\widetilde{T}[j_r]\}$. Then
\[
    \left|\rd(\nms) - \rdr(\widetilde{T},\widetilde{w})\right| \leq (\widetilde{k}-1)\cdot \max_{r \in [\ell]\setminus J'}\left\{(j_{r} - j_{r-1})\right\}\;.
    \]
\end{claim}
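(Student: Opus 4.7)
My plan is to mimic the proof of Claim~\ref{clm:nJ-n} by induction on $i$ (for every $r \in [\ell]$), with the only change appearing in how I handle the intervals that belong to $J'$.

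For the base case $i=1$, nothing changes: by definition $\rd_1^r(\nms)=\nms_1^r = \nmr_1^{j_r}(\widetilde T,\widetilde w) = \rdr_1^{j_r}(\widetilde T,\widetilde w)$, so the error is $0$. For the inductive step I would unfold $\rd_i^r(\nms)$ using Definition~\ref{def:nJ} and $\rdr_i^{j_r}(\widetilde T,\widetilde w)$ using Claim~\ref{clm:n[i][j]}, reaching, as in Equation~\eqref{eq:j-to-J-2}, the difference of two max expressions. I would then let $j^*$ be a maximizer of the first max and take $b^*$ to be the largest index with $j_{b^*}\le j^*$, so $j^*\in (j_{b^*},\,j_{b^*+1}]$ (or $j^*=j_{b^*}$). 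The error decomposes into three pieces exactly as before: (a) the symbol-count gap $\nmr_i^{j^*}-\nmr_i^{j_{b^*}}$, (b) the inductive error $\rd_{i-1}^{b^*}-\rdr_{i-1}^{j_{b^*}}$ bounded by $(i-2)\max_{\tau\in[r]\setminus J'}(j_\tau-j_{\tau-1})$, and (c) the off-by-one term $\rdr_{i-1}^{j_{b^*}}-\rdr_{i-1}^{j^*-1}$.

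When the interval $(j_{b^*},j_{b^*+1}]$ is not in $J'$, the original argument applies and bounds (a)+(c) by $\max_{\tau\in[r]\setminus J'}(j_\tau-j_{\tau-1})$. When this interval is in $J'$, I need a new argument so that its length is \emph{not} paid. The key observation is that inside a $J'$ interval the symbol $\sigma$ is constant, so both step-functions $\nmr_i^{\,\cdot}$ and $\rdr_{i-1}^{\,\cdot}$ evolve predictably along the interval: if $\sigma\neq\widetilde w_i$ then $\nmr_i^j$ is constant across the interval (killing (a)), and if $\sigma\neq \widetilde w_{i-1}$ then adding an index with symbol $\sigma$ cannot extend any role-disjoint copy of $\widetilde w_1\ldots \widetilde w_{i-1}$, so $\rdr_{i-1}^j$ is constant across the interval too (killing most of (c)). In the remaining cases where $\sigma\in\{\widetilde w_i,\widetilde w_{i-1}\}$, the incremental changes in $\nmr_i^j$ and $\rdr_{i-1}^{j-1}$ cancel step-by-step. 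Concretely, I plan to show that for any $j^*$ in such an interval one may evaluate the ``$\rd$-side'' max at $b=b^*$ or $b=b^*+1$ and obtain (a)+(c)$\le 1$, which is in turn $\le \max_{\tau\in[r]\setminus J'}(j_\tau-j_{\tau-1})$ whenever this set is nonempty.

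Combining (a), (b), and (c) then gives the inductive bound $(i-1)\cdot\max_{\tau\in[r]\setminus J'}(j_\tau-j_{\tau-1})$. The symmetric direction (bounding $\rdr_i^{j_r}(\widetilde T,\widetilde w)-\rd_i^r(\nms)$) is handled identically to Equations~\eqref{eq:J-2-to-j}--\eqref{eq:max-b-max-j} in the proof of Claim~\ref{clm:nJ-n}, again only modifying the estimate of the symbol-count/role-count gap inside $J'$ intervals. Taking $i=\widetilde k$ and $r=\ell$ completes the proof. The main obstacle will be the case analysis on $\sigma$ versus $\widetilde w_i$ and $\widetilde w_{i-1}$ within a $J'$ interval, and in particular verifying that when the maximizer $j^*$ sits strictly inside such an interval, the cancellation between the symbol count and the role-disjoint count really does leave no residual error beyond the $+1$ slack already absorbed in the original argument.
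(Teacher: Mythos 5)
Your overall plan (induction on $i$, unfolding via Claim~\ref{clm:n[i][j]}, decomposing the gap between the two maxima into three pieces) is the same as the paper's, but you make a different technical choice that is worth flagging. The paper's appendix proof of Claim~\ref{clm:nJ-n-J-prime} deliberately switches from the ``largest $b^*$ with $j_{b^*}\le j^*$'' of Claim~\ref{clm:nJ-n} to the ``\emph{smallest} $b^*$ with $j_{b^*}\ge j^*$'', so that $j^*$ always lands inside interval $b^*$ and $b^*\ge 1$ is automatic. You instead keep the Claim~\ref{clm:nJ-n} choice and compensate by allowing the $\rd$-side max to be anchored at either $b^*$ or $b^*+1$. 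That flexibility is in fact necessary: when the interval is constant with $\sigma=\widetilde w_{i-1}\ne\widetilde w_i$, the term $\rdr_{i-1}^{j_{b^*}}(\widetilde T,\widetilde w)-\rdr_{i-1}^{j^*-1}(\widetilde T,\widetilde w)$ can be as large as the interval length, so one must anchor the comparison at the breakpoint \emph{preceding} $j^*$ (after pushing $j^*$ to the left end of the interval, which is WLOG since $\nmr_i^j-\rdr_{i-1}^{j-1}$ is non-increasing across a constant block with $\sigma\ne\widetilde w_i$). Your case analysis on $\sigma$ versus $\widetilde w_i,\widetilde w_{i-1}$ and the choice of anchor $b\in\{b^*,b^*+1\}$ is exactly the right mechanism, and is arguably more explicit than what is written in the paper, which asserts the constant-interval contribution is $0$ without isolating this sub-case.

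The one genuine gap is the boundary $b^*=0$, which arises whenever $j^*\le j_1$. Definition~\ref{def:nJ} only ranges $r'$ over $[r]$ and never defines $\rd_{i-1}^0(\nms)$ or $\nms_i^0$, so ``evaluating the $\rd$-side max at $b=b^*$'' is not meaningful when $b^*=0$. You need to make explicit that the natural conventions $\nms_i^0=\nmr_i^{j_0}=\nmr_i^0=0$ and $\rd_{i-1}^0(\nms)=0$ are adopted, extending the max in Definition~\ref{def:nJ} to include $r'=0$ (so the max is always $\ge 0$, matching the fact that the max in Claim~\ref{clm:n[i][j]} is always $\ge 0$ via $j'=1$). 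Without this, the case where the first interval is a long constant block with $\sigma=\widetilde w_{i-1}\ne\widetilde w_i$ is not just a gap in the proof — $\rd_i^1(\nms)$ genuinely drifts from $\rdr_i^{j_1}(\widetilde T,\widetilde w)$ by roughly $j_1$. Once the $r'=0$ convention is in place, your argument closes this case with $b=b^*=0$ yielding $h(0)=0$ and $g(j^*)$ at most the $+1$ slack. Alternatively, adopt the paper's smallest-$b^*$ choice and carry out your case analysis at $b^*$ and $b^*-1$, which only pushes the same boundary question one index up.

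Also spell out, as you hint, what happens when $[\ell]\setminus J'=\emptyset$: the ``$+1$ slack'' you absorb is only covered by $\max_{\tau\in[r]\setminus J'}(j_\tau-j_{\tau-1})$ when this set is nonempty, so you should either argue that in this degenerate case $\rd(\nms)=\rdr(\widetilde T,\widetilde w)$ exactly, or note that the claim is vacuous/interpreted with the empty max equal to $0$.
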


The following observation can be easily proved by induction.
\begin{observation} \label{obs:linearity_of_M}
    Let $\widehat{\nms}$ be a matrix of size $\widetilde{k} \times \ell$. Then
    \begin{equation}
        \frac{1}{\widetilde{n}}\rd(\widehat{\nms}) = \rd\left(\frac{\widehat{\nms}}{\widetilde{n}}\right)\;.
    \end{equation}
\end{observation}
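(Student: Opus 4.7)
The plan is to prove the observation by a straightforward induction on the row index $i$ of the recursive definition of $\rd$ (Definition~\ref{def:nJ}). The key fact driving everything is that the definition of $\rd_i^r(\cdot)$ is built up from only two operations on the matrix entries: subtraction and $\max$. Both of these commute with scaling by a positive scalar, so the whole functional $\rd$ is positively homogeneous of degree $1$ in its matrix argument. Applying this with scalar $c = 1/\widetilde{n}$ yields the claim.

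More concretely, I would prove the stronger statement that for every $i \in [\widetilde{k}]$ and every $r \in [\ell]$,
\[
\rd_i^r\!\left(\tfrac{\widehat{\nms}}{\widetilde{n}}\right) = \tfrac{1}{\widetilde{n}}\,\rd_i^r(\widehat{\nms})\;,
\]
by induction on $i$. The base case $i=1$ is immediate from $\rd_1^r(\nms) = \nms_1^r$, since $(\widehat{\nms}/\widetilde{n})_1^r = \widehat{\nms}_1^r/\widetilde{n}$. For the inductive step, assume the identity holds for $i-1$ and every $r'\in[\ell]$. Then
\[
\rd_i^r\!\left(\tfrac{\widehat{\nms}}{\widetilde{n}}\right) = \tfrac{\widehat{\nms}_i^r}{\widetilde{n}} - \max_{r'\leq r}\left\{\tfrac{\widehat{\nms}_i^{r'}}{\widetilde{n}} - \rd_{i-1}^{r'}\!\left(\tfrac{\widehat{\nms}}{\widetilde{n}}\right)\right\} = \tfrac{\widehat{\nms}_i^r}{\widetilde{n}} - \max_{r'\leq r}\left\{\tfrac{\widehat{\nms}_i^{r'}}{\widetilde{n}} - \tfrac{1}{\widetilde{n}}\rd_{i-1}^{r'}(\widehat{\nms})\right\}\;,
\]
where the second equality uses the induction hypothesis. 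Pulling the common factor $1/\widetilde{n} > 0$ out of the $\max$ (which is valid since multiplying each argument by the same positive constant preserves the argmax) gives $(1/\widetilde{n})\bigl[\widehat{\nms}_i^r - \max_{r'\leq r}\{\widehat{\nms}_i^{r'} - \rd_{i-1}^{r'}(\widehat{\nms})\}\bigr] = (1/\widetilde{n})\rd_i^r(\widehat{\nms})$, completing the inductive step. Specializing to $i=\widetilde{k}$ and $r=\ell$ yields the observation.

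There is no real obstacle here; the only point that deserves a line of justification is pulling the positive scalar $1/\widetilde{n}$ through the $\max$, which is a routine property of $\max$ on real numbers. Everything else is mechanical substitution into the recursive definition.
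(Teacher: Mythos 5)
Your proof is correct and follows the same route the paper indicates (the paper simply states that the observation ``can be easily proved by induction'' and omits the details): induct on $i$, using that the recursion in Definition~\ref{def:nJ} involves only subtraction and $\max$, both of which commute with multiplication by the positive scalar $1/\widetilde{n}$. Nothing is missing.
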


The next claim
will serve as the basis for our reduction from the general, distribution-free case, to the uniform case.

\begin{claim} \label{clm:alg_no_n}
Let $\widehat{\nms}$ be a $\widetilde{k} \times \ell$ matrix, $J = \left\{j_0,j_1, j_2, \dots, j_{\ell}\right\}$ be a set of indices satisfying $j_0=0 < j_1 < j_2 <\dots < j_\ell=\widetilde{n}$ and 
let $c_1$ and $c_2$ be constants.
Suppose that the following conditions are satisfied.
\begin{enumerate}
    \item \label{itm:J_not_spaced} For every $r \in [\ell]$, if $j_r - j_{r - 1} > c_1 \cdot\frac{\delta \widetilde{n}}{\widetilde{k}}$,
    then $\widetilde{T}[j_{r-1}+1]=\dots=\widetilde{T}[j_{r}]$.
    \item \label{itm:good_approx} For every $i \in [\widetilde{k}]$ and  $r\in [\ell]$, $\left|\widehat{\nms}_{i}^{r} - \nmr_{i}^{j_r}(\widetilde{T},\widetilde{w}) \right| \leq c_2 \cdot \frac{\delta \widetilde{n}}{\widetilde{k}}$.
\end{enumerate}
Then,
\begin{equation}\nonumber
    \left|\rd\left(\frac{\widehat{\nms}}{\widetilde{n}}\right) - \Dist(\widetilde{T},\widetilde{w})\right| \leq (c_1 + 2c_2) \delta\;.
\end{equation}\nonumber
\end{claim}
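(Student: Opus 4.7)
The plan is to prove this claim by combining two results already established: Claim~\ref{clm:nJ-n-J-prime} which bounds the error from restricting to a sparse set of indices $J$ (even when some gaps are large, provided those gaps consist of repeated symbols), and Claim~\ref{clm:estimator} which bounds the error from replacing exact counts with approximations entrywise, both applied now to $\widetilde{T}$ and $\widetilde{w}$ rather than $T$ and $w$. Observation~\ref{obs:linearity_of_M} will then let us pull out the factor of $1/\widetilde{n}$ at the end, and the identity $\Dist(\widetilde{T},\widetilde{w}) = \rdr(\widetilde{T},\widetilde{w})/\widetilde{n}$ (from the characterization of Ron and Rosin used throughout the uniform case) will convert the bound on role-disjoint copies into a bound on distance.

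Concretely, I would first introduce the auxiliary matrix $\widetilde{\nms}$ of size $\widetilde{k}\times\ell$ defined by $\widetilde{\nms}_i^r = \nmr_i^{j_r}(\widetilde{T},\widetilde{w})$, and use the triangle inequality
\[
\bigl|\rd(\widehat{\nms}) - \rdr(\widetilde{T},\widetilde{w})\bigr|
\;\leq\; \bigl|\rd(\widehat{\nms}) - \rd(\widetilde{\nms})\bigr| + \bigl|\rd(\widetilde{\nms}) - \rdr(\widetilde{T},\widetilde{w})\bigr|\;.
\]
For the second summand, I would apply Claim~\ref{clm:nJ-n-J-prime}. By Condition~\ref{itm:J_not_spaced}, any index $r$ with $j_r-j_{r-1} > c_1\delta\widetilde{n}/\widetilde{k}$ lies in the set $J'$ of ``constant'' intervals excluded by that claim. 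Hence the max in Claim~\ref{clm:nJ-n-J-prime} is over indices $r\in[\ell]\setminus J'$, all of which satisfy $j_r-j_{r-1}\leq c_1\delta\widetilde{n}/\widetilde{k}$, so the second summand is at most $(\widetilde{k}-1)\cdot c_1\delta\widetilde{n}/\widetilde{k} \leq c_1\delta\widetilde{n}$.

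For the first summand, I would apply Claim~\ref{clm:estimator} (which is stated generically and therefore applies with $\widetilde{k}$ in place of $k$ and $\widetilde{n}$ in place of $n$) with $\gamma = c_2\delta/\widetilde{k}$. Condition~\ref{itm:good_approx} is precisely the hypothesis that $|\widehat{\nms}_i^r - \widetilde{\nms}_i^r|\leq \gamma\widetilde{n}$ entrywise, so the conclusion gives $|\rd(\widehat{\nms})-\rd(\widetilde{\nms})| \leq (2\widetilde{k}-1)\gamma\widetilde{n} \leq 2c_2\delta\widetilde{n}$. Adding the two contributions yields $|\rd(\widehat{\nms}) - \rdr(\widetilde{T},\widetilde{w})| \leq (c_1+2c_2)\delta\widetilde{n}$. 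Dividing by $\widetilde{n}$, applying Observation~\ref{obs:linearity_of_M} to rewrite $\rd(\widehat{\nms})/\widetilde{n}$ as $\rd(\widehat{\nms}/\widetilde{n})$, and using $\rdr(\widetilde{T},\widetilde{w})/\widetilde{n} = \Dist(\widetilde{T},\widetilde{w})$ completes the proof.

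There is no real obstacle: the claim is essentially a packaging of two previously-proved estimates via the triangle inequality, with the only subtlety being the careful use of Condition~\ref{itm:J_not_spaced} to ensure that the ``bad'' large gaps fall into the exempted set $J'$ of Claim~\ref{clm:nJ-n-J-prime}. This is precisely the reason Claim~\ref{clm:nJ-n-J-prime} was stated in its slightly strengthened form rather than the direct analog of Claim~\ref{clm:nJ-n}: under a general distribution $p$, the reduction to a uniform text $\widetilde{T}$ can produce long runs of a single symbol (coming from heavy singleton intervals in $\calHsin$), which would otherwise blow up the max-gap term.
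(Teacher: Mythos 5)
Your proposal is correct and matches the paper's proof essentially verbatim: you introduce the same auxiliary matrix of exact prefix counts, split via the triangle inequality, invoke Claim~\ref{clm:nJ-n-J-prime} together with Condition~\ref{itm:J_not_spaced} for one term and Claim~\ref{clm:estimator} together with Condition~\ref{itm:good_approx} for the other, and finish with Observation~\ref{obs:linearity_of_M} and the identity $\rdr(\widetilde{T},\widetilde{w})/\widetilde{n} = \Dist(\widetilde{T},\widetilde{w})$. You spell out the arithmetic (e.g.\ $(\widetilde{k}-1)c_1\delta\widetilde{n}/\widetilde{k} \leq c_1\delta\widetilde{n}$ and $(2\widetilde{k}-1)c_2\delta\widetilde{n}/\widetilde{k} \leq 2c_2\delta\widetilde{n}$) a bit more explicitly than the paper does, but the argument is the same.
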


\ifnum\icalpsub=0
\begin{proof}
    Let $\nms$ be the matrix whose entries are $\nms_i^r = \nmr_i^{j_r}(\widetilde{T},\widetilde{w})$ for every $i \in [\widetilde{k}]$ and $r \in [\ell]$. We use Claim \ref{clm:nJ-n-J-prime} and Item \ref{itm:J_not_spaced} in the premise of the current claim to obtain that $\left|\rd(\nms) - \rdr(\widetilde{T},\widetilde{w})\right| \leq c_1 \delta \widetilde{n}$. We also use Claim \ref{clm:estimator} and Item \ref{itm:good_approx} in the premise of the current claim to obtain that $\left|\rd(\widehat{\nms}) - \rd(\nms)\right| \leq 2 c_2 \delta \widetilde{n}$. Combining these bounds we get that $\left|\rd(\widehat{\nms}) - \rdr(\widetilde{T},\widetilde{w})\right| \leq (c_1+2c_2) \delta \widetilde{n}$. The claim follows by applying Observation \ref{obs:linearity_of_M} along with the fact that $\frac{\rdr(\widetilde{T},\widetilde{w})}{\widetilde{n}} = \Dist(\widetilde{T},\widetilde{w})$.
\end{proof}
\fi

\subsubsection{Establishing the reduction for $w\in \calW_c$ and quantized $p$}
\ifnum\icalpsub=0
For ease of readability, we begin by addressing the special case in which $w \in \calW_c$ (recall Definition~\ref{def:Wc})
and where there exists $\beta \in (0,1)$ such that $p_j/\beta$ is an integer for every $j \in [n]$.
We later show how to deal with the general case, where we rely on techniques from~\cite{RR-toct} and introduce some new ones that are needed for implementing our algorithm.
\else
For the ease of readability, in this subsection we address the special case in which $w \in \calW_c$ (recall Definition~\ref{def:Wc}), and
in 
full version of this paper~\cite{CSR23}
we show
how to deal with the general case.
\fi

For the case considered in this subsection, let $\widetilde{T} = t_{1}^{\alpha_1}\dots t_{n}^{\alpha_{n}}$ where $\alpha_j = \frac{p_{j}}{\beta}$ for every $j \in [n]$, so that $|\widetilde{T}|=\frac{1}{\beta}$. Define $\widetilde{p}$ by $\widetilde{p}_j = \beta$ for every $j \in [|\widetilde{T}|]$, so that $\widetilde{p}$ is the uniform distribution. Since $p_j = \beta \cdot \alpha_j$, for every $j \in [n]$, we get that $(\widetilde{T},\widetilde{p})$ is a splitting of $(T,p)$
(recall Definition~\ref{def:split_p}), and hence
by \cite[Clm. 4.4]{RR-toct} (using the assumption that $w\in \calW_c$),
\begin{equation}\label{eq:nus}
    \Dist(\widetilde{T},w,\widetilde{p}) = \Dist(T,w,p)\;.
\end{equation}

Denote $\widetilde{n} = |\widetilde{T}|$. 
We begin by defining a set of intervals of $[\widetilde{n}]$, where $\{b_0,\dots,b_U\}$ and $\mathcal{B} = \{B_1,\dots,B_U\}$ are as defined in Section~\ref{subsec:intervals}, and $\phi$ is as in Definition~\ref{def:split_p}.
\begin{definition}\label{def:Btil}
Let $\widetilde{b}_0 = 0$, and for every $u \in [U]$, let $\widetilde{b}_u = \max \left\{h \in [\widetilde{n}] : \phi(h) = b_u\right\}$.
For every $u \in [U]$ let $\widetilde{B}_u = [\widetilde{b}_{u-1}+1,\widetilde{b}_u]$, and
define  $\mathcal{\widetilde{B}} = \left\{\widetilde{B}_u\right\}_{u=1}^{U}$\;.
\end{definition}

We next introduce a notation for the weights, according to $\widetilde{p}$, of unions of these intervals.
    For every $i \in [k]$ and $u \in [U]$,
    \begin{equation}
        {\widetilde{\xi}}_{i}^{u} = \sum_{j \in [\widetilde{b}_u]} I_{i}^{j}(\widetilde{T},w) \widetilde{p}_j\;.
    \end{equation}
    Note that
    \begin{equation} \label{eq:xi_tilde_wrap}
    \widetilde{\xi}_{i}^{u} = \frac{1}{\widetilde{n}}\nmr_{i}^{b_u}(\widetilde{T},w)\;.
    \end{equation}

\begin{claim} \label{clm:comma_approx_simple}
For every $i \in [k]$ and $u \in [U]$
\ifnum\icalpsub=1
$\widetilde{\xi}_{i}^{u} = \xi_{i}^{u}$,
\else
    \begin{equation}\nonumber
        \widetilde{\xi}_{i}^{u} = \xi_{i}^{u}\;,
    \end{equation}
\fi
where $\xi_i^u$ is as defined in Equation~\eqref{eq:xi-i-u}.
\end{claim}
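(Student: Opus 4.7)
The plan is to unfold both sides of the claimed equality using the definition of the splitting $(\widetilde{T},\widetilde{p})$ of $(T,p)$, and show they agree term by term after grouping the sum on the left according to the splitting map $\phi$. The crucial observations are: (i) by Definition~\ref{def:Btil}, the range of indices $[\widetilde{b}_u]$ in $\widetilde{T}$ is exactly $\phi^{-1}([b_u])$, so the sum defining $\widetilde{\xi}_i^u$ partitions cleanly by fibers of $\phi$; (ii) since $\widetilde{T}[h] = T[\phi(h)]$ for every $h \in [\widetilde{n}]$, the indicator $I_i^h(\widetilde{T},w)$ equals $I_i^{\phi(h)}(T,w)$, so it is constant on each fiber $\phi^{-1}(j)$; and (iii) by Definition~\ref{def:split_p} (or directly by construction, since $\widetilde{p}_h = \beta$ for all $h$ and $|\phi^{-1}(j)| = \alpha_j = p_j/\beta$), we have $\sum_{h \in \phi^{-1}(j)} \widetilde{p}_h = p_j$ for every $j \in [n]$.

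Putting these together, I would write
\[
\widetilde{\xi}_i^u \;=\; \sum_{h \in [\widetilde{b}_u]} I_i^h(\widetilde{T},w)\,\widetilde{p}_h \;=\; \sum_{j \in [b_u]} \sum_{h \in \phi^{-1}(j)} I_i^{\phi(h)}(T,w)\,\widetilde{p}_h \;=\; \sum_{j \in [b_u]} I_i^j(T,w) \sum_{h \in \phi^{-1}(j)} \widetilde{p}_h \;=\; \sum_{j \in [b_u]} I_i^j(T,w)\, p_j \;=\; \xi_i^u,
\]
which is exactly the claimed identity. Each step is a direct substitution: the first uses the definition of $\widetilde{\xi}_i^u$; the second reindexes by the fibers of $\phi$ using observation (i); the third pulls the fiber-constant indicator outside the inner sum using (ii); the fourth uses (iii); and the last is the definition of $\xi_i^u$.

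I do not anticipate any substantive obstacle here — the argument is essentially a bookkeeping verification that $(\widetilde{T},\widetilde{p})$ being a splitting of $(T,p)$ preserves weighted symbol counts over any prefix of the form $[\widetilde{b}_u] = \phi^{-1}([b_u])$. The only thing to be mildly careful about is that $\widetilde{b}_u$ is defined precisely so that $[\widetilde{b}_u]$ is the \emph{full} preimage of $[b_u]$ (no partial fiber at the right endpoint), which is exactly what Definition~\ref{def:Btil} guarantees by taking the \emph{maximum} $h$ with $\phi(h)=b_u$; this is what makes the reindexing in the second equality valid.
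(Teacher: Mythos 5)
Your proof is correct and takes essentially the same approach as the paper: both chains of equalities decompose the prefix sum over $[\widetilde{b}_u]=\phi^{-1}([b_u])$ by the fibers of $\phi$, use $\widetilde{T}[h]=T[\phi(h)]$ to pull the indicator out of the inner sum, and invoke the splitting property $\sum_{h\in\phi^{-1}(j)}\widetilde{p}_h=p_j$; the only cosmetic difference is that the paper starts from $\xi_i^u$ and expands toward $\widetilde{\xi}_i^u$, whereas you run the same chain in the opposite direction.
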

\ifnum\icalpsub=0
\begin{proof}
    \begin{align}
        \xi_{i}^{u} &= \sum_{j \in [b_u]}I_{i}^{j}(T,w) p_j = \sum_{j \in [b_u]}I_{i}^{j}(T,w) \sum_{\widetilde{j} \in \phi^{-1}(j)}\widetilde{p}_{\widetilde{j}} \\
        &= \sum_{j \in [b_u]} \sum_{\widetilde{j} \in \phi^{-1}(j)} I_{i}^{j}(T,w) \widetilde{p}_{\widetilde{j}} = \sum_{j \in [b_u]} \sum_{\widetilde{j} \in \phi^{-1}(j)} I_{i}^{\widetilde{j}}(\widetilde{T},w) \widetilde{p}_{\widetilde{j}} = \sum_{\widetilde{j} \in \widetilde{b}_u} I_{i}^{\widetilde{j}}(\widetilde{T},w) \widetilde{p}_{\widetilde{j}} = {\widetilde{\xi}}_{i}^{u}\;,
    \end{align}
    and the claim is established.
\end{proof}
\fi

We can now state and prove the following lemma.
\begin{lemma} \label{lem:wrap-up-w-Wc}
    Let $w$ be a word of length $k$ in $\calW_c$, $T$ a text of length $n$, and $p$ a distribution over $[n]$ for which there exists $\beta \in (0,1)$ such that $p_j/\beta$ is an integer for every $j\in [n]$.
    There exists an algorithm that, given a parameter $\delta \in (0,1)$,
    takes a sample of size
$\Theta\left(\frac{k^2}{\delta^2}\cdot \log\left(\frac{k}{\delta}\right)\right)$ from $T$, distributed according to $p$, and outputs an estimate $\wDist$
such that $|\wDist - \Dist(T,w,p)| \leq \delta$ with probability at least $2/3$.
\end{lemma}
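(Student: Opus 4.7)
My plan is to use the interval construction from Section~\ref{subsec:intervals} and the estimators from Section~\ref{subsec:est-symb-dens} to instantiate Claim~\ref{clm:alg_no_n} on the splitting $(\widetilde{T},\widetilde{p})$, and then to translate the resulting approximation of $\Dist(\widetilde{T},w)$ into an approximation of $\Dist(T,w,p)$ via Equation~\eqref{eq:nus}.

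The algorithm would proceed as follows. Set $z = c_z k/\delta$ as in Equation~\eqref{eq:z_eq}. First, draw a sample $S_1$ of size $s_1 = \Theta(z \log z)$ from $p$ and use it to construct the intervals $\mathcal{B} = \{B_u\}_{u=1}^U$ via Definition~\ref{def:interval_construction}. Then draw a sample $S_2$ of size $s_2 = \Theta(z^2 \log(k U))$ from $p$ and compute, via Equation~\eqref{eq:xi_est_def}, the $k \times U$ matrix of estimators $\breve{\xi}$. Finally, output $\wDist = \rd(\breve{\xi})$, using Definition~\ref{def:nJ}. Since $U$ is polynomial in $z = \Theta(k/\delta)$, the total sample size matches the claimed $\Theta\!\bigl((k^2/\delta^2)\log(k/\delta)\bigr)$.

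For correctness I would condition on the events $E_1$ and $E_2$, which hold simultaneously with probability at least $2/3$ by Claims~\ref{clm:E_1} and~\ref{clm:E_2} and a union bound. I would then apply Claim~\ref{clm:alg_no_n} to $\widetilde{T}$ with $\widetilde{w} = w$ (valid since $w \in \calW_c$), the index set $J = \{\widetilde{b}_0,\ldots,\widetilde{b}_U\}$ from Definition~\ref{def:Btil}, and the matrix $\widehat{\nms}_i^r = \widetilde{n}\cdot \breve{\xi}_i^r$. Premise~\ref{itm:J_not_spaced} holds because each $B_u$ is either heavy, hence a singleton $\{j\}$ producing a monochromatic run $\widetilde{B}_u$ of $t_j$, or light, in which case Claim~\ref{clm:init_est} gives $\wt_p(B_u) < 6/z$ and hence $|\widetilde{B}_u| = \widetilde{n}\cdot \wt_p(B_u) < 6\widetilde{n}/z \le c_1 \delta \widetilde{n}/k$ (using that $(\widetilde{T},\widetilde{p})$ is a splitting with $\widetilde{p}$ uniform). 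Premise~\ref{itm:good_approx} follows by combining Claim~\ref{clm:comma_approx_simple}, Equation~\eqref{eq:xi_tilde_wrap}, and the first bound of $E_2$: together these give $|\widehat{\nms}_i^r - \nmr_i^{\widetilde{b}_u}(\widetilde{T},w)| = \widetilde{n}\cdot |\breve{\xi}_i^u - \xi_i^u| \le \widetilde{n}/z \le c_2 \delta \widetilde{n}/k$. Claim~\ref{clm:alg_no_n} then delivers $|\rd(\widehat{\nms}/\widetilde{n}) - \Dist(\widetilde{T},w)| \le (c_1+2c_2)\delta$, which by the choice of $\widehat{\nms}$ and Observation~\ref{obs:linearity_of_M} equals $|\wDist - \Dist(\widetilde{T},w)|$. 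Since $\widetilde{p}$ is uniform we have $\Dist(\widetilde{T},w) = \Dist(\widetilde{T},w,\widetilde{p}) = \Dist(T,w,p)$ by Equation~\eqref{eq:nus}, and choosing $c_z$ sufficiently large makes $(c_1+2c_2)\delta \le \delta$.

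The main conceptual subtlety I anticipate is that the algorithm has no access to $\widetilde{n}$ or to $\beta$, as these depend on the unknown quantization of $p$. This is resolved by the observation that computing $\rd(\breve{\xi})$ directly is equivalent, via Observation~\ref{obs:linearity_of_M}, to computing $\rd(\widehat{\nms}/\widetilde{n})$, so the unknown scale $\widetilde{n}$ cancels and never needs to be estimated by the algorithm.
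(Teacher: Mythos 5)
Your proposal is correct and follows essentially the same route as the paper: sample $S_1$ to build the intervals $\mathcal{B}$, sample $S_2$ to build $\breve{\xi}$, output $\rd(\breve{\xi})$, condition on $E_1 \cap E_2$, instantiate Claim~\ref{clm:alg_no_n} with $\widetilde{w}=w$, $J=\{\widetilde{b}_0,\dots,\widetilde{b}_U\}$, and $\widehat{\nms}=\widetilde{n}\breve{\xi}$, verify the two premises via Claims~\ref{clm:init_est} and~\ref{clm:comma_approx_simple} together with Equation~\eqref{eq:xi_tilde_wrap}, and then translate back via Equation~\eqref{eq:nus}. One minor wording nit: the constant $c_1+2c_2$ in the conclusion of Claim~\ref{clm:alg_no_n} does not depend on $c_z$; rather, you fix $c_1,c_2$ so that $c_1+2c_2\le 1$ and then choose $c_z$ large enough (the paper takes $c_1=1/2$, $c_2=1/4$, $c_z=100$) to make the two premises hold --- but this is exactly the calculation you carry out, so the argument is sound.
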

As in the uniform case, the running time of the algorithm is linear in the size of the sample. 

\smallskip
\begin{proof}
    The algorithm first takes a sample $S_1$ of size $s_1 = 120 z \log(240z)$ and constructs a set of intervals $\mathcal{B}$ as defined in Definition~\ref{def:interval_construction}.
    Next the algorithm takes another sample, $S_2$, of size $s_2 = z^2 \log(40 k U)$ according to which it defines an estimation matrix $\widehat{\xi}$ of size $k \times U$ as follows. For every $i \in [k]$ and $u \in [U]$, it sets $\widehat{\xi}[i][u] = \breve{\xi}_{i}^{u}$, where $\breve{\xi}_{i}^{u}$ is as defined in
    Equation~\eqref{eq:xi_est_def}. Lastly the algorithm outputs $\wDist = \rd(\widehat{\xi})$, where $\rd$ is as defined in Definition \ref{def:nJ}.

    We would like to apply Claim \ref{clm:alg_no_n} in order to show that $|\wDist - \Dist(\widetilde{T},w)| \leq \delta$ with probability of at least $\frac{2}{3}$.
    By the setting of $s_1$, applying Claim~\ref{clm:E_1} gives us that
    with probability at least $\frac{8}{10}$, the event $E_1$, as defined in Definition~\ref{def:E1}, holds.
    By the setting of $s_2$, applying Claim~\ref{clm:E_2} gives us that with probability at least $\frac{9}{10}$, the event $E_2$, as defined in Definition~\ref{def:E2},  holds. We henceforth condition on both events (where they  hold together with probability at least $7/10$).

    In order to apply Claim~\ref{clm:alg_no_n}, we set $\widetilde{w} = w$, $J = \left\{\widetilde{b}_0,\widetilde{b}_1,\dots,\widetilde{b}_U\right\}$ (recall Definition~\ref{def:Btil}) and $\widehat{\nms} = \widetilde{n} \widehat{\xi}$, for $\widehat{\xi}$ as defined above. Also, we set $c_1 = \frac{1}{2}$ and $c_2 = \frac{1}{4}$. We next show that both items in the premise of the claim are satisfied.

    To show that Item \ref{itm:J_not_spaced} is satisfied, we first note that since $\widetilde{p}$ is uniform, then for every $u \in U$, $\wt_{\widetilde{p}}(b_u) = \frac{\widetilde{b}_{u} - \widetilde{b}_{u-1}}{\widetilde{n}}$. We use the consequence of Claim \ref{clm:init_est} (recall that we condition on $E_1$) by which for every $u$ such that $\frac{\widetilde{b}_{u} - \widetilde{b}_{u-1}}{\widetilde{n}} \ge \frac{6}{z}$, $B_u$ is heavy (since for every $u \in U$, $\wt_{\widetilde{p}}(\widetilde{B}_u) = \wt_{p}(B_u)$). By Definition \ref{def:interval_construction} this implies that $B_u$ contains only one index, and so $\widetilde{T}[\widetilde{b}_{u-1}+1] = \dots = \widetilde{T}[\widetilde{b}_{u}]$. By the definition of $z$ (Equation~\eqref{eq:z_eq}) and the setting of $c_1$, the item is satisfied.

    To show that Item \ref{itm:good_approx} is satisfied, we use the definition of $E_2$ (Definition~\ref{def:E2},  Equation~\eqref{eq:E2-hat-xi-xi}) together with Claim \ref{clm:comma_approx_simple}, which give us $|\widehat{\xi}_{i}^{u} - \widetilde{\xi}_{i}^{u}| \leq \frac{1}{z}$ for every $i \in [k]$ and $u \in [U]$. By Equation \eqref{eq:xi_tilde_wrap}, the definition of $z$ and the setting of $c_2$, we get that the item is satisfied.

    After applying Claim~\ref{clm:alg_no_n} we get that $|\wDist - \Dist(\widetilde{T},w)| \leq ( c_1  + 2c_2)\delta$, which by the setting of $c_1$ and $c_2$ is at most $\delta$. Since $\widetilde{p}$ is the uniform distribution,  $\Dist(\widetilde{T},w) = \Dist(\widetilde{T},w,\widetilde{p})$ and since $\Dist(\widetilde{T},w,\widetilde{p}) = \Dist(T,w,p)$ (by Equation~\eqref{eq:nus}), the lemma follows.
\end{proof}

\ifnum\icalpsub=0
\medskip
In the next subsections we turn to the general case where we do not necessarily have that $w \in \calW_c$ or that there exists a value $\beta$ such that for every $j \in [n]$, $p_j/\beta$ is an integer. In this general case we need to take some extra steps until we can perform a splitting. Beginning with performing a reduction to a slightly different distribution, then performing a reduction to $w \in \calW_c$.
While this follows~\cite{RR-toct}, for the sake of our algorithm, along the way we need to show how to define the estimation matrix $\widehat{\xi}$ ($ = \hatnms/\tiln$) and the corresponding set of indices $J$ so that we can apply Claim~\ref{clm:alg_no_n}, similarly to what was shown in the proof of
Lemma~\ref{lem:wrap-up-w-Wc}.
\else
In 
full version of this paper~\cite{CSR23} we address the general case where we do not necessarily have that $w \in \calW_c$ or that there exists a value $\beta$ such that for every $j \in [n]$, $p_j/\beta$ is an integer.
\fi

\ifnum\icalpsub=0
\subsection{Quantized distribution}\label{subsec:quant-dist}
Let $\eta = c_{\eta}\frac{1}{n z}$, where $c_{\eta} = \frac{1}{16} $. We define $\Ddot{p}$ by ``rounding'' $p_j$ for every $j \in [n]$ to its nearest larger integer multiple of $\eta$. Namely, $\Ddot{p}_j = \left \lceil{\frac{p_j}{\eta}}\right \rceil \eta$, for every $j \in [n]$. By this definition, $L_1(\Ddot{p},p) = \sum_{j=1}^{n}|\Ddot{p}_j-p_j| \leq \eta n = \frac{c_{\eta}}{z}$. We  define $\Dot{p}$ to be a normalized version of $\ddot{p}$. That is, letting $\zeta = \frac{1}{\sum_{j=1}^{n}\Ddot{p}_j}$, we set $\Dot{p}_j = \zeta \ddot{p}_j$, for every $j \in [n]$, and note that $\zeta \leq 1$. Observe that $L_1(\Dot{p},\ddot{p}) = \sum_{j=1}^{n}|\zeta\ddot{p}_j-\ddot{p}_j| = \frac{|\zeta-1|}{\zeta} = \frac{1}{\zeta} - 1$.
Also observe that $\frac{1}{\zeta} = \sum_{j=1}^{n}(p_j+(\Ddot{p}_j-p_j)) = 1 + \sum_{j=1}^{n}(\ddot{p}_j - p_j)$. We have $L_1(\Dot{p},\ddot{p}) = \frac{1}{\zeta} - 1 = \sum_{j=1}^{n}(\ddot{p}_j - p_j) \leq \sum_{j=1}^{n}|\ddot{p}_j - p_j| = L_1(p,\ddot{p})$. Using the triangle inequality we get that
\begin{equation} \label{eq:L1-p-pdot}
   L_1(p,\Dot{p}) \leq L_1(p,\ddot{p}) + L_1(\ddot{p},\Dot{p}) \leq 2 L_1(p,\ddot{p}) \leq 2 \frac{c_{\eta}}{z}\;.
\end{equation}
By~\cite[Clm. 4.1]{RR-toct} we have that
\begin{equation} \label{eq:nu_prob}
        |\Dist(T,w,p) - \Dist(T,w,\Dot{p})| \leq L_1(p,\Dot{p})\;.
\end{equation}
Finally, note that for every $j \in [n]$, $\Dot{p}_j$ is an integer multiple of $\zeta \eta$, as $\Dot{p}_j = \zeta \eta \left \lceil{\frac{p_j}{\eta}}\right \rceil$.

    For every $i \in [k]$ and $u \in [U]$, set 
    \begin{equation}
        \dot{\xi}_{i}^{u} = \sum_{j \in [b_u] }I_{i}^{j}(T,w) \dot{p_j}\;.
    \end{equation}
\begin{claim} \label{clm:dot_approx}
For every $i \in [k]$ and $u \in [U]$,
    \begin{equation} \label{eq:xi_prob}
        \left|\dot{\xi}_{i}^{u} - \xi_{i}^{u}\right| \leq \sum_{j \in [b_u] }|\dot{p}_j-p_j|\;,
\end{equation}
and for every $u \in [U]$
    \begin{equation} \label{eq:omega_prob}
        |\wt_{\dot{p}}([b_u]) - \wt_{p}([b_u])| \leq \sum_{j \in [b_u]}|\dot{p}_j-p_j|\;.
    \end{equation}
\end{claim}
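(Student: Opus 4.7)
The claim is a direct application of the triangle inequality to the definitions, so I expect the proof to be almost mechanical; there is no real obstacle.

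For the first inequality, I would start from the definitions of $\dot{\xi}_{i}^{u}$ and $\xi_{i}^{u}$ (Equation~\eqref{eq:xi-i-u} and the definition preceding the claim), observing that both are sums over the same index set $[b_u]$ with the same indicator weights $I_i^j(T,w)$, differing only in whether the probabilities are taken with respect to $\dot{p}$ or $p$. Thus the difference telescopes into a single sum:
\begin{equation}\nonumber
\dot{\xi}_{i}^{u} - \xi_{i}^{u} \;=\; \sum_{j \in [b_u]} I_{i}^{j}(T,w)\bigl(\dot{p}_j - p_j\bigr)\;.
\end{equation}
Applying the triangle inequality and using the fact that $I_{i}^{j}(T,w) \in \{0,1\}$ (so that $|I_{i}^{j}(T,w)| \leq 1$) then yields
\begin{equation}\nonumber
\left|\dot{\xi}_{i}^{u} - \xi_{i}^{u}\right| \;\leq\; \sum_{j \in [b_u]} I_{i}^{j}(T,w)\,|\dot{p}_j - p_j| \;\leq\; \sum_{j \in [b_u]} |\dot{p}_j - p_j|\;,
\end{equation}
which is Equation~\eqref{eq:xi_prob}.

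For the second inequality, I would unfold the definition of $\wt_{\dot p}$ and $\wt_p$ from Definition~\ref{def:wt-p}, so that analogously
\begin{equation}\nonumber
\wt_{\dot p}([b_u]) - \wt_p([b_u]) \;=\; \sum_{j \in [b_u]}\bigl(\dot{p}_j - p_j\bigr)\;,
\end{equation}
and then a single application of the triangle inequality gives Equation~\eqref{eq:omega_prob}. In both parts the only ingredients are linearity of the sums and the triangle inequality, so the proof should be written in just a few lines.
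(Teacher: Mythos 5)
Your proof is correct and matches the paper's proof essentially line for line: both expand the definitions into a single sum over $[b_u]$, apply the triangle inequality, and drop the indicator $I_i^j(T,w)\le 1$ for the first inequality. Nothing further to add.
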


\begin{proof}
   Equation~\eqref{eq:xi_prob} follows by
   using the triangle inequality along with the fact the for every $j \in [n], i \in [k]$, $I_{i}^{j}(T,w) \leq 1$
    \begin{equation}
        \left|\dot{\xi}_{i}^{u} - \xi_{i}^{u}\right| = \left|\sum_{j \in [b_u] } I_{i}^{j}(T,w)(\dot{p}_j - p_j) \right| \leq \sum_{j \in [b_u] } \left|\dot{p}_j-p_j\right|\;.
    \end{equation}
Equation~\eqref{eq:omega_prob} follows
by the triangle inequality
    \begin{equation}
        |\wt_{\dot{p}}([b_u]) - \wt_{p}([b_u])| = \left|\sum_{j \in [b_u]}\dot{p}_j - p_j \right| \leq \sum_{j \in [b_u]}|\dot{p}_j-p_j| \;,
    \end{equation}
and the claim is established.
\end{proof}

\subsection{Dealing with $w \notin \calW_c$}
We would have liked to consider a $\dot{p}$-splitting of $T = t_1 t_2 \dots t_n$ and then use the relationship between the distance from $w$-freeness before and after the splitting. However, we only know this connection between the distances in the case of $w \in \calW_c$. Hence, we shall apply a reduction from a general $w$ to $w \in \calW_c$, as was done in~\cite{RR-toct}, in their proof of Lemma 4.8. Without loss of generality, assume $0$ is a symbol that does not appear in $w$ or $T$ (if there is no such symbol in $\Sigma$, then we extend $\Sigma$ to $\Sigma \cup \left\{0\right\}$). Let $w' = w_1 0 w_2 0\dots w_{k-1} 0 w_k 0$, $T' = t_1 0 t_2 0\dots t_n 0$ and $p' = (\dot{p}_1/2,\dot{p}_1/2,\dots ,\dot{p}_n/2,\dot{p}_n/2)$. Note that $w'$ is in $\calW_c$.
By~\cite[Clm. 4.6]{RR-toct},
\begin{equation} \label{eq:nu_half}
    \Dist(T',w',p') = \frac{1}{2} \Dist(T,w,\dot{p})\;.
\end{equation}

\smallskip\noindent
Here too we define a set of intervals of 
$[2n]$.
\begin{definition} \label{def:interval_construction_wc}
Let $b'_0 = 0$. Define $U'$, $\left\{b'_u\right\}_{u=1}^{U'}$ and the function $f: [U'] \rightarrow [U]$ using Algorithm~\ref{alg:b_prime}.
For every $u \in [U']$ let $B'_u = [b'_{u-1}+1,b'_u]$, and
define $\mathcal{B}' = \left\{B'_u\right\}_{u=1}^{U'}$\;.
\end{definition}

    \begin{algorithm}[H]
    \caption{} \label{alg:b_prime}
    \textbf{Input:} $U$, $\left\{b_v\right\}_{v=1}^{U}$, an indication for every $v \in U$ whether $B_v = \left\{b_v\right\}_{v=1}^{U}$ is heavy or light.\\
    \textbf{Output:} $U'$, $\left\{b_u'\right\}_{u=1}^{U'}$.
        \begin{algorithmic}[1]
            \STATE $u = 1$, $v = 1$
            \WHILE{$v \leq U$}
                \IF{$B_{v}$ is heavy}
                    \STATE $b'_{u} = 2b_{v} - 1$, $b'_{u+1} = 2b_{v}$
                    \STATE $f(u) = v$, $f(u+1) = v$
                    \STATE $v = v + 1$, $u = u + 2$
                    \ELSE
                    \STATE $b'_{u} = 2b_{v}$
                    \STATE $f(u) = v$
                    \STATE $v = v + 1$, $u = u + 1$
                \ENDIF
            \ENDWHILE
            \STATE $U' = \max \left\{f^{-1}(U)\right\}$
        \end{algorithmic}
    \end{algorithm}
    Intuitively, Algorithm \ref{alg:b_prime} makes sure that $0$'s that come after what is a heavy interval in $T$ become a single-index interval themselves in $T'$. On the other hand, the rest of the $0$'s are joined to the light interval that includes their left neighbour in $T$, to form a new interval, which will have the same weight as the light interval in $T$. It also sets the function $f$ that maps intervals in $T'$ to their corresponding intervals in $T$.

\medskip
\noindent
For every $i \in [2k]$ and $u \in [U']$, set
    \begin{equation}
        {\xi'_i}^{u} = \sum_{j \in [b'_u]} I_{i}^{j}(T',w') p'_j\;.
    \end{equation}

\begin{observation} \label{obs:tilde_approx}
    For every $u \in [U']$
    and for every $i \in [2k]$ such that $2 \not| \;i$ (meaning $w'_i \neq 0$),
    \begin{eqnarray}
        {\xi'_{i}}^{u} = \frac{1}{2}\dot{\xi}_{\frac{i+1}{2}}^{f(u)}\;,
    \end{eqnarray}
    whereas if $2 \mid i$ (meaning $w'_i = 0$),
    \begin{eqnarray}
        {\xi'_{i}}^{u} =
            \frac{1}{2} \begin{cases}
            \wt_{\dot{p}}\left([b_{f(u)}]\right) & \text{if } B_{f(u)} \text{is light} \\
            \wt_{\dot{p}}\left([b_{f(u)}]\right) & \text{if } B_{f(u)} \text{is heavy and } T'[b'_{u}] = 0 \\
            \wt_{\dot{p}}\left([b_{f(u)-1}]\right) & \text{if } B_{f(u)} \text{is heavy and } T'[b'_{u}] \neq 0\;.
            \end{cases}
    \end{eqnarray}
\end{observation}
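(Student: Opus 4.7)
The plan is to prove the identity by a direct case analysis on the parity of $i$ and on the type of the interval $B_{f(u)}$, unwinding the definition of ${\xi'_i}^u$ and reindexing between $T'$ and $T$. First I would record two simple facts about $T'$ and $p'$: for every $j' \in [2n]$, $T'[j'] = t_{(j'+1)/2}$ when $j'$ is odd, $T'[j'] = 0$ when $j'$ is even, and $p'_{j'} = \dot{p}_{\lceil j'/2 \rceil}/2$. Hence in the defining sum
\[
{\xi'_i}^u \;=\; \sum_{j' \in [b'_u]} I_i^{j'}(T',w')\, p'_{j'},
\]
only odd $j'$ contribute when $w'_i \neq 0$ (i.e.\ $i$ odd), and only even $j'$ contribute when $w'_i = 0$ (i.e.\ $i$ even).

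For odd $i$ (so $w'_i = w_{(i+1)/2}$), the substitution $j = (j'+1)/2$ turns the sum into $\tfrac{1}{2}\sum_{j \le (b'_u+1)/2} I_{(i+1)/2}^j(T,w)\,\dot p_j$. A quick check of the three possibilities produced by Algorithm~\ref{alg:b_prime} shows that in every case the upper limit rounds down to $b_{f(u)}$: if $B_{f(u)}$ is light then $b'_u = 2b_{f(u)}$, and if $B_{f(u)}$ is heavy then $b'_u \in \{2b_{f(u)} - 1,\, 2b_{f(u)}\}$. In all three cases the resulting sum equals $\tfrac{1}{2}\dot\xi_{(i+1)/2}^{f(u)}$ by the definition of $\dot\xi$, which is exactly the first part of the observation.

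For even $i$ (so $w'_i = 0$), the substitution $j = j'/2$ gives ${\xi'_i}^u = \tfrac{1}{2}\sum_{j \le \lfloor b'_u/2\rfloor}\dot p_j$, and the case analysis from Algorithm~\ref{alg:b_prime} determines the upper limit. When $B_{f(u)}$ is light, $b'_u = 2b_{f(u)}$ gives limit $b_{f(u)}$; when $B_{f(u)}$ is heavy and $T'[b'_u] = 0$, which by construction of the algorithm corresponds to $b'_u = 2b_{f(u)}$, the limit is again $b_{f(u)}$; when $B_{f(u)}$ is heavy and $T'[b'_u] \neq 0$, which corresponds to $b'_u = 2b_{f(u)} - 1$, the limit is $b_{f(u)} - 1$. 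The only nontrivial identification is the last one: a heavy interval in $\mathcal{B}$ consists of a single index, so $b_{f(u)} - b_{f(u)-1} = 1$, whence $b_{f(u)} - 1 = b_{f(u)-1}$, and the sum becomes $\tfrac{1}{2}\wt_{\dot p}([b_{f(u)-1}])$, matching the third subcase. The main (minor) obstacle is keeping the $\lceil \cdot \rceil / \lfloor \cdot \rfloor$ bookkeeping and the two sub-cases of the heavy branch of Algorithm~\ref{alg:b_prime} straight; once those are tracked carefully, every sub-case reduces to one of the three expressions listed in the observation.
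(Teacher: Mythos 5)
Your proof is correct, and since the paper labels this statement as an \emph{Observation} and omits its proof, your direct case analysis is exactly the kind of verification the authors expect the reader to supply. You correctly identify the two structural facts — that $T'[j']$ is a $0$ precisely when $j'$ is even, and that $p'_{j'} = \dot p_{\lceil j'/2\rceil}/2$ — then reindex the defining sum of ${\xi'_i}^u$, and track $b'_u$ through the three branches of Algorithm~\ref{alg:b_prime}; the key small step, that a heavy $B_{f(u)}$ is a singleton so $b_{f(u)}-1 = b_{f(u)-1}$, is stated and used correctly. No gaps.
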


\subsection{Uniform distribution via splitting}
Recall that $\eta$ and $\zeta$ were defined at the beginning of Section~\ref{subsec:quant-dist}.
Let $\widetilde{T} = {t'}_{1}^{\alpha_1}\dots {t'}_{2n}^{\alpha_{2n}}$ where $\alpha_j = \left\lceil{\frac{p_{j}}{\eta}} \right \rceil$ for every $j \in [2n]$. Define the distribution $\widetilde{p}$ by $\widetilde{p}_j = \frac{1}{2} \zeta \eta$ for every $j \in [|\widetilde{T}|]$, so that $\widetilde{p}$ is the uniform distribution (recall that $\zeta$ and $\eta$ where defined in Section \ref{subsec:quant-dist}). Since $p'_j = \frac{1}{2} \zeta \eta \cdot \alpha_j = \sum_{\tilde{j} \in \phi^{-1}(j)}\widetilde{p}_{\tilde{j}}$, for every $j \in [2n]$, we get that $(\widetilde{T},\widetilde{p})$ is a splitting of $(T',p')$ (recall Definition \ref{def:split_p}).

We make another use of \cite[Thm. 4.4]{RR-toct}, by which splitting preserves the distance from $w$-freeness, to establish that
\begin{equation} \label{eq:nu_split_wc}
    \Dist(\widetilde{T},w',\widetilde{p}) = \Dist(T',w',p')\;.
\end{equation}

Denote $\widetilde{n} = |\widetilde{T}| = \frac{2}{\zeta \eta}$.
We next define a set of intervals of $[\widetilde{n}]$.
\begin{definition} \label{def:Btil_wc}
Let $\widetilde{b}_0 = 0$, and for every $u \in [U']$, let \\ $\widetilde{b}_u = \max \left\{h \in [\widetilde{n}] : \phi(h) = b'_u\right\}$.
For every $u \in [U']$ let $\widetilde{B}_u = [\widetilde{b}_{u-1}+1,\widetilde{b}_u]$, and
define  $\mathcal{\widetilde{B}} = \left\{\widetilde{B}_u\right\}_{u=1}^{U'}$\;.
\end{definition}

\noindent
For for every $i \in [2k]$ and $u \in [U']$, set
    \begin{equation}
        \widetilde{\xi}_{i}^{u} = \sum_{j \in [\widetilde{b}_u]} I_{i}^{j}(\widetilde{T},w') \widetilde{p}_j\;,
    \end{equation}
and note that
    \begin{equation} \label{eq:xi_tilde_final}
        \widetilde{\xi}_{i}^{u} = \frac{1}{\widetilde{n}}\nmr_{i}^{\widetilde{b}_u}(\widetilde{T},w')\;.
    \end{equation}

The proof of the next claim is almost identical to the proof of Claim \ref{clm:comma_approx_simple}, and is hence omitted.
\begin{claim} \label{clm:comma_approx}
For every $i \in [2k]$ and $u \in [U']$,
    \begin{equation} \nonumber
        \widetilde{\xi}_{i}^{u} = {\xi'_{i}}^{u}\;.
    \end{equation}
\end{claim}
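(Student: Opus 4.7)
The plan is to mirror the proof of Claim~\ref{clm:comma_approx_simple} almost verbatim, with the data $(T,w,p)$, $\{b_u\}$, $\{\widetilde{b}_u\}$ replaced throughout by $(T',w',p')$, $\{b'_u\}$, and the $\widetilde{b}_u$ of Definition~\ref{def:Btil_wc}. The key inputs are three facts that are already in place: (i) $(\widetilde{T},\widetilde{p})$ is a splitting of $(T',p')$, as verified just before Definition~\ref{def:Btil_wc}, so $p'_j = \sum_{\tilde{j} \in \phi^{-1}(j)} \widetilde{p}_{\tilde{j}}$ for every $j \in [2n]$; (ii) by Definition~\ref{def:split_T}, $\widetilde{T}[\tilde{j}] = T'[\phi(\tilde{j})]$, so $I_i^{\tilde{j}}(\widetilde{T},w') = I_i^{\phi(\tilde{j})}(T',w')$; (iii) by Definition~\ref{def:Btil_wc}, $\widetilde{b}_u$ is the maximum preimage of $b'_u$ under the nondecreasing surjection $\phi$, so $[\widetilde{b}_u] = \bigsqcup_{j \in [b'_u]} \phi^{-1}(j)$.

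The computation itself is a short chain of equalities. I will start from the definition ${\xi'_i}^u = \sum_{j \in [b'_u]} I_i^j(T',w')\,p'_j$, substitute fact~(i) for $p'_j$, pull the indicator (which is constant on $\phi^{-1}(j)$) inside the inner sum, apply fact~(ii) to replace $I_i^j(T',w')$ by $I_i^{\tilde{j}}(\widetilde{T},w')$ for each $\tilde{j} \in \phi^{-1}(j)$, and then flatten the double sum into a single sum over $\tilde{j} \in [\widetilde{b}_u]$ using fact~(iii). The resulting expression is $\sum_{\tilde{j} \in [\widetilde{b}_u]} I_i^{\tilde{j}}(\widetilde{T},w')\,\widetilde{p}_{\tilde{j}}$, which is exactly $\widetilde{\xi}_i^u$ by definition.

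There is no genuine obstacle; the one step worth explicitly checking -- and the only place the proof differs from that of Claim~\ref{clm:comma_approx_simple} -- is fact~(iii). This holds because $\phi$ is nondecreasing and $\widetilde{b}_u$ is defined as the largest index mapping to $b'_u$, so no element of $\phi^{-1}(b'_u+1)$ slips into $[\widetilde{b}_u]$ and no element of $\phi^{-1}(j)$ for $j \leq b'_u$ is omitted. Note that the structural distinction between odd positions of $w'$ (holding the symbols $w_1,\dots,w_k$) and even positions (holding $0$) plays no role in the argument: the splitting identity $\widetilde{T}[\tilde{j}] = T'[\phi(\tilde{j})]$ preserves each indicator $I_i^{\cdot}(\cdot,w')$ uniformly in $i$. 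With fact~(iii) verified, the computation proceeds identically to that of Claim~\ref{clm:comma_approx_simple}, which is why the authors choose to omit the proof.
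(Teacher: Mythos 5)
Your proposal is correct and matches the paper's intended argument: the paper explicitly omits the proof, noting it is ``almost identical to the proof of Claim~\ref{clm:comma_approx_simple},'' and your chain of equalities is exactly that proof with $(T,w,p,b_u)$ replaced by $(T',w',p',b'_u)$, relying on the splitting property of $(\widetilde{T},\widetilde{p})$ over $(T',p')$ and the definition of $\widetilde{b}_u$ as the maximum preimage of $b'_u$ under $\phi$. Your observation that the odd/even structure of $w'$ is irrelevant here is a useful sanity check explaining why the authors felt the omission was safe.
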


For the last claim in this subsection, recall that the event $E_1$ was defined in
Definition~\ref{def:E1}.
\begin{claim} \label{clm:light_b_is_light_b}
    Conditioned on the event $E_1$, for every $u \in [U']$, if $\mathcal{B}_{f(u)}$ is light, then $\wt_{\widetilde{p}}(\widetilde{\mathcal{B}}_{u}) < \frac{6}{z} + \frac{c_{\eta}}{z}$.
\end{claim}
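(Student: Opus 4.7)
The plan is to chain together three equalities/inequalities: first identify $\wt_{\widetilde{p}}(\widetilde{B}_u)$ with $\wt_{\dot{p}}(B_{f(u)})$ via the splitting, then bound the quantization gap $\wt_{\dot{p}}(B_{f(u)}) - \wt_p(B_{f(u)})$ by $c_\eta/z$, and finally appeal to Claim~\ref{clm:init_est} under $E_1$ to bound $\wt_p(B_{f(u)})$ by $6/z$.

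First I would observe that, by Definition~\ref{def:Btil_wc}, $\widetilde{B}_u = \phi^{-1}(B'_u)$, and since $(\widetilde{T},\widetilde{p})$ is a splitting of $(T',p')$, this gives the identity $\wt_{\widetilde{p}}(\widetilde{B}_u) = \wt_{p'}(B'_u)$. Next I would unpack Algorithm~\ref{alg:b_prime}: whether the predecessor interval $B_{f(u)-1}$ is heavy (in which case its two induced intervals in $\mathcal{B}'$ end at $2b_{f(u)-1}$) or light (where the sole induced interval ends at $2b_{f(u)-1}$), we always have $b'_{u-1} = 2b_{f(u)-1}$ when $B_{f(u)}$ is light, and thus $B'_u = [2b_{f(u)-1}+1,\, 2b_{f(u)}]$. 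Using the fact that $p'_{2j'-1} = p'_{2j'} = \dot{p}_{j'}/2$ for each $j' \in [n]$, summing over the odd-even pairs in $B'_u$ yields $\wt_{p'}(B'_u) = \wt_{\dot{p}}(B_{f(u)})$.

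The remaining task is to compare $\wt_{\dot{p}}(B_{f(u)})$ to $\wt_p(B_{f(u)})$. Write $\dot{p}_j - p_j = (\zeta - 1)\ddot{p}_j + (\ddot{p}_j - p_j)$. Since $\zeta \le 1$, the first summand is non-positive, while the second is in $[0,\eta]$ because $\ddot{p}_j = \eta\lceil p_j/\eta\rceil$. Summing over $j \in B_{f(u)}$ yields
\[
\wt_{\dot{p}}(B_{f(u)}) - \wt_p(B_{f(u)}) \le \eta\,|B_{f(u)}| \le \eta n = \frac{c_\eta}{z}\,.
\]
Under event $E_1$, Claim~\ref{clm:init_est} gives $\wt_p(B_{f(u)}) < 6/z$, and chaining everything together produces $\wt_{\widetilde{p}}(\widetilde{B}_u) < 6/z + c_\eta/z$, as required.

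The only step with any subtlety is the decomposition in the quantization bound: one needs the non-positivity of the normalization correction $(\zeta - 1)\ddot{p}_j$ so that the (two-sided) bound $L_1(p,\dot{p}) \le 2c_\eta/z$ is replaced by the one-sided bound $c_\eta/z$. The combinatorial bookkeeping for $B'_u$ and $\widetilde{B}_u$ is routine once the algorithm is read carefully.
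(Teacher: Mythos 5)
Your proof is correct and follows the same structure as the paper's: identify $\wt_{\widetilde{p}}(\widetilde{B}_u)$ with $\wt_{\dot{p}}(B_{f(u)})$ via the splitting and Algorithm~\ref{alg:b_prime}, bound the quantization gap by $c_\eta/z$, and invoke Claim~\ref{clm:init_est} under $E_1$. Your version is actually slightly more careful: the paper's proof writes the intermediate bound as $|\wt_p(B_{f(u)}) - \wt_{p'}(B'_u)| \le L_1(p,\dot p)$ and then references $L_1(p,\ddot p)\le c_\eta/z$, which on its face would give the weaker $6/z + 2c_\eta/z$; your observation that the normalization correction $(\zeta-1)\ddot p_j$ is non-positive, so only the one-sided sum $\sum_j(\ddot p_j - p_j)\le c_\eta/z$ is needed, is precisely what makes the stated constant $c_\eta/z$ go through, and it is good that you spell it out.
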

\begin{proof}
Consider any $u \in [U']$ such that
    $\mathcal{B}_{f(u)}$ is light.
    Conditioned on the event $E_1$, which was defined in Definition \ref{def:E1}, the consequence of Claim \ref{clm:init_est} holds and so $\wt_p(\mathcal{B}_{f(u)}) < \frac{6}{z}$. Also, it is easy to verify that if $\mathcal{B}_{f(u)}$ is light then $\left|\wt_p(\mathcal{B}_{f(u)}) - \wt_{p'}(\mathcal{B}'_{u})\right| \leq L_1(p,\Dot{p})$. Since $\wt_{p'}(\mathcal{B}'_{u}) = \wt_{\widetilde{p}}(\widetilde{\mathcal{B}}_{u})$ and $L_1(p,\ddot{p}) \leq \frac{c_{\eta}}{z}$, the claim follows.
\end{proof}

\subsection{Estimators for the distribution-free case}
For every $i \in [2k]$ and $u \in [U']$, let $x(u,i)$ take the following values

\medskip\noindent
$x(u,i)=1$ if $2 \not| \;i$,\\
$x(u,i)=2$ if $2 \mid i$ and $B_{f(u)}$ is light,\\
$x(u,i)=2$ (also) if $2 \mid i$ and $B_{f(u)}$ is heavy and $T'[b'_{u}] = 0$,\\
$x(u,i)=3$ if $2 \mid i$ and $B_{f(u)}$ is heavy and $T'[b'_{u}] \neq 0$.\\

\noindent
    Define the following estimator.
    For every $i \in [2k]$ and $u \in [U']$
    \begin{equation} \label{eq:est_nm_prime}
        \widehat{\xi}_{i}^{u} =
            \frac{1}{2} \begin{cases}
            \breve{\xi}_{\frac{i+1}{2}}^{f(u)} & \text{if } x(u,i) = 1 \\
            \wt_{S_2}\left([b_{f(u)}]\right) & \text{if } x(u,i) = 2 \\
            \wt_{S_2}\left([b_{f(u-1)}]\right) & \text{if } x(u,i) = 3\;. \\
            \end{cases}
    \end{equation}

For the next claim, recall that the event $E_2$ was defined in Definition~\ref{def:E2}.
\begin{claim} \label{clm:approx_breve_xi}
    Conditioned on the event $E_2$, for every $i \in [2k]$ and  $u \in [U']$
    \begin{eqnarray} \label{eq:err_breve_xi}
        \left|\widehat{\xi}_{i}^{u} - \widetilde{\xi}_{i}^{u}\right| \leq \frac{c_{\eta}}{z} + \frac{1}{2z}\;.
    \end{eqnarray}
\end{claim}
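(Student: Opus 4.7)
The plan is to chain together three approximation steps via the triangle inequality, with a case analysis on the value of $x(u,i)$. The key observation is that $\widetilde{\xi}_i^u$ is related to quantities over $T$ (weighted by $p$) via two intermediate quantities: first $\xi'_i{}^u$ by Claim~\ref{clm:comma_approx}, and then the original $\xi$ or $\wt_p$ via Observation~\ref{obs:tilde_approx} and the passage from $p$ to $\dot{p}$ (Claim~\ref{clm:dot_approx}). The estimator $\widehat{\xi}_i^u$ is, up to the factor $\tfrac{1}{2}$, built out of precisely $\breve{\xi}$ or $\wt_{S_2}$, which by conditioning on $E_2$ are within $1/z$ of the corresponding $\xi$ or $\wt_p$.

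I would first use Claim~\ref{clm:comma_approx} to rewrite $\widetilde{\xi}_i^u = {\xi'_i}^u$, and then split into the three cases determined by $x(u,i)$, appealing to Observation~\ref{obs:tilde_approx} in each. In the case $x(u,i)=1$ (so $2\nmid i$), we have ${\xi'_i}^u = \tfrac{1}{2}\dot{\xi}_{(i+1)/2}^{f(u)}$ and $\widehat{\xi}_i^u = \tfrac{1}{2}\breve{\xi}_{(i+1)/2}^{f(u)}$, so it suffices to bound $|\breve{\xi}_{(i+1)/2}^{f(u)} - \dot{\xi}_{(i+1)/2}^{f(u)}|$. By the triangle inequality this is at most $|\breve{\xi}_{(i+1)/2}^{f(u)} - \xi_{(i+1)/2}^{f(u)}| + |\xi_{(i+1)/2}^{f(u)} - \dot{\xi}_{(i+1)/2}^{f(u)}|$. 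The first term is at most $1/z$ by Equation~\eqref{eq:E2-hat-xi-xi}, and the second is at most $\sum_{j\in[b_{f(u)}]}|\dot{p}_j-p_j| \leq L_1(p,\dot{p}) \leq 2c_\eta/z$ by Claim~\ref{clm:dot_approx} and Equation~\eqref{eq:L1-p-pdot}. Multiplying by $\tfrac{1}{2}$ yields $\tfrac{1}{2z} + c_\eta/z$, which matches the claimed bound.

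The cases $x(u,i)\in\{2,3\}$ proceed analogously, but now with the relation ${\xi'_i}^u = \tfrac{1}{2}\wt_{\dot{p}}([b_{f(u)}])$ (or $\tfrac{1}{2}\wt_{\dot{p}}([b_{f(u)-1}])$) paired with $\widehat{\xi}_i^u = \tfrac{1}{2}\wt_{S_2}([b_{f(u)}])$ (respectively with $f(u)-1$). Here the triangle inequality splits $|\wt_{S_2}([b_v]) - \wt_{\dot{p}}([b_v])|$ (for $v = f(u)$ or $v = f(u)-1$) as $|\wt_{S_2}([b_v]) - \wt_p([b_v])| + |\wt_p([b_v]) - \wt_{\dot{p}}([b_v])|$: the first term is at most $1/z$ by Equation~\eqref{eq:wt-S2-wt-p}, and the second is at most $L_1(p,\dot{p}) \leq 2c_\eta/z$ by Equation~\eqref{eq:omega_prob}, again giving $\tfrac{1}{2z} + c_\eta/z$ after the $\tfrac{1}{2}$ factor.

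I do not foresee a real obstacle: the entire argument is a bookkeeping exercise combining the deterministic bound on $L_1(p,\dot{p})$ from Equation~\eqref{eq:L1-p-pdot}, the two sampling estimates packaged in $E_2$, and the piecewise identification of $\widetilde{\xi}$ and $\widehat{\xi}$ with the corresponding $p$- and $S_2$-based quantities. The only care needed is to make sure that in the $x(u,i)=3$ case the index $f(u)-1$ (rather than $f(u)$) is used consistently on both sides of the comparison, which is ensured by Observation~\ref{obs:tilde_approx} and the definition of $\widehat{\xi}_i^u$ in Equation~\eqref{eq:est_nm_prime}.
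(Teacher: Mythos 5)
Your proposal is correct and takes essentially the same route as the paper's own proof: you unwind $\widetilde{\xi}_i^u$ via Claim~\ref{clm:comma_approx} and Observation~\ref{obs:tilde_approx}, split by the cases of $x(u,i)$, and in each case chain the triangle inequality through the $p$-based quantity, bounding one term by the $E_2$ guarantee ($\leq 1/z$) and the other by Claim~\ref{clm:dot_approx} together with $L_1(p,\dot p) \leq 2c_\eta/z$ from Equation~\eqref{eq:L1-p-pdot}. The paper compresses exactly this argument into a single displayed inequality (with a notational slip in the first sum that you correctly read as being over $[b_{f(u)}]$, matching Claim~\ref{clm:dot_approx}).
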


\begin{proof}
    Using the triangle inequality, along with Claim~\ref{clm:dot_approx}, Observation~\ref{obs:tilde_approx} and Claim~\ref{clm:comma_approx}, we get that for every $i \in [2k]$ and $u \in [U']$
    \begin{eqnarray}
        \left|\widehat{\xi}_{i}^{u} - \widetilde{\xi}_{i}^{u}\right| &\leq& \frac{1}{2} \sum_{r \in B_{f(u)}}\left|\dot{p}_j - p_j\right| \\ \nonumber &+& \frac{1}{2}
        \begin{cases}
            \left|\breve{\xi}_{{\frac{i+1}{2}}}^{f(u)} - \xi_{\frac{i+1}{2}}^{f(u)}\right| & \text{if } x(u,i) = 1 \\
            \left|\wt_{S_2}\left([b_{f(u)}]\right) - \wt_p\left([b_{f(u)}]\right)\right| & \text{if } x(u,i) = 2 \\
            \left|\wt_{S_2}\left([b_{f(u-1)}]\right) - \wt_p\left([b_{f(u-1)}]\right)\right| & \text{if } x(u,i) = 3 \;.
        \end{cases}
    \end{eqnarray}
    Using Equation~\eqref{eq:L1-p-pdot} and since we conditioned on $E_2$ we get the desired inequality.
\end{proof}

\smallskip
We prove another claim to establish a connection between $\Dist(\widetilde{T},w',\widetilde{p})$ and $\Dist(T,w,p)$.
\begin{claim} \label{clm:final_nu_connection}
    \begin{equation}
        |2\Dist(\widetilde{T},w',\widetilde{p}) - \Dist(T,w,p)| \leq L_1(p,\Dot{p}) \;.
    \end{equation}
\end{claim}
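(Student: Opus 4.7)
The plan is to chain together three identities/bounds that have already been established in the excerpt and thereby derive the desired inequality by a single application of the triangle inequality. The three facts I intend to use are: first, Equation~\eqref{eq:nu_split_wc}, which says $\Dist(\widetilde{T},w',\widetilde{p}) = \Dist(T',w',p')$ because $(\widetilde{T},\widetilde{p})$ is a splitting of $(T',p')$ and splittings preserve distance from $w'$-freeness (invoking \cite[Thm.~4.4]{RR-toct}, applicable since $w' \in \calW_c$ by construction); second, Equation~\eqref{eq:nu_half}, which says $\Dist(T',w',p') = \frac{1}{2}\Dist(T,w,\dot{p})$, obtained from \cite[Clm.~4.6]{RR-toct} via the $0$-padding reduction used to enforce $w' \in \calW_c$; and third, Equation~\eqref{eq:nu_prob}, which says $|\Dist(T,w,p) - \Dist(T,w,\dot{p})| \leq L_1(p,\dot{p})$ via \cite[Clm.~4.1]{RR-toct}, justified because $\dot{p}$ is a distribution that is $L_1$-close to $p$ on the same text $T$.

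First I would combine the first two identities to get the exact equation
\[
2\Dist(\widetilde{T},w',\widetilde{p}) \;=\; 2\Dist(T',w',p') \;=\; \Dist(T,w,\dot{p}).
\]
Then I would subtract $\Dist(T,w,p)$ from both sides and take absolute values, yielding
\[
\bigl|2\Dist(\widetilde{T},w',\widetilde{p}) - \Dist(T,w,p)\bigr| \;=\; \bigl|\Dist(T,w,\dot{p}) - \Dist(T,w,p)\bigr|,
\]
and finally apply the third fact (Equation~\eqref{eq:nu_prob}) to bound the right-hand side by $L_1(p,\dot{p})$. This gives exactly the statement of the claim.

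There is essentially no obstacle here, since every piece has already been set up: the splitting construction of $(\widetilde{T},\widetilde{p})$ from $(T',p')$ was verified in the paragraph preceding Equation~\eqref{eq:nu_split_wc}, the padding reduction from $(T,w,\dot{p})$ to $(T',w',p')$ was verified just before Equation~\eqref{eq:nu_half}, and the robustness of the distance under small $L_1$ changes to the distribution is exactly the content of Equation~\eqref{eq:nu_prob}. The only thing I would want to double-check while writing is that the constants in Equation~\eqref{eq:nu_half} match: namely that $p'$ is defined as the distribution whose mass on the original symbols sums to $1/2$ and on the inserted $0$'s sums to $1/2$, so that $2\Dist(T',w',p') = \Dist(T,w,\dot{p})$ holds with the claimed factor of $2$ on the left side and no factor on the right. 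Once that is confirmed, the proof is a two-line computation.
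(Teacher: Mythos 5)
Your proof is correct and is exactly the paper's argument: the paper's own proof simply says the claim follows by combining Equations~\eqref{eq:nu_prob}, \eqref{eq:nu_half}, and~\eqref{eq:nu_split_wc}, and you have spelled out precisely that chain (two exact equalities giving $2\Dist(\widetilde{T},w',\widetilde{p})=\Dist(T,w,\dot p)$, then the $L_1$ bound).
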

\begin{proof}
    The claim follows by combining Equations~\eqref{eq:nu_prob}, \eqref{eq:nu_half} and~\eqref{eq:nu_split_wc}.
\end{proof}


\subsection{Wrapping things up in the general case}\label{subsubsec:wrap-general}
We can now restate and prove the main theorem of this section (as it appeared in the introduction).

\ifnum\icalpsub=1
\smallskip\noindent
 \textcolor{lipicsGray}{$\blacktriangleright$}
{\sffamily\bfseries{Theorem~\ref{thm:main}}}.
\else
\medskip\noindent
\textbf{Theorem~\ref{thm:main}}~
\fi
\emph{
\DistFreeThmStatement
}

\medskip
As in the special case of $w\in \calW_c$ and a quantized distribution $p$, the running time of the algorithm is linear in the size of the sample.
The proof of Theorem~\ref{thm:main} is similar to the proof of Lemma~\ref{lem:wrap-up-w-Wc}, but there are several important differences, and for the sake of completeness it is given in full detail.

\smallskip
\begin{proof}
    The algorithm first takes a sample $S_1$ of size $s_1 = 120 z \log(240z)$ and constructs a set of intervals $\mathcal{B}$ as defined in Definition~\ref{def:interval_construction}.
    Next the algorithm takes another sample, $S_2$, of size $s_2 = z^2 \log(40 k U)$ according to which it computes the vector $\wt_{S_2}([b_u])$ for each $u \in U$ according to Definition~\ref{def:wt-S} and defines a matrix $\breve{\xi}$ of size $k \times U$ as follows. For every $i \in [k]$ and $u \in [U]$, it sets $\breve{\xi}[i][u] = \breve{\xi}_{i}^{u}$, where $\breve{\xi}_{i}^{u}$ is as defined in Equation~\eqref{eq:xi_est_def}. Then, the algorithm defines $w' = w_1 0 w_2 0 \dots w_k 0$ and $\mathcal{B}'$, which is a set of $U'$ intervals as defined in Definition~\ref{def:interval_construction_wc}, using Algorithm \ref{alg:b_prime}, according to which it also obtains the function $f:[U']\rightarrow[U]$. Afterwards, the algorithm defines a matrix $\widehat{\xi}$ of size $2k \times U'$ as follows. For every $i \in [2k]$ and $u \in [U']$, it sets $\widehat{\xi}[i][u] = \widehat{\xi}_{i}^{u}$, where $\widehat{\xi}_{i}^{u}$ is as defined in Equation~\eqref{eq:est_nm_prime}. Lastly the algorithm outputs $\wDist = 2 \rd(\widehat{\xi})$, where $\rd$ is as defined in Definition \ref{def:nJ}.

    We would like to apply Claim \ref{clm:alg_no_n} in order to show that $|\wDist - \Dist(T,w,p)| \leq \delta$ with probability of at least $\frac{2}{3}$.
    By the setting of $s_1$, applying Claim~\ref{clm:E_1} gives us that
    with probability at least $\frac{8}{10}$, the event $E_1$, as defined in Definition~\ref{def:E1}, holds.
    By the setting of $s_2$, applying Claim~\ref{clm:E_2} gives us that with probability at least $\frac{9}{10}$ the event $E_2$, as defined in Definition~\ref{def:E2},  holds. We henceforth condition on both events (where they  hold together with probability at least $7/10$).

    In order to apply Claim~\ref{clm:alg_no_n}, we set $\widetilde{w} = w'$, $J = \left\{\widetilde{b}_0,\widetilde{b}_1,\dots,\widetilde{b}_{U'}\right\}$ (recall Definition~\ref{def:Btil_wc}) and $\widehat{\nms} = \widetilde{n}\widehat{\xi}$, for $\widehat{\xi}$ as defined above. Also we set $c_1 = \frac{1}{8}$ and $c_2 = \frac{1}{8}$. We next show that all the items in the premise of the claim are satisfied.

    To show that Item \ref{itm:J_not_spaced} is satisfied, we first note that the following is true for every $u \in [U']$. Since $\widetilde{p}$ is the uniform distribution over $[\widetilde{n}]$,
    $\frac{\widetilde{b}_{u} - \widetilde{b}_{u-1}}{\widetilde{n}} = \wt_{\widetilde{p}}(\widetilde{\mathcal{B}}_{u})$. Therefore, if $\frac{\widetilde{b}_{u} - \widetilde{b}_{u-1}}{\widetilde{n}} \ge \frac{25}{4}\frac{1}{z}$, then $\wt_{\widetilde{p}}(\widetilde{\mathcal{B}}_{u}) \ge \frac{25}{4}\frac{1}{z}$ which according to Claim \ref{clm:light_b_is_light_b} (recall we condition on $E_1$) implies that $\mathcal{B}_{f(u)}$ is heavy. This in turn means that $\mathcal{B}'_{u}$ contains only one index, which implies that $\widetilde{T}[\widetilde{b}_{u-1}+1] = \dots = \widetilde{T}[\widetilde{b}_{u}]$. By the definition of $z$ (Equation~\eqref{eq:z_eq}) and the setting of $c_1$, we get that the item is satisfied.

    To show that Item \ref{itm:good_approx} is satisfied, we use Claim \ref{clm:approx_breve_xi}, which gives us that $\left|\widehat{\xi}_{i}^{u} - \widetilde{\xi}_{i}^{u}\right| \leq \frac{c_{\eta}}{z} + \frac{1}{2z}$ for every $i \in [2k]$ and $u \in [U']$. By the setting of $c_2$ along with Equation~\eqref{eq:xi_tilde_final} and the definitions of $z$ and $c_\eta$ (the latter is defined in the beginning of Section \ref{subsec:quant-dist}), we get that the item is satisfied.

    After applying Claim \ref{clm:alg_no_n} we get that $|\wDist - 2\Dist(\widetilde{T},w')| \leq 2(c_1 \delta + 2c_2 \delta)$, which by the setting of $c_1$ and $c_2$ is at most $\frac{3 \delta}{4}$. Since $\widetilde{p}$ is the uniform distribution,  $\Dist(\widetilde{T},w') = \Dist(\widetilde{T},w',\widetilde{p})$. Using Claim \ref{clm:final_nu_connection}
    and Equation~\eqref{eq:L1-p-pdot} we get $|2\Dist(\widetilde{T},w',\widetilde{p}) - \Dist(T,w,p)| \leq 2 \frac{c_{\eta}}{z}$, which by the definition of $z$ and $c_\eta$ is at most $\frac{\delta}{4}$, so the claim follows.
\end{proof} 

\fi



\section{A lower bound for distance approximation}\label{sec:lower-bound}
In this section we give a lower bound for the number of samples required to perform distance-approximation from $w$-freeness of a text $T$. The lower bound holds when the underlying distribution is the uniform distribution.
\begin{theorem}\label{thm:lb}
Let $\kd$ be the number of distinct symbols in $w$.
Any distance-approximation algorithm for $w$-freeness under the uniform distribution must take a sample of size
$\Omega(\frac{1}{\kd \delta^2})$, conditioned on $\delta \leq \frac{1}{300 \kd}$ and $n > \max \left\{\frac{8k}{\delta}, \frac{200}{\kd \delta^2}\right\}$.
\end{theorem}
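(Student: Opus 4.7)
\medskip\noindent
The plan is a standard two-point Le~Cam argument.  Since any $\delta$-approximator for $\Dist(\cdot,w)$ can, by thresholding its output at the midpoint of two candidate values, be used to distinguish between any two texts whose distances to $w$-freeness differ by strictly more than $2\delta$, it suffices to exhibit a pair of (random) texts $T_+,T_-$ of length $n$ for which $\Dist(T_+,w)-\Dist(T_-,w)>2\delta$ with high probability, while the total variation distance between the distributions of $m$ uniform samples from $T_+$ and from $T_-$ falls below $1/3$ whenever $m=o(1/(\kd\delta^2))$.

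\medskip\noindent
For the construction, list the $\kd$ distinct symbols of $w$ as $w^\star=\sigma_1\cdots\sigma_\kd$ (in order of first appearance), pick any $\sigma^*\in\{\sigma_1,\dots,\sigma_\kd\}$, and let $T=(w^\star)^{n/\kd}$ be the periodic base text, whose $\sigma^*$-positions form a set $R\subseteq[n]$ with $|R|=n/\kd$.  Define the random text $T_q$ by independently keeping each position in $R$ as $\sigma^*$ with probability $q$ and replacing it with a filler symbol $0\notin w$ with probability $1-q$; positions outside $R$ remain as in $T$.  Take $q_\pm=\tfrac12\pm\kd\delta$, which lie in $(0,1)$ thanks to the hypothesis $\delta\le 1/(300\kd)$.

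\medskip\noindent
The distance analysis then establishes $\rdr(T_q,w)=\nmr_{\sigma^*}(T_q)\pm O(k)$ (see the obstacle paragraph below).  A Chernoff bound on $|R|=n/\kd\ge 200/(\kd\delta^2)$ (from the hypothesis on $n$) gives $\nmr_{\sigma^*}(T_q)=qn/\kd\pm o(\delta n)$ with probability at least $9/10$, so $\Dist(T_q,w)=q/\kd\pm o(\delta)$, and hence $\Dist(T_{q_+},w)-\Dist(T_{q_-},w)>2\delta$ w.h.p.  For the sample side, any sample $(j,T[j])$ with $j\notin R$ has the same distribution under both hypotheses, so only the expected $m/\kd$ samples with $j\in R$ can distinguish them; because $n\gg m$ makes in-$R$ collisions negligible, those in-$R$ samples are iid $\mathrm{Bern}(q_\pm)$ over $\{\sigma^*,0\}$.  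Consequently the KL divergence between the two $m$-sample product distributions is at most $(m/\kd)\cdot\mathrm{KL}(\mathrm{Bern}(q_+)\,\|\,\mathrm{Bern}(q_-))=O(m\kd\delta^2)$, and by Pinsker the total variation is $O(\sqrt{m\kd\delta^2})$, which falls below $1/3$ precisely when $m=O(1/(\kd\delta^2))$.

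\medskip\noindent
The main obstacle is the combinatorial identity $\rdr(T_q,w)=\nmr_{\sigma^*}(T_q)\pm O(k)$ for a general $w$.  The upper bound $\rdr(T_q,w)\le\nmr_{\sigma^*}(T_q)$ is immediate, since for any $i$ with $w_i=\sigma^*$ role $i$ of a role-disjoint family uses $\rdr$ distinct $\sigma^*$-positions of $T_q$.  Achievability is transparent when $\sigma^*$ appears only once in $w$ (in particular whenever $\kd=k$): then all $\sigma^*$'s of $T_q$ lie in $R$, and the greedy procedure from the proof of Claim~\ref{clm:n[i][j]} realises $\rdr=\nmr_{\sigma^*}(T_q)$ up to a boundary error of $O(k)$.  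When $\sigma^*$ has multiplicity $c^*>1$ in $w$, one must show that the periodic structure of $T=(w^\star)^{n/\kd}$ permits each surviving $\sigma^*$-position to be ``reused'' across the $c^*$ different roles at which $\sigma^*$ sits in $w$ (the same $\sigma^*$-position can legitimately play role $i_a$ in one copy and role $i_b$ in another, as role-disjointness only forbids reuse within a single role).  This is done by induction on the column index $i$ in the recursion of Claim~\ref{clm:n[i][j]}, showing that removing a single $\sigma^*$-occurrence from $T$ decreases $\rdr(T,w)$ by exactly one and telescoping over the $|R|-\nmr_{\sigma^*}(T_q)$ corrupted positions.
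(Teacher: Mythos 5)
Your approach is essentially the same as the paper's: a periodic base text built from the $\kd$ distinct symbols of $w$, with the occurrences of one designated symbol independently corrupted to a fresh symbol $0$ with a probability that differs by $\Theta(\kd\delta)$ between the two hypotheses, and then a reduction of distance-approximation to distinguishing two slightly biased coins. The paper invokes Bar-Yossef's sampling lower bound where you run the KL/Pinsker calculation directly; these are interchangeable. Two points are worth fixing, though.

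First, a constants issue that actually breaks the argument as written: you set $q_\pm=\tfrac12\pm\kd\delta$, so $q_+-q_-=2\kd\delta$ and the expected distance gap is exactly $2\delta$. But the boundary error $O(k)/n$ is only $O(\delta)$, \emph{not} $o(\delta)$, under the hypothesis $n>8k/\delta$ (it is roughly $\delta/8$), and the concentration error is also $\Theta(\delta)$ when $n=\Theta(1/(\kd\delta^2))$. Consequently you only get a gap of $2\delta-\Theta(\delta)$, which can be strictly less than $2\delta$ and therefore cannot be resolved by a $\delta$-additive approximator. The paper's choice $\delta'=3\kd\delta$ (so $q_1-q_2=3\kd\delta$, hence expected gap $3\delta$) leaves room to absorb these $\Theta(\delta)$ losses; you should do the same.

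Second, the combinatorial step $\rdr(T_q,w)=\nmr_{\sigma^*}(T_q)\pm O(k)$ does not need the ``remove one $\sigma^*$, lose exactly one copy'' telescoping you sketch (which is also stronger than what is true: modifying one symbol changes $\rdr$ by \emph{at most} one, not necessarily exactly one). The cleaner and sufficient argument, implicitly used in the paper, is the following: the upper bound $\rdr(T_q,w)\le\nmr_{\sigma^*}(T_q)$ is immediate, and for the lower bound, the subsequence of $T_q$ obtained by restricting to the uncorrupted blocks is exactly $(w^\star)^{O}$ for $O=\nmr_{\sigma^*}(T_q)$, and a direct greedy construction in $(w^\star)^{O}$ produces at least $O-k+1$ role-disjoint copies (each greedy copy advances at most one block per role, so at most $k-1$ trailing blocks are unusable), and these copies embed back into $T_q$. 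This avoids any appeal to the recursion of Claim~\ref{clm:n[i][j]} or to uniqueness of the drop.

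Incidentally, you let $\sigma^*$ be \emph{any} of the $\kd$ distinct symbols, while the paper fixes $v_1$; the greedy argument above shows this generality is harmless, so that is a fine (very mild) generalization, not a gap.
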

Note that if $\delta \geq 1/\kd$, then the algorithm can simply output $0$. This is true since the number of role disjoint copies of $w$ in $T$ is at most the number of occurrences of the symbol in $w$ that is least frequent in $T$. This number is upper bounded by $\frac{n}{\kd}$, and so the distance from $w$-freeness is at most $\frac{1}{\kd}$. In this case no sampling is needed, so only the trivial lower bound holds. The proof will deal with the case of $\delta \in (0,\frac{1}{300 \kd}]$.

\smallskip
\begin{proof}
The proof is based on the difficulty of distinguishing  between an unbiased coin and a coin with a small bias. Precise details follow.

    Let $V = \left\{v_1,\dots,v_{\kd}\right\}$ be the set of distinct symbols in $w$, and let $0$ be a symbol that does not belong to $V$.
    We define two distributions over texts, $\calT_1$ and $\calT_2$ as follows.
         For each $\tau \in [\frac{n}{\kd}]$ and $\rho \in [0,1]$, let $\lambda_{\rho}^{\tau}$ be a random variable that equals $0$ with probability $\rho$ and equals $v_1$ with probability $1-\rho$.
         Let $\delta' =  3 \kd \delta$ and
    consider the following two distributions over texts
    \begin{align}
        \mathcal{T}_1 &= \left[\lambda_{\frac{1}{2}}^1,v_2,v_3,\dots,v_{\kd},\lambda_{\frac{1}{2}}^2,v_2,v_3,\dots,v_{\kd},\dots \dots,\lambda_{\frac{1}{2}}^{n/\kd},v_2,v_3,\dots,v_{\kd}\right]\;, \\
        \mathcal{T}_2 &= \left[\lambda_{\frac{1}{2}+\delta'}^1,v_2,v_3,\dots,v_{\kd},\lambda_{\frac{1}{2}+\delta'}^2,v_2,v_3,\dots,v_{\kd},\dots \dots,\lambda_{\frac{1}{2}+\delta'}^{n/\kd},v_2,v_3,\dots,v_{\kd}\right]\;.
    \end{align}
Namely, the supports of both distributions contain texts that consist of $n/\kd$ blocks of size $\kd$ each.
For $i\in \{2,\dots,\kd\}$, the $i$-th symbol in each block is $v_i$. The distributions differ only in the way the first symbol in each block is selected. In $\calT_1$ it is $0$ with probability $1/2$ and $v_1$ with probability $1/2$, while in $\calT_2$ it is $0$ with probability $1/2 + \delta' = 1/2 + 3 \delta \kd$, and $v_1$ with probability $1/2 - \delta'$.

For $b\in \{1,2\}$, consider selecting a text $T_b$ according to $\calT_b$ (denoted by $T_b \sim \calT_b$), and let $O_b$ be the number of occurrences of $v_1$ in the text (so that $O_b$ is a random variable). Observe that  $\E[O_1] = \frac{n}{2\kd}$ and
$\E[O_2] = \frac{n}{2\kd} - 3\delta n$. By applying the additive Chernoff bound (Theorem~\ref{thm:Chernoff}) and using the premise of the theorem regarding $n$,
\begin{equation}\label{eq:O1-EO1}
\Pr_{T_1\sim \calT_1}\left[ O_1 < \E[O_1] - \delta n/8 \right] \leq \exp(-2 (\kd\delta/8)^2 \cdot n/\kd) \leq \frac{1}{100}\;,
\end{equation}
and
\begin{equation}\label{eq:O1-EO2}
\Pr_{T_2\sim \calT_2} \left[ O_2 < \E[O_2] + \delta n/8 \right] \leq \exp(-2 (\kd\delta/8)^2 \cdot n/\kd) \leq \frac{1}{100}\;.
\end{equation}
%
For $b\in \{1,2\}$ let $\rdr_b = \rdr(T_b,w)$ (recall that $\rdr(T_b,w)$ denotes the number of disjoint copies of $w$ in $T_b$, and note that $\rdr_b$ is a random variable). Observe that $\rdr_1 \geq O_1 - k+1$, and $\rdr_2 \leq O_2$.

Hence, by Equation~\eqref{eq:O1-EO1}, if we select $T_1$ according to $\calT_1$ and use the premise that $n > \frac{8k}{\delta}$, then
$\rdr(T_1,w) \geq \frac{n}{2\kd} -  \frac{1}{8}\delta n - k + 1 \ge \frac{n}{2\kd} -  \frac{2}{8}\delta n$
with probability at least $99/100$, and by Equation~\eqref{eq:O1-EO2}, if we select $T_2$ according to $\calT_2$, then
$\rdr(T_2,w) \leq \frac{n}{2\kd} - 3\delta n + \frac{1}{8}\delta n  = \frac{n}{2\kd} - \frac{23}{8}\delta n$ with probability at least $99/100$.

Assume, contrary to the claim, that we have a  sample-based distance-approximation algorithm for subsequence-freeness that takes a sample of size $Q(\kd,\delta) = 1/(c\kd \delta^2)$, for some sufficiently large constant $c$, and outputs an estimate of the distance to $w$-freeness that has additive error at most $\delta$, with probability at least $2/3$.
Consider running the algorithm on either $T_1 \sim \calT_1$ or $T_2 \sim \calT_2$.
Let $L$ denote the number of times that the sample landed on an index of the form $j = \ell\cdot \kd +1$ for an integer $\ell$. By Markov's inequality, the probability that $L > 10\cdot Q(\kd,\delta)/\kd = 10/(c\kd^2 \delta^2)$ is at most $1/10$.

By the above,
if we run the algorithm on $T_1 \sim \calT_1$, then with probability at least $2/3 - 1/100 - 1/10$ the algorithm outputs an estimate $\wDist \geq \frac{n}{2 \kd} - \frac{10}{8}$  while
$L \leq 10/(c\kd^2 \delta^2)$. Similarly, if we run it on $T_2 \sim \calT_2$, then with probability at least $2/3 - 1/100 - 1/10$ the algorithm outputs an estimate $\wDist \leq \frac{n}{2 \kd} - \frac{15}{8}$  while $L \leq 10/(c\kd^2 \delta^2)$.
(In both cases the probability is taken over the selection of $T_b\sim \calT_b$, the sample that the algorithm gets, and possibly additional internal randomness of the algorithm.)
Based on the definitions of $\calT_1$ and $\calT_2$,
 this implies that it is possible to distinguish between an unbiased coin and a coin with bias $3 \kd \delta$ with probability at least $2/3 - 1/100 - 1/10 > \frac{8}{15}$, using a sample of size $\frac{1}{c' \kd^2 \delta^2}$ in contradiction to the result of Bar-Yosef~\cite[Thm. 8]{bar2003sampling} (applied with $m=2$, $\epsilon = 3 \kd \delta$. Since we have $\delta < \frac{1}{300 \kd}$, then $\epsilon < \frac{1}{96}$, as the cited theorem requires).
\end{proof}



\appendix

\section{Chernoff bounds} \label{app:chernoff}

\begin{theorem} \label{thm:Chernoff}
    Let $\chi_1,\dots ,\chi_m$ be $m$ independent random variables where $\chi_i \in [0,1]$ for every $1\leq i \leq m$. Let $p \eqdef \frac{1}{m}\sum_i \E[\chi_i]$. Then, for every $\gamma \in (0,1]$, the following bounds hold:
    \begin{itemize}
        \item (Additive Form)
        \begin{equation}
            \Pr\left[\frac{1}{m}\sum_{i=1}^{m} \chi_i > p + \gamma\right] < \exp\left(-2\gamma^2 m\right)
        \end{equation}
        \begin{equation}
            \Pr\left[\frac{1}{m}\sum_{i=1}^{m} \chi_i < p - \gamma\right] < \exp\left(-2\gamma^2 m\right)
        \end{equation}
        \item (Multiplicative Form)
        \begin{equation}
           \Pr\left[\frac{1}{m}\sum_{i=1}^{m} \chi_i > (1 + \gamma) p \right] < \exp\left(-\gamma^2 p m/3\right)
        \end{equation}
        \begin{equation}
            \Pr\left[\frac{1}{m}\sum_{i=1}^{m} \chi_i < (1 - \gamma) p \right] < \exp\left(-\gamma^2 p m/2\right)
        \end{equation}
    \end{itemize}
\end{theorem}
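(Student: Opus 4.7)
The plan is to use the standard Chernoff--Hoeffding exponential moment method: apply Markov's inequality to the moment generating function of $\sum_i \chi_i$ (or a shifted version) and optimize over a free parameter $t$. All four bounds fall out of two very similar templates, so I would handle the additive and multiplicative forms in parallel.

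For the additive form, I would first reduce to the zero-mean case by setting $Y_i = \chi_i - \E[\chi_i]$, so the $Y_i$ are independent, zero-mean, and each lies in an interval of length at most $1$. For any $t > 0$,
\[
\Pr\left[\tfrac{1}{m}\sum_i \chi_i > p + \gamma\right] = \Pr\left[e^{t\sum_i Y_i} > e^{t\gamma m}\right] \leq e^{-t\gamma m}\prod_i \E\!\left[e^{tY_i}\right].
\]
The essential ingredient is Hoeffding's lemma, which gives $\E[e^{tY_i}] \leq e^{t^2/8}$ for a zero-mean variable supported in an interval of length $1$; this in turn is a one-line convexity argument. Substituting and minimizing in $t$ (the optimum is $t = 4\gamma$) yields $\exp(-2\gamma^2 m)$. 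The lower-tail inequality is symmetric, obtained by running the argument on $-Y_i$.

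For the multiplicative form, I would replace Hoeffding's lemma by the sharper one-sided estimate $\E[e^{t\chi_i}] \leq \exp\bigl((e^t-1)\E[\chi_i]\bigr)$ for $\chi_i \in [0,1]$, which follows from the pointwise bound $e^{tx} \leq 1 + (e^t-1)x$ on $[0,1]$ (convexity) together with $1+u \leq e^u$. Multiplying over $i$ gives $\E\!\left[e^{t\sum_i \chi_i}\right] \leq \exp\bigl((e^t-1)pm\bigr)$, so by Markov,
\[
\Pr\left[\sum_i \chi_i > (1+\gamma)pm\right] \leq \exp\bigl((e^t-1)pm - t(1+\gamma)pm\bigr).
\]
Setting $t = \ln(1+\gamma)$ reduces this to the classical form $\bigl(e^\gamma/(1+\gamma)^{1+\gamma}\bigr)^{pm}$, and the elementary inequality $(1+\gamma)\ln(1+\gamma) - \gamma \geq \gamma^2/3$ for $\gamma \in (0,1]$ delivers the advertised $\exp(-\gamma^2 pm/3)$. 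For the lower tail, the same argument with $t = \ln(1-\gamma) < 0$ produces $\bigl(e^{-\gamma}/(1-\gamma)^{1-\gamma}\bigr)^{pm}$, and the slightly tighter estimate $(1-\gamma)\ln(1-\gamma) + \gamma \geq \gamma^2/2$ on $(0,1]$ (both calculus facts are proved by differentiating twice and checking sign) yields $\exp(-\gamma^2 pm/2)$.

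There is no conceptual obstacle: the theorem is textbook, and the only non-mechanical steps are the two elementary calculus inequalities that convert the ``ugly'' closed-form exponents into the clean quadratic exponents stated in the theorem. The work is almost entirely in choosing the correct $t$ and verifying those two single-variable inequalities on the interval $(0,1]$.
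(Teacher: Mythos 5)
Your proof is correct, but there is nothing in the paper to compare it against: Theorem~\ref{thm:Chernoff} is stated in the appendix as a standard tool (the usual Hoeffding/Chernoff bounds) and the paper gives no proof of it, simply invoking it in Claims~\ref{clm:sample}, \ref{clm:E_1}, \ref{clm:E_2} and in the lower bound. Your exponential-moment argument is the textbook route and all the pieces check out: Hoeffding's lemma with interval length $1$ and the choice $t=4\gamma$ gives the additive form $\exp(-2\gamma^2 m)$; the bound $\E[e^{t\chi_i}]\leq\exp\left((e^t-1)\E[\chi_i]\right)$ with $t=\ln(1+\gamma)$, resp.\ $t=\ln(1-\gamma)$, gives the classical multiplicative exponents, and the two calculus facts $(1+\gamma)\ln(1+\gamma)-\gamma\geq\gamma^2/3$ and $(1-\gamma)\ln(1-\gamma)+\gamma\geq\gamma^2/2$ do hold on $(0,1]$ (the first because its derivative $\ln(1+\gamma)-2\gamma/3$ is nonnegative there, the second because $-\ln(1-\gamma)\geq\gamma$). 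Two cosmetic caveats only: Hoeffding's lemma is a short but not literally one-line convexity argument (one still has to bound the log-moment-generating function, e.g.\ by a second-order Taylor estimate), and the theorem as stated uses strict inequalities, which the standard derivation delivers only in the weak form unless one argues the degenerate cases separately; neither point affects how the theorem is used in the paper.
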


\ifnum\icalpsub=0
\section{Missing proofs}\label{app:missing}

\begin{claim}
    The greedy algorithm described as a part of the proof of Claim~\ref{clm:n[i][j]} 
    finds a maximum-size set  of role-disjoint copies of $w$ in $T$.
\end{claim}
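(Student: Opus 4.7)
The plan is to prove this in two parts. First, a quick check that the greedy output is indeed a valid set of role-disjoint copies: by construction, for each fixed $i$ the indices $C_1[i],\ldots,C_R[i]$ are pairwise distinct (the procedure explicitly excludes already-used indices), and within a single copy $C_m$ the entries are monotonically increasing since the scan for $C_m[i]$ starts strictly after $C_m[i-1]$. Together with $T[C_m[i]] = w_i$, this gives role-disjointness.

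The main work is the optimality direction. I would argue it via the standard exchange argument, phrased as the following inductive invariant on $m \in \{0,1,\ldots,R\}$: there exists a maximum-size collection of role-disjoint copies $\mathcal{D}^{(m)} = \{D_1,\ldots,D_{R^*}\}$ in $T$ such that $D_\ell = C_\ell$ for every $\ell \leq m$. The base case $m=0$ is trivial, taking $\mathcal{D}^{(0)}$ to be any maximum-size role-disjoint collection. For the inductive step, suppose $\mathcal{D}^{(m-1)}$ agrees with the greedy copies on the first $m-1$ members; I want to build $\mathcal{D}^{(m)}$ of the same size. The key step, where I would invoke \cite[Claim 3.5]{RR-toct}, is to show that among $\{D_m,\ldots,D_{R^*}\}$ we can find some copy $D_\ell$ (with $\ell \geq m$) whose indices can be ``pushed leftwards'' into $C_m$ one coordinate at a time: for each $i=1,\ldots,k$, replace $D_\ell[i]$ by $C_m[i]$, possibly swapping $D_\ell[i]$ with the corresponding coordinate of another $D_{\ell'}$ currently using $C_m[i]$, while preserving monotonicity within each copy and role-disjointness across copies. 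Since the greedy chooses $C_m[i]$ to be the \emph{leftmost} admissible index, Claim~3.5 of~\cite{RR-toct} is exactly the structural statement that guarantees such a swap is feasible without violating the ordering constraints of any affected copy. After processing all $i\in[k]$, we obtain a new role-disjoint collection of the same size as $\mathcal{D}^{(m-1)}$ whose $m$-th member is $C_m$; take this to be $\mathcal{D}^{(m)}$.

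Applying the invariant at $m=R$ yields a maximum role-disjoint collection of size $R^*$ whose first $R$ members coincide with the greedy copies. It remains to rule out $R^* > R$, which I would do by contradiction: if there were an additional copy $D_{R+1}$ in $\mathcal{D}^{(R)}$ that is role-disjoint from $C_1,\ldots,C_R$, then when the greedy procedure attempted (but failed) to produce $C_{R+1}$, the indices of $D_{R+1}$ would have been available for selection coordinate-by-coordinate (again using the minimality property underlying Claim~3.5 of~\cite{RR-toct}), contradicting the termination of the greedy at round $R$.

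The main obstacle I anticipate is formalizing the single-coordinate exchange so that it preserves both the monotonicity within each copy and the role-disjointness across copies; this is precisely the content packaged into Claim~3.5 of~\cite{RR-toct}, so once that statement is cited the induction is essentially bookkeeping. The only subtlety is to perform the coordinate swaps in the correct order (sweeping $i$ from $1$ to $k$) so that the partial copy $C_m[1],\ldots,C_m[i-1]$ is already in place when we process the $i$-th coordinate.
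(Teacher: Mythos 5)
Your plan is an exchange argument (maintain an invariant that some maximum collection agrees with the greedy on a prefix, and modify it one coordinate at a time), whereas the paper's proof is a pointwise-dominance argument after a normalization step. The two are genuinely different, and the difference matters here because your version leans on a lemma for something it does not say.

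Concretely: Claim~3.5 of~\cite{RR-toct}, as it is used in the paper, states that any set of role-disjoint copies can be replaced by a sequence of \emph{ordered} role-disjoint copies (each copy succeeding the previous one coordinatewise) of the \emph{same size}. It is an existence/normalization statement, not a swap-feasibility lemma. You invoke it twice as ``exactly the structural statement that guarantees such a swap is feasible without violating the ordering constraints of any affected copy,'' and again when ruling out $R^*>R$ (``again using the minimality property underlying Claim~3.5''). Neither use is warranted by what the claim says, so the central step of your induction — that for $i=2,\dots,k$ you can replace $D_\ell[i]$ by $C_m[i]$ and repair the collection by swapping a coordinate with some $D_{\ell'}$ while keeping every affected copy strictly increasing and the whole collection role-disjoint — is simply asserted, and it is exactly the part you yourself flag as ``the main obstacle.'' That obstacle is real: starting from an arbitrary maximum role-disjoint collection $\mathcal{D}^{(0)}$ (as you do), the copy currently occupying $C_m[i]$ in role $i$ need not sit ``to the right'' of $D_\ell$ in any useful sense, so the swap can break monotonicity in a way that requires further repairs, and you give no argument that this cascade terminates or stays within role-disjointness.

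The paper avoids all of this. It first applies Claim~3.5 for its actual purpose: any competitor can be assumed to be an ordered sequence $\tilcalC$. The greedy output $\calC$ is itself ordered. Then, instead of exchanging, the paper proves by a double induction on $(m,i)$ that $C_m[i]\le \tilC_m[i]$ for every $m$ and $i$ (the greedy is pointwise minimal among ordered sequences). If $\tilcalC$ were strictly longer, then $\tilC_{|\calC|+1}$ would be a valid ordered continuation lying strictly to the right of $C_{|\calC|}$ in every coordinate, contradicting that the greedy terminated. No swaps, no repair of broken copies. If you want to keep the exchange formulation, you would first need to restrict attention to ordered sequences (via Claim~3.5 used correctly) and then still prove the swap lemma from scratch; at that point the pointwise-dominance induction is both shorter and cleaner.

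Your first paragraph (that the greedy output is a valid role-disjoint set) is fine and matches what the paper assumes; the gap is entirely in the optimality direction.
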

\begin{proof}
    We start by introducing the  notion of \emph{ordered} role-disjoint copies. According to~\cite[Definition 3.8]{RR-toct}, two role-disjoint copies $C=(i_1,\dots,i_k)$ and $C'=(i'_1,\dots,i'_k)$ of $w$ in $T$ are ordered and $C'$ succeeds $C$, if $i'_j>i_j$ for every $j \in [k]$. A sequence $(C_1,\dots,C_m)$ of role-disjoint copies of $w$ in $T$ is a sequence of ordered role-disjoint copies if for every $r \in [m-1]$ we have it that $C_{r+1}$ succeeds $C_r$.

    By~\cite[Claim 3.5]{RR-toct}, for every set 
    of role-disjoint copies of $w$ in $T$, there exists a sequence 
    of ordered role-disjoint copies of $w$ in $T$ 
    with the same size.
    Since the greedy algorithm described in the proof of Claim~\ref{clm:n[i][j]} finds a sequence of ordered role-disjoint copies of $w$ in $T$, it remains to show that there is no other longer (larger) sequence of ordered role-disjoint copies of $w$ of $T$.

    Denote by $\calC = (C_1,\dots,C_{|\calC|})$ the sequence of ordered role-disjoint copies of $w$ in $T$ that is found by the greedy algorithm. 
    Assume, contrary to the claim that there is a longer sequence, $\tilcalC = (\tilC_1,\dots,\tilC_{|\tilcalC|})$ of ordered role-disjoint copies of $w$ in $T$. In what follows we show, by induction on $m$ and $i$,   that $C_m[i] \leq \tilC_m[i]$ for every pair $(m,i)\in [|\calC|]\times [k]$, which will imply a contradiction to the counter assumption.
    
    For every $m \in [|\calC|]$ and for $i=1$, by the definition of the greedy algorithm, $C_{m}[1]$ is the index of the $m$th occurrence of $w_1$ in $T$. Since  $\tilcalC$ is ordered, so that
     $\tilC_1[1] < \tilC_2[1]  < \dots < \tilC_{m}[1]$, we have that $\tilC_{m}[1]$ is the index of occurrence number $m' \geq m$ of $w_1$ in $T$. Hence $C_m[1] \leq \tilC_m[1]$ for every $m \in [|\calC|]$.
     
     In order to prove the claim for $(m,i)$ where $i> 1$, we assume by induction that it holds for $(m,i-1)$ 
     and for $(m-1,i)$, where for the sake of the argument (so that $C_{m-1}$ and $\tilC_{m-1}$ are defined also for $m=1$) we define $C_0[i] = \tilC_0[i] = -k+i$.
      By the induction hypothesis, $C_m[i-1] \leq \tilC_m[i-1]$ and 
     $C_{m-1}[i] < \tilC_{m-1}[i]$. Because indices of a copy are always strictly increasing, $\tilC_m[i-1] < \tilC_m[i]$, and since $\tilcalC$ is ordered, $\tilC_{m-1}[i] < \tilC_m[i]$. Therefore, $C_m[i-1] <  \tilC_m[i]$ and $C_{m-1}[i] < \tilC_m[i]$. By the definition of the algorithm, $C_m[i]$ is the  index of the first occurrence of $w_i$ following $C_m[i-1]$ that is larger than $C_{m-1}[i]$. Since $T[\tilC_m[i]]=w_i$ we get that $C_m[i] \leq \tilC_m[i]$, as claimed.

    Finally, by the counter assumption, $|\tilcalC| > |\calC|$. By what we have shown above, this implies that
    $C_{m}[i]  < \tilC_{|C|+1}[i]$ for every $m \in [|C|]$, and $i \in [k]$. But this contradicts the fact that the algorithm did not find any role-disjoint copy after $C_{|\calC|}$.
\end{proof}

\begin{proofof}{Claim~\ref{clm:sample}}
Let $s = \frac{\log(6k\cdot \ell)}{2\gamma^2}$. We take $s$ samples from $[n]$ selected uniformly, independently at random (allowing repetitions).
Denote the $q$-th sampled index by $\rho_q$. For every $i \in [k]$, $r \in [\ell]$ and $q \in [s]$, define the random variables $\chi_q^{i,r}$ to equal $1$ if and only if $\rho_q \in [j_r]$ and $T[\rho_q]=w_i$, Otherwise $\chi_q^{i,r} = 0$.

For every $i \in [k]$ and $r \in [\ell]$, set
\begin{equation}
    \widehat{\nms}_{i}^{r} = \frac{n}{s}\sum_{q=1}^{s}\chi_q^{i,r} \;,
\end{equation}
and notice that
\begin{equation}
    \E\left[\chi_q^{i,j}\right] = \frac{\nmr_{i}^{j_r}(T,w)}{n}\;.
\end{equation}
By the additive Chernoff bound (see Theorem~\ref{thm:Chernoff}) and the setting of $s$, we get
\begin{eqnarray}
    \Pr\left[\left|\widehat{\nms}_{i}^{r} - \nmr_{i}^{j_r}(T,w) \right| > \gamma n\right] &=& \Pr\left[\left|\frac{n}{s}\sum_{q=1}^{s}\chi_q^{i,r} - \nmr_{i}^{j_r}(T,w) \right| > \gamma n\right] \nonumber \\
    &= &\Pr\left[\left|\frac{1}{s}\sum_{q=1}^{s}\chi_q^{i,r} - \frac{\nmr_{i}^{j_r}(T,w)}{n} \right| > \gamma \right] \nonumber \\
    &<& 2 \exp\left(-2 \gamma^2 s\right) = \frac{1}{3 k\cdot \ell}\;.
\end{eqnarray}
Applying the union bound over all pairs $(i,r) \in [k]\times [\ell]$
we get that with probability at least $\frac{2}{3}$, for every $i \in [k]$ and  $r\in [\ell]$
\begin{equation}\label{eq:good-est}
    \left|\widehat{\nms}_{i}^{r}- \nmr_{i}^{j_r}(T,w) \right| \leq \gamma n \;,
\end{equation}
as required.
\end{proofof}


\medskip

\begin{proofof}{Claim~\ref{clm:nJ-n-J-prime}}
For the sake of simplicity, we use $T$ and $w$ instead of $\tilT$ and $\tilw$, respectively.
Recall that $\rd(\nms) = \rd_{k}^{\ell}(\nms)$ and $\rdr(T,w) = \rdr_{k}^{j_{\ell}}(T,w)$.
We shall prove that for every $i \in [k]$ and for every $r\in [\ell]$,
$\left|\rd_{i}^{r}(\nms) - \rdr_{i}^{j_{r}}(T,w)\right| \leq (i-1)\cdot \max_{\tau \in [r] \setminus J' }\left\{j_{\tau} - j_{\tau-1}\right\}$.
We prove this by induction on $i$.\\

\noindent
For $i=1$ and every $r\in [\ell]$:
\begin{eqnarray}
\left|\rd_{1}^r(\nms) - \rdr_{1}^{j_r}(T,w)\right| &=& \left| \nmr_1^{j_r}(T,w) -  \nmr_1^{j_r}(T,w) \right| \nonumber \\ &=&
   0  \leq (1-1) \cdot \max_{\tau \in [1] \setminus J'}\left\{j_{\tau} - j_{\tau-1}\right\} \;,
\end{eqnarray}
where the first equality follows from the setting of $\nms$ and the definitions of $\rd_1^r(\nms)$ and $\rdr_1^{j_r}(T,w)$.

\noindent
For the induction step, we assume the claim holds for $i-1 \geq 1$ (and every $r\in [\ell]$)
and prove it for $i$. We have,
\begin{eqnarray}
    \lefteqn{\rd_{i}^{r}(\nms) - \rdr_{i}^{j_r}(T,w)}\nonumber\\
     &=& \nmr_i^{j_r}(T,w)- \max_{b \in [r]}\left\{ \nmr_i^{j_b}(T,w)  -  \rd_{i-1}^{b}(\nms) \right\} - \rdr_{i}^{j_r}(T,w) \label{eq:j-to-J-1-J-prime}\\
 &=& \max_{j \in [j_r]}\left\{ \nmr_i^{j}(T,w)  -  \rdr_{i-1}^{j-1}(T,w) \right\}
     - \max_{b\in [r]}\left\{ \nmr_i^{j_b}(T,w)  -  \rd_{i-1}^{b}(\nms) \right\} \label{eq:j-to-J-2-J-prime}\;,
\end{eqnarray}
where Equation~\eqref{eq:j-to-J-1-J-prime} follows from the setting of $\nms$ and the definition of $\rd_{i}^{r}(\nms)$,
and Equation~\eqref{eq:j-to-J-2-J-prime} is implied by Claim~\ref{clm:n[i][j]}. Denote by $j^*$ an index $j \in [j_r]$ that maximizes the first max term and
let $b^*$ be the smallest index such that $j_{b^*} \ge j^*$.
We have:
\begin{eqnarray}
    \lefteqn{\max_{ j\in[j_r]}\left\{ \nmr_i^{j}(T,w)  -  \rdr_{i-1}^{j-1}(T,w) \right\} - \max_{b \in [r]}\left\{ \nmr_i^{j_b}(T,w)  -  \rd_{i-1}^{b}(\nms) \right\}} \nonumber \\
    &\leq&  \nmr_i^{j^*}(T,w)  -  \rdr_{i-1}^{j^*-1}(T,w) - \nmr_i^{j_{b^*}}(T,w) + \rd_{i-1}^{b^*}(\nms) \nonumber\\
    & =&  \nmr_i^{j^*}(T,w) + \rdr_{i-1}^{j_{b^*}}(T,w) - \rdr_{i-1}^{j_{b^*}}(T,w)  -  \rdr_{i-1}^{j^*-1}(T,w) \nonumber \\ &&- \nmr_i^{j_{b^*}}(T.w) + \rd_{i-1}^{b^*}(\nms) \nonumber \\
    &\leq& \left(\rd_{i-1}^{b^*}(\nms) -  \rdr_{i-1}^{j_{b^*}}(T,w)\right) +  \left(\nmr_i^{j^*}(T,w) - \nmr_i^{j_{b^*}}(T,w) \right) \nonumber \\ &&+\left(\rdr_{i-1}^{j_{b^*}}(T,w) - \rdr_{i-1}^{j^*-1}(T,w)\right)
    \nonumber\\
    &\leq& (i-2) \max_{\tau \in [r] \setminus J' }\left\{j_{\tau} - j_{\tau-1}\right\} \nonumber \\ &&+
    \begin{cases}
        0 & \text{, if } T[j_a^*] = \dots = T[j_b^*]] \\
        \max_{\tau \in [r] \setminus J' }\left\{j_{\tau} - j_{\tau-1}\right\} & \text{otherwise}\\
    \end{cases} \nonumber\\
    &\leq& (i-1)\max_{\tau \in [r] \setminus J' }\left\{j_{\tau} - j_{\tau-1}\right\}\label{eq:max-j-max-b-J-prime}\;,
\end{eqnarray}
Where in the third inequality we used the induction assumption and the fact that if we don't have $T[j_{b^*}] = \dots = T[j^*]]$, then $\left(\nmr_i^{j^*}(T,w) - \nmr_i^{j_{b^*}}(T,w) \right) +\left(\rdr_{i-1}^{j_{b^*}}(T,w) - \rdr_{i-1}^{j^{*}-1}(T,w)\right) \leq (j_{b^*} - j^* + 1)\\ \leq \max_{\tau \in [r] \setminus J' }\left\{j_{\tau} - j_{\tau-1}\right\}$.\\

By combining Equations~\eqref{eq:j-to-J-2-J-prime} and~\eqref{eq:max-j-max-b-J-prime}, we get that
\begin{equation}
\rd_{i}^{r}(\nms) - \rdr_{i}^{j_r}(T,w) \leq (i-1)\max_{\tau \in [r] \setminus J' }\left\{j_{\tau} - j_{\tau-1}\right\}\;. \label{eq:nms-T-w-J-prime}
\end{equation}
Similarly to Equation~\eqref{eq:j-to-J-2-J-prime}:
\begin{equation}
    \rdr_{i}^{j_r}(T,w) - \rd_{i}^{r}(\nms)
     = \max_{b\in [r]}\left\{ \nmr_i^{j_b}(T,w)  -  \rd_{i-1}^{b}(\nms) \right\} - \max_{j\in [j_r]}\left\{ \nmr_i^{j}(T,w)  -  \rdr_{i-1}^{j-1}(T,w) \right\}\;.
     \label{eq:J-2-to-j-J-prime}
\end{equation}
Let $b^{**}$ be the index $b\in [r]$ that maximizes the first max term.
We have:
\begin{eqnarray}
    \lefteqn{\max_{b \in [r]}\left\{ \nmr_i^{j_b}(T,w)  -  \rd_{i-1}^{b}(\nms) \right\} - \max_{ j \in [j_r]}\left\{ \nmr_i^{j}(T,w)  -  \rdr_{i-1}^{j-1}(T,w) \right\}} \nonumber \\
    & \leq& \nmr_i^{j_{b^{**}}}(T,w)  -  \rd_{i-1}^{{b^{**}}}(\nms) - \nmr_i^{j_{b^{**}}}(T,w) + \rdr_{i-1}^{j_{b^{**}}-1}(T,w) \nonumber\\
    & \leq&  \rdr_{i-1}^{j_{b^{**}}}(T,w)  -  \rd_{i-1}^{b^{**}}(\nms) \nonumber\\
    & \leq&  \left|\rdr_{i-1}^{j_{b^{**}}}(T,w)  -  \rd_{i-1}^{b^{**}}(\nms)\right| \nonumber\\
    & \leq& (i-2)\max_{\tau \in [r] \setminus J' }\left\{j_{\tau} - j_{\tau-1}\right\}
    \leq (i-1)\max_{\tau \in [r] \setminus J' }\left\{j_{\tau} - j_{\tau-1}\right\} \;.
     \label{eq:max-b-max-j-J-prime}
\end{eqnarray}
Hence (combining Equations~\eqref{eq:J-2-to-j-J-prime} and~\eqref{eq:max-b-max-j-J-prime}),\footnote{It actually holds that $\rd_{i}^{r}(\nms) \geq \rdr_{i}^{j_r}(T,w)$, so that $\rdr_{i}^{j_r}(T,w) - \rd_{i}^{r}(\nms)\leq 0$, but for the sake of simplicity of the inductive argument, we prove the same upper bound on $\rdr_{i}^{j_r}(T,w) - \rd_{i}^{r}(\nms)$ as on $\rd_{i}^{r}(\nms) - \rdr_{i}^{j_r}(T,w)$.}
\begin{equation}
     \rdr_{i}^{j_r}(T,w) - \rd_{i}^{r}(\nms) \leq (i-1)\max_{\tau \in [r] \setminus J' }\left\{j_{\tau} - j_{\tau-1}\right\} \label{eq:T-w-nms-J-prime} \;
\end{equation}
Together, Equations~\eqref{eq:nms-T-w-J-prime} and~\eqref{eq:T-w-nms-J-prime} give us that
\begin{equation}
    \left|\rd_{i}^{r}(\nms) - \rdr_{i}^{j_r}(T,w)\right| \leq (i-1)\max_{\tau \in [r] \setminus J' }\left\{j_{\tau} - j_{\tau-1}\right\}\;,
\end{equation}
and the proof is completed.
\end{proofof}

\fi

\end{document}